\newtheorem {thm}{Theorem}[section]
\newtheorem {lem}[thm]{Lemma}
\newtheorem {prop}[thm]{Proposition}
\theoremstyle{definition}
\newtheorem {defn}[thm]{Definition}
\theoremstyle{remark}
\newtheorem*{rem*}{Remark}
\DeclareMathOperator{\AnnCross}{AC}
\DeclareMathOperator{\dist}{dist}
\DeclareMathOperator{\cl}{cl}
\DeclareMathOperator{\Dfc}{Dfc}
\DeclareMathOperator{\conv}{conv}
\DeclareMathOperator*{\argmin}{argmin}
\DeclareMathOperator{\vrt}{vert}
\title{On percolation of two-dimensional hard disks}
\author{Alexander Magazinov\thanks{Supported by ERC Starting Grant 639305 and ERC Starting Grant 678520.}}
\begin{document}
\maketitle

\begin{abstract}
We consider the hard-core model in $\mathbb{R}^2$, in which a random set of non-intersecting unit disks is sampled with an intensity parameter $\lambda$. Given $\varepsilon>0$ we consider the graph in which two disks are adjacent if they are at distance $\leq \varepsilon$ from each other. We prove that this graph, $G$, is highly connected when $\lambda$ is greater than a certain threshold depending on $\varepsilon$.
Namely, given a square annulus with inner radius $L_1$ and outer radius $L_2$, the probability that the annulus is crossed by $G$ is at least $1 - C \exp(-cL_1)$. As a corollary we prove that
a Gibbs state admits an infinite component of $G$ if the intensity $\lambda$ is large enough, depending on $\varepsilon$.
\end{abstract}

\section{Introduction}

How closely can one pack spheres in space? This basic question has fascinated many mathematicians over the years, motivated by its simplicity and various practical applications. Kepler conjectured in 1611 that the most efficient packing of cannonballs in 3 dimensions is given by the Face-Centered Cubic (FCC) lattice (though this arrangement is not unique). Lagrange proved in 1773 that the most efficient \emph{lattice} packing in 2 dimensions is given by the triangular lattice and this was extended by Gauss in 1831 to show that the FCC lattice achieves the highest \emph{lattice} packing density.

The proof that the triangular lattice packing is optimal among all packings is generally attributed to Thue~\cite{Thu}. A standard proof of the two-dimensional
analogue of Kepler conjecture is now considered folklore, and can be found in~\cite{FT}.

Fejes T\'oth suggested in 1953 a method for verifying the Kepler conjecture by checking a finite number of cases. This method was finally applied successfully by Hales in 1997 in his groundbreaking, computer-assisted, proof of the Kepler conjecture.

More recently, Cohn and Elkies~\cite{CE} explained how the existence of functions with certain Fourier-analytic properties can be used to give an upper bound on the density of packings. This bound was applied by Viazovska last year in a crowning achievement to prove that the $E_8$ lattice gives the densest packing in 8 dimensions; a work which was immediately extended to prove that the Leech lattice gives the densest packing in 24 dimensions, both long-standing conjectures (see~\cite{Vi1, Vi2}).

The densest packing is, in a sense, perfect, lacking in holes or defects. How do near-optimal packings look like? One natural way to make sense of this question is to consider random packings with high density. A standard model for such random packings is the hard-sphere model described below. The model is parameterized by an intensity parameter $\lambda$ with the average density of the random packing increasing with $\lambda$ in a way that the packings with the maximal density arise formally in the limit $\lambda\to\infty$. It is natural to expect that very dense packings would preserve some of the structure of the densest packings. For instance, that in 2 dimensions they would resemble the triangular lattice and that in 3 dimensions they would resemble the FCC lattice or one of the other packings of maximal density. While this may be true locally, it was proved by Richthammer~\cite{Ri} in a significant breakthrough that configurations sampled from the two-dimensional hard-disk model at high intensity are globally rather different from the triangular lattice in that they lose the translation rigidity of the latter over long distances.

It remains a major challenge to understand in what ways are typical configurations of the high-intensity hard-disk model similar, or dissimilar, to the densest packings. For instance, is rotational rigidity preserved in typical configurations in two dimensions? In statistical physics terminology one is interested in characterizing the translation-invariant Gibbs states of the hard-disk model and showing that multiple states exist, equivalently that symmetry breaking occurs, at high intensities. No proof of symmetry breaking is currently known in any dimension.

Lyons, Bowen, Radin and Winkler put forward the question of percolation in the hard-disk model. Namely, given a packing of unit disks, let us connect by an edge any two centers whenever the distance between
them is at most $2 + \varepsilon$. Does the resulting graph contain an infinite connected component? Bowen, Lyons, Radin and Winkler explicitly conjectured~\cite[Section 7]{BLRW} that the percolation indeed
occurs almost surely in the two-dimensional case for any model obtained as a weak-$*$ limit ($N \to \infty$) of $N\mathbb Z^2$-periodic packings of density $d_N$,
where $\lim\limits_{N \to \infty} d_N = d(\varepsilon) < \frac{1}{2\sqrt{3}}$.

Aristoff~\cite{Ar} addressed a similar question to that of~\cite{BLRW}: does percolation at distance $2 + \varepsilon$ occur almost surely for Gibbs states of the hard-disk model if the intensity parameter
$\lambda$ satisfies $\lambda > \lambda(\varepsilon)$? Aristoff managed to provide a positive answer for $\varepsilon > 1$. Passing from intensity to density is indeed relevant; the former has control over the
latter (see~\cite[Subsection 2.2 and Appendix A]{MMSWD}).

In this paper we focus on generalizing of Aristoff's result by eliminating the restriction $\varepsilon > 1$. This result strongly supports the conjecture by Bowen, Lyons, Radin and Winkler, although
the definition of a model just by density, as in~\cite{BLRW}, may, a priori, be less restrictive than the definition via the Grand canonical ensemble.

We also menton that there are numerous papers considering similar question for another models of statistical mechanics. See, for example~\cite{BL,Jan,Stu}.

\section{Main result}
The main result of this paper is concerned with the Poisson hard-disk model in the Euclidean space
$\mathbb{R}^d$. In order to give the definition of this model, we first recall the notion of
a Poisson point process.

Throughout the paper the following notation is used: $\# X$ denotes the cardinality of a finite set $X$;
$|A|$ denotes the $d$-dimensional Lebesgue measure of a set $A \subseteq \mathbb R^d$ (whenever we use this notation
the dimension is implicit from the context); $\| v \|$ denotes the Euclidean norm of a vector $v \in \mathbb R^d$,
correspondingly, $\| x - y \|$ denotes the Euclidean distance between two points $x, y \in \mathbb R^d$.

\begin{defn}(See~\cite[Section 2.1]{Kin}.)
Let $\lambda > 0$. Assume that $\mu$ is a random measure on the algebra $\mathcal B(\mathbb R^d)$ of
all Borel sets in $\mathbb R^d$. Let the following properties be satisfied:
\begin{enumerate}
  \item For set $A \in \mathcal B(\mathbb R^d)$ with $|A| < \infty$ the random variable
        $\mu(A)$ has the Poisson distribution with parameter $\lambda |A|$.
        Equivalently, $\Pr(\mu(A) = i) = \exp(-\lambda|A|) \frac{\lambda^i}{i!}$ for every
        non-negative integer $i$.
  \item For every two sets $A, B \in \mathcal B(\mathbb R^d)$ of finite Lebesgue measure,
        that are disjoint (i.e, $A \cap B = \varnothing$), the random variables $\mu(A)$ and $\mu(B)$ are independent.
\end{enumerate}
Then the measure $\mu$ is called a {\it Poisson point process} with intensity $\lambda$.
\end{defn}

A random measure $\mu$ as above can be uniquely identified with a random discrete point set
$\eta \subseteq \mathbb R^d$ so that
$$\mu(A) = \# (\eta \cap A)$$
for every set $A \in \mathcal B(\mathbb R^d)$.
Hence, from now on, the term {\it Poisson point process} will refer to a random point set.
If $D \subset \mathbb R^d$ is a bounded open set, then the {\it Poisson point process of intensity
$\lambda$ in $D$} is a random set $\eta \cap D$, where $\eta$ is a Poisson point process of of intensity
$\lambda$ in $\mathbb R^d$.

The point arrangements appearing in the hard-disk model are picked from a specific space of point arrangements on $\mathbb{R}^d$.
This space is introduced in the following Definition~\ref{def:config}.

\begin{defn}\label{def:config}
Let
$$ \Omega(\mathbb R^d) = \{ \xi \subset \mathbb R^d: \text{$\| x - y \| > 2$ for all $x, y \in \xi$, $x \neq y$} \}. $$
Then every element $\xi \in \Omega(\mathbb R^d)$ is called a {\it configuration}.
\end{defn}

By Definition~\ref{def:config}, each configuration $\xi$ can be identified with a packing of unit balls
$\{ B_1(x) : x \in \xi \}$. Here and further $B_{\rho}(x)$ denotes a ball of radius $\rho$ centered at $x$.
The condition $\| x - y \| > 2$ is often referred to as {\it interaction between points via hard-core exclusion}.

We are ready to present the definition of the Poisson hard-disk model.

\begin{defn}[Poisson hard-disk model]
Let $D\subseteq\mathbb{R}^d$ be a bounded open set and let $\zeta \in \Omega(\mathbb R^d)$.
Assume that $\lambda$ be a positive real number. Consider the Poisson point process
$\eta$ in $D$ with intensity $\lambda$. Let $\bar{\eta}$ be the conditional distribution of $\eta$ restricted to the event
$$\eta \cup (\zeta \setminus D) \in \Omega(\mathbb R^d).$$
Then the random point set $\eta^{[\lambda]}(D, \zeta)$ defined by
$$\eta^{[\lambda]}(D, \zeta) = \bar{\eta} \cup (\zeta \setminus D)$$
is called the {\it Poisson hard-disk model} on $D$ with intensity $\lambda$ and boundary
conditions $\zeta$.
\end{defn}

Indeed, $\bar{\eta}$ is well-defined, since
$$\Pr (\eta \cup (\xi \setminus D) \in \Omega(\mathbb R^d)) \geq \Pr(\eta = \varnothing) > 0.$$
Also, by definition, $\eta^{[\lambda]}(D, \xi) \in \Omega(\mathbb R^d)$ a.s.

Our main focus is on the existence of large connected components in
the hard-disk model, when we connect two points if their distance is
at most $2+\varepsilon$. For the associated ball packings it means that
the centers of two balls are connected whenever the distance between the balls
does not exceed $\varepsilon$.

We formalize this concept in the next definitions.

\begin{defn}
Let $\xi \in \Omega(\mathbb R^d)$ and $\varepsilon > 0$. Consider the graph $G_{\varepsilon}(\xi) = (V_\varepsilon(\xi), E_\varepsilon(\xi))$,
where
$$
  V_\varepsilon(\xi) = \xi \quad\text{and}\quad
  E_\varepsilon(\xi)=\{\{x,y\} : \text{$x,y \in \xi$, $x\neq y$ $\|x-y\| \leq  2+\varepsilon$} \}.
$$
We will call $G_{\varepsilon}(\xi)$ the {\it connectivity graph} of $\xi$ at distance $\varepsilon$.
\end{defn}

\begin{defn}
Let $\varepsilon > 0$ and $0 < L_1 < L_2$ be fixed. We say that a configuration $\xi \in \Omega(\mathbb R^2)$
{\it is annulus-crossing} if $\xi \in \AnnCross(\varepsilon, L_1, L_2)$, where
\begin{equation*}
  \begin{split}
    \AnnCross(\varepsilon, L_1, L_2) =  \{ & \zeta \in \Omega(\mathbb R^d) : \text{there exist $x, y \in V_{\varepsilon}(\zeta)$} \\
    & \text{such that $x\in(-L_1, L_1)^d$, $y\notin (-L_2, L_2)^d$ and} \\
    & \text{some connected component of $G_\varepsilon(\zeta)$ contains} \\
    & \text{both $x$ and $y$} \}.
  \end{split}
\end{equation*}
\end{defn}

For the sake of shortening the notation, we write $Q_L = (-L, L)^d$. The dimension is omitted, since it
will always be clear from the context.

We can now state the main result of the paper.

\begin{thm}[Main Theorem]
For any $\varepsilon > 0$ there exist positive numbers $\lambda_0$,
$c$, $C$ and $L_0$, depending only on $\varepsilon$, such that
the following holds. If $\lambda > \lambda_0$, $0 < L_1 < L_2$ and $\eta = \eta^{[\lambda]}(Q_{L_2 + L_0}, \zeta)$
is a two-dimensional Poisson hard-disk model, then one necessarily has
$$\Pr\bigl( \eta  \in \AnnCross(\varepsilon, L_1, L_2) \bigr) \geq 1 - C \exp(-cL_1).$$
\end{thm}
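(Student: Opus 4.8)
The plan is to prove annulus-crossing via a renormalization/block argument, coarse-graining $\mathbb{R}^2$ into a grid of boxes of side comparable to $1$, and showing that at high intensity each box is "good" with probability close to $1$, independently enough that a standard Peierls-type contour bound controls the bad boxes. The role of $L_0$ is to provide a collar of width $O(1)$ around $Q_{L_2}$ so that boundary effects from $\zeta$ do not reach the annulus we care about.

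First I would define a local "good event" for a box $B$ of side $s = s(\varepsilon)$: the event that $\eta \cap B$ contains a sub-configuration of disks packed densely enough that consecutive disks within $B$, and disks in adjacent good boxes, are within distance $2+\varepsilon$. Concretely, one can tile $B$ by a near-optimal finite patch of the triangular packing, inflate each disk's center slightly, and declare $B$ good if every such small target region contains exactly one point of $\eta$. The key estimate is then: for $\lambda > \lambda_0(\varepsilon)$, $\Pr(B \text{ good} \mid \text{configuration outside } B) \geq 1 - e^{-c\lambda}$, uniformly in the boundary condition. This is where the hard-core conditioning must be handled carefully — one wants a lower bound on the conditional probability that is not spoiled by an adversarial external configuration. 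The standard route is a finite-energy / bounded-density argument: restricting $\eta$ to a slightly shrunken box $B'$ compactly contained in $B$ removes any interaction with the outside, and on $B'$ the conditioned process dominates (or is comparable to) a Poisson process of intensity $\lambda$ times a density factor bounded below by a constant depending only on the geometry, so the probability of hitting each of the finitely many target cells is bounded below. Then monotonicity in the number of targets and independence across the disjoint cells gives the claimed $1 - e^{-c\lambda}$.

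Next I would establish that the good boxes percolate strongly. The good events are not independent, but each depends only on $\eta$ restricted to a bounded box, so the family of good events is a $k$-dependent percolation process for a fixed $k$; by the Liggett–Schonmann–Stacey domination theorem, if $\Pr(B \text{ good})$ is close enough to $1$ (achieved by taking $\lambda_0$ large), this process stochastically dominates supercritical Bernoulli site percolation with a parameter as close to $1$ as we like. For such a process, the probability that there is a closed (bad) dual circuit separating the inner box $Q_{L_1}$ from the complement of $Q_{L_2}$ decays like $\exp(-c L_1)$ by the usual contour-counting Peierls argument: any such separating bad circuit has length at least $\sim L_1$, and the number of circuits of length $\ell$ through a fixed point is at most $3^\ell$, while each has probability at most $p_{\mathrm{bad}}^{\ell}$ with $p_{\mathrm{bad}}$ tiny. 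Hence with probability $\geq 1 - C e^{-c L_1}$ there is a connected cluster of good boxes forming an "annulus-spanning path" from a good box meeting $Q_{L_1}$ to a good box meeting the complement of $Q_{L_2}$.

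Finally I would translate a good-box path into a genuine $G_\varepsilon$-path. By construction, adjacent good boxes each contain a point in their shared region at mutual distance $\leq 2+\varepsilon$, so the chosen points along the good path form a connected subgraph of $G_\varepsilon(\eta)$; its two endpoints lie in $Q_{L_1}$ and outside $Q_{L_2}$ respectively, which is exactly membership in $\AnnCross(\varepsilon, L_1, L_2)$. One subtlety: the good path lives inside $Q_{L_2+L_0}$, where $\eta = \bar\eta \cup (\zeta \setminus Q_{L_2+L_0})$ is the conditioned process, so the local good-event lower bound must be proved for the conditioned measure with arbitrary external boundary data $\zeta$ — taking $L_0$ large enough that the collar shields the annulus, together with the finite-energy bound above, handles this. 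I expect the main obstacle to be precisely this uniform-in-boundary-condition lower bound on the good event: making the geometry of the target cells explicit so that a single point can be forced into each cell while respecting hard-core exclusion, and verifying the constant does not degenerate as the external configuration varies. The percolation bookkeeping afterwards is routine.
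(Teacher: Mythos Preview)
Your renormalization skeleton (coarse-grain, Peierls bound on bad boxes, translate a good-box path into a $G_\varepsilon$-path) matches the paper's final steps, but the heart of your argument has a genuine gap: the claimed lower bound $\Pr(B\text{ good}\mid\text{outside})\geq 1-e^{-c\lambda}$ for your notion of ``good'' is not justified and is almost certainly false. Your good event asks that the hard-disk configuration in $B$ place one point in each of a fixed finite collection of small target cells coming from a \emph{specific} triangular patch. But the hard-disk model at high intensity has no reason to align with a prescribed orientation or phase; by rotational symmetry alone (when the boundary data permit it) the probability of matching a fixed pattern cannot tend to~$1$. Your justification---that the conditioned process ``dominates (or is comparable to) a Poisson process of intensity $\lambda$ times a bounded density factor''---is incorrect for hard-core interactions: the partition-function ratio that enters the conditional density is not bounded below uniformly, and the presence of one point rigidly excludes others, so there is no product-type lower bound forcing all target cells to be hit simultaneously. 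Finite-energy arguments give only a \emph{positive} lower bound on the probability of any fixed feasible pattern, not one close to~$1$.

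This is precisely the difficulty the paper's machinery is built to circumvent. Rather than asking the configuration to match a template, the paper introduces a \emph{defect} functional (excess Voronoi-cell area over $2\sqrt{3}$) and proves a dichotomy: in a thin box, either the configuration admits an empty $\varepsilon$-ball (an event whose probability can be driven down by taking $\lambda$ large, since an empty ball lets one insert a point), or it can be ``repaired'' by a bounded sequence of local moves into a configuration that crosses the box. The repair algorithm and the associated entropy/Jacobian bookkeeping are what replace your missing estimate. Only after this Large Circuit Lemma is established does the paper run the Peierls argument you outline---and at that stage the ``good'' event is ``has a large circuit in $Q_L$'', whose high probability is a theorem rather than an assumption. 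So your outer shell is fine, but the core local estimate needs an entirely different idea.
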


\begin{rem*}
We emphasize that the Main Theorem is restricted to the two-dimensional case. One of the steps of our proof essentially relies on
the Jordan theorem implying the intersection of two one-dimensional curves (the so-called ``large circuits'').
\end{rem*}

\begin{rem*}
The statement of the Main theorem involves a ``buffering region'' $Q_{L_2 + L_0} \setminus Q_{L_2}$ which we do not require to be crossed.
Moreover, the width of this region, $L_0$, depends on $\varepsilon$. This generality allows us to avoid many technical difficulties, although
it is likely that, say, $L_0 \equiv 5$ is still sufficient.
\end{rem*}

The Main Theorem deals with configurations with fixed shape outside a finite region (and coincide with the respective boundary conditions).
We will also address a model, the so-called Gibbs distribution, where two samples, say $\xi$ and $\xi'$, typically have unbounded
symmetric difference $\xi \bigtriangleup \xi'$. For results concerning the Gibbs distributions, see Section~\ref{sec:gibbs}.

\section{The defect of a configuration}

Configurations of the hard-sphere model at high intensity $\lambda$ are locally tightly packed, approximating an optimal packing. It will be helpful in the sequel to quantify the local deviation
from an optimal packing. Thus we introduce the notion of a {\it defect}.

\subsection{Definition of the two-dimensional defect function}

We call a configuration $\xi \in \Omega(\mathbb R^d)$ {\it saturated in the
$\rho$-neighborhood} of a bounded open domain $D \subset \mathbb R^d$  if
$$\sup\limits_{y : \dist(y, D) \leq \rho} \dist(y, \xi) \leq 2.$$
If, moreover, the inequality
$$\sup\limits_{y \in \mathbb R^d} \dist(y, \xi) \leq 2$$
holds, then $\xi$ is called {\it saturated (in the entire $\mathbb R^d$)}.
In other words, $\xi$ is saturated if it is a maximal element
of $\Omega(\mathbb R^d)$ with respect to the inclusion.

Let $\xi \in \Omega(\mathbb R^d)$. For every point $x \in \xi$ define a
set $\mathcal V_{\xi}(x) \subseteq \mathbb R^d$ as
follows:
$$\mathcal V_{\xi}(x) = \{ y \in \mathbb R^d : \| y - x \| = \inf\limits_{x' \in \xi} \| y - x' \| \}.$$
$\mathcal V_{\xi}(x)$ is called the {\it Voronoi cell} of $x$ with respect to $\xi$.
The tessellation of $\mathbb R^d$ into Voronoi cells for a given point set $\xi$ is called
the {\it Voronoi tessellation} for $\xi$.

The Voronoi cell $\mathcal V_{\xi}(x)$ is a convex
$d$-dimensional polyhedron, possibly unbounded. Each of its facets
is contained in the perpendicular bisector to a segment $[x, x']$ for some $x' \in \xi$. The next proposition gives a useful way to bound the cells of the Voronoi tesselation.

\begin{prop}\label{prop:voronoi_cell_rad}
Let $\rho > 2$ and $\xi \in \Omega(\mathbb R^d)$. Suppose $\xi$ is saturated in the $\rho$-neighborhood of
a bounded open domain $D$. Then each point $x \in \xi \cap D$ satisfies $\mathcal V_{\xi}(x) \subset B_2(x)$.
\end{prop}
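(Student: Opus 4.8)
The plan is to argue by contradiction: suppose some point $y \in \mathcal V_{\xi}(x)$ with $\|y - x\| \geq 2$. I will show that a small ball placed near $y$, on the ray from $x$ through $y$, can be added to $\xi$ without violating the hard-core constraint, contradicting saturation of $\xi$ in the $\rho$-neighborhood of $D$.

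First I would observe that since $y \in \mathcal V_{\xi}(x)$, we have $\|y - x'\| \geq \|y - x\| \geq 2$ for every $x' \in \xi$. If in fact $\|y - x\| > 2$, consider the point $y$ itself (or a point very slightly perturbed toward $x$): the open ball $B_2(y)$ is disjoint from $\xi$, but this is not quite enough since saturation only forbids $\dist(y', \xi) > 2$ for $y'$ in the $\rho$-neighborhood of $D$, and we would be adding a \emph{strict} inequality. The cleaner route is: let $z$ be the point on the segment $[x, y]$ at distance exactly $2$ from $x$ (this exists since $\|x-y\| \geq 2$), and then walk slightly further along the ray from $x$ beyond $z$ if needed. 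Actually, the key computation is to show $\dist(z', \xi) > 2$ for some $z'$ near $y$ lying within distance $\rho$ of $D$, contradicting the saturation hypothesis; since $x \in D$, every point within distance $\rho - 0 $ of $x$... but $\|y - x\| \le 2 < \rho$ would need to be checked, so I should first rule out $\|y-x\|$ being large.

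Here is the structure I would actually follow. \emph{Step 1:} Fix $x \in \xi \cap D$ and $y \in \mathcal V_{\xi}(x)$; I must show $\|y - x\| < 2$. Suppose not, so $\|y-x\| \geq 2$. \emph{Step 2:} Then for all $x' \in \xi$, $\|y - x'\| \geq \|y - x\| \geq 2$, so $B_2(y) \cap \xi$ contains at most boundary points. Consider the midpoint-type point $w = x + \tfrac{\rho'}{\|y-x\|}(y - x)$ pushed out to distance slightly more than... rather, take $w$ on the ray from $x$ through $y$ at distance $\min(\|y-x\|, \text{something} > 2)$; in fact take $w = x + t(y-x)/\|y-x\|$ for $t$ slightly larger than $2$ but with $t < \rho$ — possible since $\rho > 2$. \emph{Step 3:} For any $x' \in \xi$, either $x' = x$, in which case $\|w - x'\| = t > 2$, or $x' \neq x$; then since $w$ lies on segment $[x,y]$ (as $t \leq \|x - y\|$), convexity of distance gives $\|w - x'\| \geq \min(\|x - x'\|, \|y - x'\|) \ge \min(2^+, 2) $... this needs the strict version. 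The precise fact: $x'$ lies outside the open ball $B_{\|y-x\|}(y)$ and, since $\|x - x'\| > 2$, outside $B_2(x)$; a point $w$ on $[x,y]$ close enough to $x$ (but at distance $> 2$ from $x$, using $t > 2$) stays outside $B_2(x')$ for every $x'$ — this is where I'd do the short convexity/triangle-inequality estimate carefully, using that the function $s \mapsto \|(x + s(y-x)/\|y-x\|) - x'\|^2$ is a convex quadratic minimized... and conclude $\dist(w, \xi) \geq t > 2$.

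\emph{Step 4:} Since $w$ lies on $[x,y]$ within distance $t < \rho$ of $x \in D$, we have $\dist(w, D) < \rho$, so $w$ lies in the $\rho$-neighborhood of $D$, yet $\dist(w,\xi) > 2$, contradicting saturation. Hence $\|y - x\| < 2$, i.e.\ $\mathcal V_{\xi}(x) \subseteq B_2(x)$ (closure, with strict inequality giving the stated open-ball containment after a limiting argument on the closed Voronoi cell, or simply noting every point of $\mathcal V_\xi(x)$ satisfies the strict bound). The main obstacle is handling the strict-versus-non-strict inequalities correctly — saturation is phrased with $\le 2$, the hard-core constraint with $> 2$, and Voronoi cells are closed — so the argument must insert a genuine gap by choosing $w$ at distance $t$ with $2 < t < \min(\rho, \|x-y\|)$ and then verifying that \emph{no} point of $\xi$ comes within distance $2$ of $w$; the convexity estimate in Step 3 is the one place where a careless bound would only give $\ge 2$ rather than $> 2$, so I would be careful to exploit that $\|x - x'\| > 2$ strictly for all $x' \in \xi \setminus \{x\}$.
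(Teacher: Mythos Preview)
Your final strategy (Steps 1--4) is the paper's proof, but Step~3 is both overcomplicated and incomplete. The single fact you need is that the Voronoi cell $\mathcal V_\xi(x)$ is convex and contains both $x$ and $y$, so every $w \in [x,y]$ lies in $\mathcal V_\xi(x)$ as well; the defining property of the cell then gives $\|w - x'\| \ge \|w - x\| = t$ for all $x' \in \xi$ at once, i.e.\ $\dist(w,\xi) = t > 2$. This is exactly the conclusion you write at the end of Step~3, but you never supply a correct justification for it.

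The route you attempt instead, ``convexity of distance gives $\|w - x'\| \ge \min(\|x - x'\|, \|y - x'\|)$,'' is false: the distance from $x'$ to a point moving along $[x,y]$ is a convex function of the parameter, which bounds it \emph{above} by the maximum of the endpoint values, not below by the minimum (take $x'$ near the midpoint of $[x,y]$ for a counterexample). You seem to sense this and retreat to a vaguer ``short convexity/triangle-inequality estimate'' that you never actually carry out.

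With the convexity-of-the-cell observation in hand, the paper's proof is two lines: take $z \in \mathcal V_\xi(x)\setminus B_2(x)$, assume without loss of generality that $\|z-x\|\le\rho$ (replace $z$ by a point of $[x,z]\subset\mathcal V_\xi(x)$ if necessary), and note that then $\dist(z,D)\le\rho$ while $\dist(z,\xi)=\|z-x\|>2$, contradicting saturation. Your worry about strict versus non-strict inequalities is legitimate but minor; the paper does not address it, and containment in the closed ball is all that is used downstream.
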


\begin{proof}
Assume the converse: there is
a point $z \in V_{\zeta}(x) \setminus B_2(x)$. Without loss of generality,
one can additionally assume that $\| z - x \| \leq \rho$. Then
$$\sup\limits_{y : \dist(y, D) \leq \rho} \dist(y, \xi) \geq \| z - x \| > 2.$$
This contradicts the assumption that $\xi$ is saturated in the $\rho$-neighborhood of $D$.
\end{proof}

We are ready to proceed with the two-dimensional case. The optimal packing of unit disks in the plane is unique with centers of the disks arranged
as a regular triangular lattice (the edge of a generating triangle equals 2). The Voronoi tesselation of this lattice is the tiling of the plane into equal regular hexagons of area $2\sqrt{3}$ (see~\cite{FT}). Moreover, in a Voronoi tesselation for any configuration, the volume of each cell must exceed $2\sqrt{3}$ (see Proposition~\ref{prop:key_defect_prop} below). This motivates the following definition.

\begin{defn}(Defect of a two-dimensional configuration)
Let $D \subset \mathbb R^2$ be a bounded open domain and let
$\xi, \xi' \in \Omega(\mathbb R^2)$ be two configurations. Suppose that
$\xi \subseteq \xi'$ and that $\xi'$ is locally saturated in the
$\rho$-neighborhood of $D$ for some $\rho > 2$. The {\it defect} $\Delta(\xi, \xi', D)$ of $\xi$
in the domain $D$ with respect to the locally saturated extension $\xi'$ is defined by
\begin{equation}\label{eq:defect_defn}
\Delta(\xi, \xi', D) = \sum\limits_{x \in \xi' \cap D} \left( |\mathcal V_{\xi'}(x)| - 2\sqrt{3} \cdot \mathbbm 1_{\xi}(x) \right).
\end{equation}
\end{defn}

\begin{rem*}
There is no equality case for the inequality
$|\mathcal V_{\xi'}(x)| \geq 2\sqrt{3}$, since the optimal packing of balls does not correspond to an element of $\Omega(\mathbb R^2)$: the strict inequality $\| x - y \| > 2$ is violated.
\end{rem*}

\subsection{Properties of the defect function}

In this section we give an abstract definition of a defect function by listing the properties that it should satisfy. This definition applies equally well in any dimension.
It will be checked in the next section that the two-dimensional defect function introduced above is consistent with the abstract definition.

\begin{rem*}
We choose to give an abstract definition, since there is a chance to generalize the Main Theorem to higher dimensions within the same framework. Indeed, the key intermediate lemma, the Thin Box Lemma,
relies exclusively on the properties included in Definition~\ref{def:defect_prop}. On the other hand, there is a construction in the 3-dimensional space satisfying these properties. The construction
is based on Hales' analysis of the optimal packing there (see~\cite[Theorems 1.5, 1.7, 1.9]{Ha}). However, the details are left beyond the scope of this paper as we focus on the two-dimensional case.
The author is not aware of any constructions of defect functions in dimensions $d \geq 4$.
\end{rem*}

\begin{defn}
The {\it optimal packing density} $\alpha(d)$ in $d$ dimensions is defined by
$$\alpha(d) = \limsup\limits_{L \to \infty} \sup\limits_{\xi \in \Omega(\mathbb R^d)} \frac{\# (\xi \cap Q_L)}{|Q_L|}.$$
\end{defn}

\begin{defn}
A {\it defect-measuring triple} is a tuple $(\xi, \xi', D)$ where $\xi' \in \Omega(\mathbb R^d)$, $\xi \subseteq \xi'$ and $D$ is a bounded open domain in $\mathbb R^d$.
\end{defn}

\begin{defn}\label{def:defect_prop}
Let $\varepsilon>0$. Let $F$ be a partial function taking defect-measuring triples as arguments and returning real numbers.
We say that $F$ {\it satisfies the defect function properties at level $\varepsilon$}
if there exist real numbers $c, C_{cnt} > 0$ and $\rho \geq 100$ such that the following holds.
\begin{enumerate}
    \item (Domain of definition.) If $(\xi, \xi', D)$ is a defect-measuring triple and $\xi'$ is saturated
          in the $\rho$-neighborhood of $D$, then $F(\xi, \xi', D)$ is necessarily defined.
    \item (Localization.) If $(\xi_1, \xi'_1, D)$ and $(\xi_2, \xi'_2, D)$ are two defect-measuring triples such that
          \begin{align*}
            \xi_1 \bigtriangleup \xi_2 = \varnothing \quad & \text{or} \quad \dist(\xi_1 \bigtriangleup \xi_2, D) > \rho \quad \text{and} \\
            \xi'_1 \bigtriangleup \xi'_2 = \varnothing \quad & \text{or} \quad \dist(\xi'_1 \bigtriangleup \xi'_2, D) > \rho,
          \end{align*}
	      then either $F$ is defined on both triples and
          $$F(\xi_1, \xi'_1, D) = F(\xi_2, \xi'_2, D),$$
          or $F$ is undefined on both triples.
    \item (Positivity.) If $F(\xi, \xi', D)$ is defined, then $F(\xi, \xi', D) \geq 0$.
	\item (Monotonicity.) If $F(\xi, \xi', D)$ is defined and $D' \subseteq D$ is an open domain, then
	      $$F(\xi, \xi', D') \leq F(\xi, \xi', D).$$
	\item (Additivity.) If $D_1, D_2 \subset \mathbb R^d$ are two bounded open domains, $D_1 \cap D_2 = \varnothing$ and
		    $F(\xi, \xi', D_1 \cup D_2)$ is defined, then both $F(\xi, \xi', D_1)$ and $F(\xi, \xi', D_2)$ are defined, and
		    $$F(\xi, \xi', D_1 \cup D_2) = F(\xi, \xi', D_1) + F(\xi, \xi', D_2).$$
	\item (Saturation.) If $(\xi, \xi', D)$ is a defect-measuring triple and the inequality $F(\xi, \xi', D) < c$ holds,
          then one necessarily has
	      $$\xi \cap D = \xi' \cap D.$$
	\item (Connectivity.) If $D$ is convex and a defect-measuring triple $(\xi, \xi', D)$ satisfies $F(\xi, \xi', D) < c$,
	      then there is a connected component $\mathfrak c \subseteq G_{\varepsilon}(\xi \cap D)$ such that
          $$\{y \in \xi : B_{\rho}(y) \subseteq D \} \subseteq \vrt(\mathfrak c).$$
	\item (Distance-decreasing step.) If a defect-measuring triple $(\xi, \xi', D)$, a point $x \in \xi$ and a point $y \in \mathbb R^d$
          satisfy the conditions
          \begin{align*}
            & F(\xi, \xi', D) < c, \\
            & B_{2 + \varepsilon / 2}(x) \subseteq D, \\
            & \| x - y \| > \rho,
          \end{align*}
          then there exists a point $x' \in \xi$ such that
		  $$\| x - x' \| < 2 + \frac{\varepsilon}{2} \quad \text{and} \quad \| x' - y \| < \| x - y \|.$$
	\item (Forbidden distances.) If the inequality $F(\xi, \xi', D) < c$ holds for a defect-measuring triple $(\xi, \xi', D)$,
	      then every two points $x, y \in \xi \cap D$ satisfy
		  $$\| x - y \| \notin [2 + 0.9\varepsilon, 2 + \varepsilon].$$
	\item (Point counting.) If $(\xi, \xi', Q_L)$ is a defect-measuring triple on which $F$ is defined then
	      $$F(\xi, \xi', Q_L) \leq |Q_L| - \frac{1}{\alpha(d)} \cdot \# (\xi \cap Q_L) + C_{cnt} L^{d - 1}.$$
\end{enumerate}
\end{defn}

We proceed by formulating the key result of this section.

\begin{lem}[Defect Lemma]
There exists $\varepsilon_0 > 0$ such that the two-dimensional defect function $\Delta(\xi, \xi', D)$
defined by~\eqref{eq:defect_defn} satisfies the defect function properties at every level $\varepsilon \in (0, \varepsilon_0)$.
\end{lem}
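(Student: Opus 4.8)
The task is to verify that the concrete two-dimensional function $\Delta(\xi,\xi',D)$ defined by~\eqref{eq:defect_defn} meets all ten properties of Definition~\ref{def:defect_prop} for some fixed choice of $c, C_{cnt}, \rho$ and for every sufficiently small $\varepsilon$. I would first fix the parameters: take $\rho = 100$ (or anything $\geq 100$ compatible with Proposition~\ref{prop:voronoi_cell_rad}, which needs $\rho>2$), and choose $c$ small at the very end once all the quantitative geometric estimates have been collected; the constant $C_{cnt}$ will be whatever boundary term comes out of the point-counting step. The properties split naturally into two groups: the "bookkeeping" properties (1)--(5), (10), which are essentially formal consequences of the definition of $\Delta$ as a sum over Voronoi cells, and the "rigidity" properties (6)--(9), which are the geometric heart and require the inequality $|\mathcal V_{\xi'}(x)|\geq 2\sqrt 3$ together with a stability/quantitative version of Thue's theorem.

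For the bookkeeping group: Property (1) follows from Proposition~\ref{prop:voronoi_cell_rad}, which guarantees $\mathcal V_{\xi'}(x)\subset B_2(x)$ for $x\in\xi'\cap D$ when $\xi'$ is saturated in a $\rho$-neighborhood of $D$, so every term in the sum is finite and $\Delta$ is well-defined. Property (2), localization, holds because $\mathcal V_{\xi'}(x)\subset B_2(x)$ means each summand depends only on points of $\xi'$ within distance $4$ of $x$, hence within distance $\rho$ of $D$; if the symmetric differences are pushed past distance $\rho$ the sums agree term by term (one also checks the index sets $\xi'\cap D$ agree). Property (3), positivity, is exactly the cited fact $|\mathcal V_{\xi'}(x)|>2\sqrt 3$ for each cell (Proposition~\ref{prop:key_defect_prop}), which must be recorded and applied to each term; the indicator $\mathbbm 1_\xi(x)\le 1$ does the rest. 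Properties (4) and (5), monotonicity and additivity, are immediate from the sum-over-cells form since the index set $\xi'\cap D$ is monotone and additive in $D$ and each term is nonnegative. Property (10), point counting, is the most substantive of this group: writing $\Delta(\xi,\xi',Q_L)=\sum_{x\in\xi'\cap Q_L}|\mathcal V_{\xi'}(x)| - 2\sqrt 3\,\#(\xi\cap Q_L)$, one bounds $\sum_{x\in\xi'\cap Q_L}|\mathcal V_{\xi'}(x)|\le |Q_L| + (\text{boundary term})$ — the boundary term accounts for cells straddling $\partial Q_L$, which by the $B_2$-containment contribute $O(L^{d-1})$ — and uses $2\sqrt 3 = 1/\alpha(2)$ (the optimal density), absorbing the discrepancy between $\#(\xi'\cap Q_L)$ and $\#(\xi\cap Q_L)$ into nonnegativity. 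This gives the stated inequality with an explicit $C_{cnt}$.

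The rigidity group is where the real work lies, and I expect Property (7) (connectivity) to be the main obstacle. The common mechanism: if $\Delta(\xi,\xi',D)<c$ with $c$ small, then the total excess area over all cells in $D$ is tiny, so \emph{every} cell $\mathcal V_{\xi'}(x)$ with $B_\rho(x)\subseteq D$ has area extremely close to $2\sqrt 3$ and, by a quantitative stability version of the isoperimetric/Thue bound, must be a near-regular hexagon whose neighbors sit at distance close to $2$; in particular $\xi$ and $\xi'$ agree on $D$ up to points whose cells carry essentially all the defect, giving Property (6) outright (if the defect is below $c$, there can be no "missing" point of $\xi$ inside $D$, as deleting a point from $\xi$ but not $\xi'$ forces a full $2\sqrt 3$ charge). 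For Property (9) (forbidden distances), the near-regularity forces each point's neighbors to lie at distances in a narrow band around $2$ and around $2\sqrt 3$ (the second-neighbor distance), so a distance in $[2+0.9\varepsilon,\,2+\varepsilon]$ — which for small $\varepsilon$ lies strictly between these bands — is impossible; this is a purely local computation once the stability estimate is in hand, and it is also what pins down $\varepsilon_0$. For Property (8) (distance-decreasing step), given $x\in\xi$ and a far-away target $y$, the near-triangular-lattice structure around $x$ guarantees one of the $\sim 6$ neighbors of $x$ lies at distance $<2+\varepsilon/2$ and is strictly closer to $y$ — a standard "move along the lattice toward $y$" argument, valid because near-regular hexagonal cells tile a neighborhood of $x$ without leaving an angular gap of $\geq\pi$. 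Property (7) then follows by iterating (8): starting from any $y$ with $B_\rho(y)\subseteq D$, repeatedly apply the distance-decreasing step to reach any other such point, staying inside $D$ because all the hops have length $<2+\varepsilon/2$ and $B_{2+\varepsilon/2}$ of each intermediate point is contained in $D$ (using convexity of $D$ and the $\rho$-margin); each hop is an edge of $G_\varepsilon(\xi\cap D)$, so all these points lie in one component. The delicate points are making the stability estimate quantitative enough that a single global bound $\Delta<c$ controls every individual cell, and handling the interaction between the $\rho$-margin and the convexity hypothesis so that the greedy path never exits $D$ — this is the step I would budget the most care for, and it is presumably where the paper's detailed geometric lemmas enter.
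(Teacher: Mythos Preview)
Your plan for properties (1)--(6), (8)--(10) is correct and essentially matches the paper, which likewise fixes $\rho=100$, $\varepsilon_0=0.1$, obtains $c$ from the stability statement of Proposition~\ref{prop:key_defect_prop}, and reads off $C_{cnt}$ from the boundary term in the point-counting step.

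The genuine gap is your derivation of Property~(7) by iterating Property~(8). The distance-decreasing step gives a sequence $y=x_0,x_1,x_2,\ldots$ with $\|x_k-y'\|$ strictly decreasing, but nothing forces the walk to stay near the segment $[y,y']$: the only containment you get for free is $x_k\in B_{\|y-y'\|}(y')$, and that ball need not lie in $D$ (take $D$ a long thin rectangle). Your sentence ``staying inside $D$ because all the hops have length $<2+\varepsilon/2$ and $B_{2+\varepsilon/2}$ of each intermediate point is contained in $D$ (using convexity of $D$ and the $\rho$-margin)'' is the gap itself: convexity plus the margin gives you only the $\rho$-tube around $[y,y']$ inside $D$, and the greedy walk has no reason to remain in that tube. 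There is also a last-step issue once $\|x_k-y'\|\le\rho$, since the hypothesis of Property~(8) fails and you still have to connect $x_k$ to $y'$.

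The paper proves (7) directly, independently of (8): perturb the segment $[y,y']$ to a curve $\gamma$ avoiding Voronoi vertices, and list the Voronoi centers $x_0,\ldots,x_k$ whose cells $\gamma$ meets in order. By Proposition~\ref{prop:voronoi_cell_rad} each $x_i$ is within distance $2$ of $\gamma$, hence in $D$ by convexity and the $\rho$-margin; consecutive cells share an edge, and the near-hexagon stability (Proposition~\ref{prop:key_defect_prop}, assertion~3) forces $\|x_i-x_{i+1}\|<2+\varepsilon$. This argument stays in $D$ automatically because it traces the segment rather than chasing the target, and it is the natural repair of your approach.
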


\subsection{Proof of the Defect Lemma}

The proof is based on the following Proposition~\ref{prop:key_defect_prop}.

\begin{prop}\label{prop:key_defect_prop}
The following assertions are true.
\begin{enumerate}
  \item Let $\xi \in \Omega(\mathbb R^2)$ and $x \in \xi$. Assume that the Voronoi cell $\mathcal V_{\xi}(x)$ is bounded. Then $|\mathcal V_{\xi}(x)| \geq 2 \sqrt{3}$.
  \item There exists $c > 0$ such that the following holds. If a configuration $\xi \in \Omega(\mathbb R^2)$ and a point $x \in \xi$ satisfy $|\mathcal V_{\xi}(x)| < 2 \sqrt{3} + c$,
        then $\mathcal V_{\xi}(x)$ is necessarily a hexagon.
  \item For every $\delta > 0$ there exists $c(\delta) > 0$ such that the following holds. If a configuration $\xi \in \Omega(\mathbb R^2)$ and a point $x \in \xi$ satisfy
        $|\mathcal V_{\xi}(x)| < 2 \sqrt{3} + c(\delta)$, then there is a regular hexagon $H$ of area $2 \sqrt{3}$ centered at $x$ such that $d_{Haus}(\mathcal V_{\xi}(x), H) < \delta$.
        (The notation $d_{Haus}(\cdot, \cdot)$ denotes the Hausdorff distance between planar convex bodies.)
\end{enumerate}
\end{prop}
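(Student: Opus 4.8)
Item (1) is the local (one-cell) form of the planar circle-packing theorem of Thue and Fejes T\'oth, and items (2)--(3) are stability refinements of the same argument; the real work is (1), and making (3) quantitative.

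For (1), write $V=\mathcal V_\xi(x)$. Since $\xi\in\Omega(\mathbb R^2)$, every other point of $\xi$ is at distance $>2$ from $x$, so $V\supseteq\overline{B_1(x)}$; moreover each edge of $V$ lies on the perpendicular bisector of a segment $[x,x']$ with $\|x-x'\|>2$, hence at distance $>1$ from $x$, and two points of $\xi$ producing two consecutive edges of $V$ are themselves at distance $>2$. If $V$ has at most $6$ edges, then $|V|\ge 2\sqrt3$ by the classical extremal property of circumscribed polygons: pushing each non-tangent edge of $V$ toward $x$ until it touches $B_1(x)$ strictly decreases the area and keeps at most $6$ edges, so one may assume $V$ is circumscribed about $B_1(x)$; among circumscribed $n$-gons with $n\le 6$ the regular hexagon uniquely minimises the area, giving $2\sqrt3$ (for $n\le5$ one even gets $n\tan(\pi/n)>2\sqrt3$). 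If $V$ has $\ge 7$ edges one invokes the packing constraint: cut $V$ into ``kites'' $K_i$, one per vertex of $V$, each bounded by the two edges incident to that vertex and the two perpendiculars dropped from $x$ onto their lines. The kite $K_i$ subtends an angle $\gamma_i$ at $x$, $\sum_i\gamma_i=2\pi$, and the two constraints (edge-lines at distance $>1$ from $x$; the two corresponding points of $\xi$ at distance $>2$ from each other) force $|K_i|\ge g(\gamma_i)$, where $g(\gamma)=\tan(\gamma/2)$ for $\gamma\in[\pi/3,\pi)$ and $g(\gamma)=\tfrac1{2\sin\gamma}$ for $\gamma\in(0,\pi/3)$. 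The function $g$ is convex on $(0,\pi)$, so by Jensen $|V|\ge\sum_i g(\gamma_i)\ge n\,g(2\pi/n)$; since $n\mapsto n\,g(2\pi/n)$ attains its minimum $2\sqrt3$ at $n=6$ and is strictly larger for every $n\ne 6$, one gets $|V|\ge 2\sqrt3$ (and in fact $>2\sqrt3$ for $\xi\in\Omega$). The one technicality is an edge whose perpendicular foot from $x$ lies outside the edge, making the kites overlap; this is handled as in~\cite{FT}, by absorbing such an edge into a neighbouring kite (or by an approximation argument).

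Item (2) falls out of the proof of (1): a convex polygon arising as above with a number of edges $n\ne6$ has $|V|\ge 2\sqrt3+c_0$ for a fixed $c_0>0$ (e.g.\ $c_0\le 5\tan(\pi/5)-2\sqrt3$, the gap for $n\ge 7$ being far larger). Hence if $|\mathcal V_\xi(x)|<2\sqrt3+c$ with $c\le c_0$, then $\mathcal V_\xi(x)$ has exactly six edges. For item (3) I would argue by compactness. If no admissible $c(\delta)$ existed, there would be configurations $\xi_k\ni x$ whose cells $V_k=\mathcal V_{\xi_k}(x)$ satisfy $|V_k|\to 2\sqrt3$ while $d_{Haus}(V_k,H)\ge\delta$ for \emph{every} regular hexagon $H$ of area $2\sqrt3$ centred at $x$. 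Each $V_k$ is convex, contains $\overline{B_1(x)}$, and has area $\le 2\sqrt3+1$ with inradius $\ge 1$, hence uniformly bounded diameter; by Blaschke selection a subsequence converges in the Hausdorff metric to a convex body $V_\infty\supseteq\overline{B_1(x)}$ with $|V_\infty|=2\sqrt3$ (area is continuous along Hausdorff-convergent convex bodies). By (2) each $V_k$ has at most six edges for $k$ large, so $V_\infty$ is a convex polygon with at most six edges; by the equality case of the inequality used in (1) (all edges must be tangent to $B_1(x)$, and the regular hexagon is the unique circumscribed $\le 6$-gon of area $2\sqrt3$), $V_\infty$ is a regular hexagon of area $2\sqrt3$ centred at $x$. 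Then $d_{Haus}(V_k,V_\infty)\to0$ contradicts the choice of the $V_k$. Alternatively one can produce an explicit $c(\delta)$ by tracking the slack in each inequality of (1) — each $\gamma_i$ within $O(\sqrt c)$ of $\pi/3$, each edge within $O(\sqrt c)$ of distance $1$ from $x$, each perpendicular foot within $O(\sqrt c)$ of the midpoint of its edge — which pins $V$ to within $O(\sqrt c)$ of a regular circumscribed hexagon.

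\textbf{Main obstacle.} The crux is item (1), specifically the $n\ge 7$ case together with the bookkeeping around obtuse edges: this is exactly the content of the folklore two-dimensional Kepler argument, so one essentially imports~\cite{FT}. Once (1) is in hand, (2) and (3) are soft, the only extra ingredients being the uniform gap $c_0$ for $n\ne6$ and the equality case of the hexagon extremal problem.
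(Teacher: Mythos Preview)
Your proposal is correct. The paper itself does not prove this proposition at all: its entire proof reads ``See~\cite[Chapter~3]{FT} or~\cite{Rog}.'' So you have supplied considerably more than the paper does. What you have written is essentially a faithful reconstruction of the Fejes~T\'oth argument that the paper is citing: the kite decomposition with the convex lower envelope $g(\gamma)=\max\{\tan(\gamma/2),\,1/(2\sin\gamma)\}$ and Jensen's inequality is exactly the mechanism behind part~(1), and your observation that $n\,g(2\pi/n)$ has a strict minimum at $n=6$ (with the $n=5$ circumscribed pentagon giving the smallest positive gap) is the correct route to~(2). Your compactness proof of~(3) via Blaschke selection is standard and clean; the only point to make explicit is that the Hausdorff limit $V_\infty$, being a convex polygon with at most six edges containing $\overline{B_1(x)}$ and of area exactly $2\sqrt3$, must be the regular circumscribed hexagon by the uniqueness of the extremal in the Dowker/Fejes~T\'oth inequality --- which you do invoke. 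The one genuine technicality you flag (edges whose perpendicular foot from $x$ falls outside the edge, causing kite overlap) is real and does require the bookkeeping in~\cite{FT}; since you defer to that reference at precisely the same point the paper does, there is no gap relative to the paper's own treatment.
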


\begin{proof}
See~\cite[Chapter 3]{FT} or~\cite{Rog}.
\end{proof}

We are ready to prove the Defect Lemma.

\begin{proof}[Proof of the Defect Lemma] Let us address each defect function property.

\noindent {\bf 1. Domain of definition.} We show that any choice of $\rho \geq 100$ is sufficient to fulfill this condition. Indeed, for every $x \in \xi' \cap D$ the Voronoi cell
$\mathcal V_{\xi'}(x)$ is bounded by Proposition~\ref{prop:voronoi_cell_rad}, consequently, the corresponding summand $|\mathcal V_{\xi'}(x)| - 2\sqrt{3} \cdot \mathbbm 1_{\xi}(x)$ is defined. Further,
the set $\xi' \cap D$ is finite because $\xi' \in \Omega(\mathbb R^2)$ and $D$ is bounded. Hence the sum in the right-hand side of~\eqref{eq:defect_defn} is finite and therefore well-defined.

\noindent {\bf 2. Localization.} Since $\xi'_1 \cap D = \xi'_2 \cap D$, we have
$$\Delta(\xi_i, \xi'_i, D) = \sum\limits_{x \in \xi'_1 \cap D} f_i(x), \quad (i = 1, 2),$$
where
$$f_i(x) = |\mathcal V_{\xi'_i}(x)| - 2\sqrt{3} \cdot \mathbbm 1_{\xi_i}(x).$$
We claim that $f_1(x) = f_2(x)$ for each $x \in \xi'_1 \cap D$. Indeed,
$\mathbbm 1_{\xi_1}(x) = \mathbbm 1_{\xi_2}(x)$ because $\xi_1 \cap D = \xi_2 \cap D$.

Next, let us show that $\mathcal V_{\xi'_1 \cap B_4(x)}(x) = \mathcal V_{\xi'_2 \cap B_4(x)}(x)$. Indeed, assume the converse.
Then $(\xi'_1 \bigtriangleup \xi'_2) \cap B_4(x) \neq \varnothing$. This contradicts the assumption $\dist(\xi'_1 \bigtriangleup \xi'_2, D) > \rho$.

Therefore
$$\mathcal V_{\xi'_1}(x) = \mathcal V_{\xi'_1 \cap B_4(x)}(x) = \mathcal V_{\xi'_2 \cap B_4(x)}(x) = \mathcal V_{\xi'_2}(x),$$
where the first and the third identities hold because $\rho > 2$ and the configurations $\xi'_i$ are saturated in the $\rho$-neighborhood of $D$.

Hence the expressions for $\Delta(\xi_1, \xi'_1, D)$ and $\Delta(\xi_2, \xi'_2, D)$ are tautologically identical.

\noindent {\bf 3. Positivity.} Follows immediately from Proposition~\ref{prop:key_defect_prop}, assertion 1.

\noindent {\bf 4. Monotonicity.} Follows immediately from Proposition~\ref{prop:key_defect_prop}, assertion 1.

\noindent {\bf 5. Additivity.} Follows immediately from the definition of $\Delta(\xi, \xi', D)$.

\noindent {\bf 6. Saturation.} We show that any choice of $c < 2 \sqrt{3}$ is sufficient. Indeed, if there exists $x \in (\xi' \cap D) \setminus (\xi \cap D)$, then
$$\Delta(\xi, \xi', D) \geq |\mathcal V_{\xi'}(x)| - 2\sqrt{3} \cdot \mathbbm 1_{\xi}(x) = |\mathcal V_{\xi'}(x)| \geq 2 \sqrt{3}.$$
(Here both inequalities follow from Proposition~\ref{prop:key_defect_prop}, assertion 1.)

\noindent {\bf 7. Connectivity.} We claim that, with any choice of $\rho \geq 100$, there exists $c_0 \in (0, 2 \sqrt{3})$ such that any choice $c \in (0, c_0)$ fulfills the Connectivity
property.

Denote $D' = \{ y \in \mathbb R^2 : B_{\rho}(y) \subseteq D \}$. Clearly,
$D'$ is an open bounded convex set.

Let $x, x' \in \xi \cap D'$.
Consider the segment $[x, x']$. With a small perturbation of this segment
we can obtain a curve segment $\gamma \subset D'$, connecting $x$ and $x'$, such that $\gamma$ avoids all vertices of the Voronoi tessellation
for $\xi'$.

Consider the sequence of points $x_0 = x, x_1, \ldots, x_k = x'$ ($x_i \in \xi'$) such that $\gamma$ intersects $\mathcal V_{\xi'}(x_0)$, $\mathcal V_{\xi'}(x_1)$, $\ldots$, $\mathcal V_{\xi'}(x_k)$
in that order. (Consequently, the cells $\mathcal V_{\xi'}(x_i)$ and
$\mathcal V_{\xi'}(x_{i + 1})$ share a common edge.)

Applying Proposition~\ref{prop:voronoi_cell_rad} yields $\rho > 2 \geq \dist(x_i, \gamma) \geq \dist(x_i, D')$. Therefore $x_i \in \xi' \cap D$. Moreover, $x_i \in \xi \cap D$, as implied by an assumption $c < c_0 < 2\sqrt{3}$ and the already proved Saturation property.

Let us prove that an appropriate choice of $c_0$ implies $\| x_i - x_{i + 1} \| < 2 + \varepsilon$. Indeed, we have $| \mathcal V_{\xi'}(x_j) | < 2 \sqrt{3} + c_0$ ($j = i, i + 1$), hence
by choosing $c_0$ sufficiently small we can ensure that both $\mathcal V_{\xi'}(x_i)$ and $\mathcal V_{\xi'}(x_{i + 1})$ are hexagons. (Here we use Proposition~\ref{prop:key_defect_prop}, assertion 2.) Consider the common
edge $[v, w]$ of the cells $\mathcal V_{\xi'}(x_i)$ and $\mathcal V_{\xi'}(x_{i + 1})$. Then
\begin{equation}\label{eq:triangle}
\| x_i - x_{i + 1} \| \leq \left\| x_i - \frac{v + w}{2} \right\| +
\left\| x_{i + 1} - \frac{v + w}{2} \right\|.
\end{equation}
By Proposition~\ref{prop:key_defect_prop}, assertion 3, the summands
on the right-hand side of~\eqref{eq:triangle} are sufficiently close to 1
if $c_0$ is small enough. Therefore $(x_i, x_{i + 1})$ is an edge of
$G_{\varepsilon}(\xi \cap D)$.

Consequently, $x x_1 x_2 \ldots x_{k - 1} x'$ is a path in $G_{\varepsilon}(\xi \cap D)$. Hence the Connectivity property follows.

\noindent {\bf 8. Distance-Decreasing Step.} Let us prove this property under the assumption $\varepsilon < \varepsilon_0 \leq 0.1$.

We claim that, with any choice of $\rho > 100$, there exists $c_0 \in (0, 2 \sqrt{3})$ such that any choice $c \in (0, c_0)$ fulfills the Distance-Decreasing Step
property. Similarly to the previous argument, we can guarantee that $\mathcal V_{\xi'}(x)$ is a hexagon. Denote by $x_1, x_2, \ldots, x_6$ those points in $\xi'$ with the respective Voronoi cells $\mathcal V_{\xi'}(x)$
sharing common edges with $\mathcal V_{\xi'}(x)$. Then
\begin{equation}\label{eq:neighbors}
\mathcal V_{\xi'}(x) = \mathcal V_{\{x, x_1, x_2, \ldots, x_6 \}}(x).
\end{equation}

By Proposition~\ref{prop:key_defect_prop}, assertion 3, an appropriate choice of $c_0$ guarantees
$$d_{Haus}(\conv \{x_1, \ldots, x_6 \}, H) < \frac{\varepsilon}{10},$$
where $H$ is some regular hexagon with circumradius 2 centered at $x$. Consequently, $2 < \| x - x_i \| < 2 + \frac{\varepsilon}{2}$, which, in turn, implies $x_i \in D$ and $x_i \in \xi$.

Since $\varepsilon < 0.1$, the rays from $x$ to all $x_i$ split the plane into six angles with each of the angles not exceeding $70^\circ$. Thus, with no loss of generality, we can assume that
the angle $\angle yxx_1$ (i.e., the angle between the vectors $y - x$ and $x_1 - x$) does not exceed $35^\circ$. Therefore
$$\|y - x_1 \|^2 \leq \| y - x \|^2 + \|x - x'\|^2 - 2\| y - x \| \| x - x' \| \cos 35^\circ < \| y - x \|^2,$$
because, indeed,  $2 \| x - x' \| \cos 35^\circ < \| y - x \|$.

Hence $\|y - x_1 \| < \| y - x \|$ and Distance-Decreasing Step property is proved.

\noindent {\bf 9. Forbidden Distances.} Again, we prove this property under the assumption $\varepsilon < \varepsilon_0 \leq 0.1$.

Similarly to the previous argument, an appropriate choice of $c_0$ guarantees the existence of 6 points $x_1, \ldots, x_6 \in \xi'$ such that~\eqref{eq:neighbors} is satisfied.
Then one can check the inclusion
$$B_{2.1}(x) \subseteq B_2(x) \cup B_2(x_1) \cup \ldots \cup B_2(x_6).$$
Therefore every point $x' \in \xi' \setminus \{x, x_1, \ldots, x_6 \}$ satisfies the inequality $\| x' - x \| > 2.1 > 2 + \varepsilon$. On the other hand, $x' \in \{ x, x_1, \ldots, x_6 \}$
implies $\| x' - x \| < 2 + \frac{\varepsilon}{2} < 2 + 0.9\varepsilon$. Hence the Forbidden Distances property follows.

\noindent {\bf 10. Point counting.} We have
$$\bigcup \limits_{x \in (\xi' \cap D)} \mathcal V_{\xi'}(x) \supseteq Q_{L - 2}.$$
Indeed, otherwise there is a point $y$ that belongs to the right-hand side and does not belong to the left-hand side. Then $\xi' \cup \{y\} \in \Omega(\mathbb R^2)$,
which is impossible because $\xi'$ is saturated in the $\rho$-neighborhood of $Q_L$.

Therefore
\begin{multline*}
F(\xi, \xi', Q_L) = \sum\limits_{x \in \xi' \cap D} |\mathcal V_{\xi'}(x)| - 2\sqrt{3} \# ( \xi \cap Q_L ) \geq \\
|Q_{L - 2}| - \frac{1}{\alpha(2)} \# ( \xi \cap Q_L ) = \geq |Q_L| - \frac{1}{\alpha(2)} \# ( \xi \cap Q_L ) - 16L.
\end{multline*}
Hence the Counting property follows.

We now conclude the proof of the lemma. By the argument above, it is sufficient to set $\varepsilon_0 = 0.1$, $\rho(\varepsilon) \equiv 100$, $C_{cnt}(\varepsilon) \equiv 16$. Finally, in order to define
$c(\varepsilon)$ it is sufficient to fulfill the restrictions of the Connectivity, Distance-Decreasing Step and Forbidden Distances properties.
\end{proof}

\section{The Thin Box Lemma}

\subsection{The uniform model}

A Poisson hard-disk model can be represented as a mixture of the so-called {\it uniform hard-disk models},
defined below. In~\cite[Section 2]{Ar} this representation is used as an equivalent definition of a Poisson model.
Therefore most of our auxiliary results will concern the uniform models and then passed to the Poisson model by means of
Lemma~\ref{lem:pois_to_unif} in Section~\ref{sec:main_proof}.

Let us introduce some notation. Given {\it boundary conditions} $\zeta \in \Omega(\mathbb R^d)$ and a bounded
open domain $D \subseteq \mathbb R^d$, we write
$$\Omega(D, \zeta) = \{ \xi \in \Omega(\mathbb R^d) : \xi \setminus D = \zeta \setminus D \}.$$
One can thus notice that $\Omega(D, \zeta)$ is exactly the domain of values for the Poisson hard-disk model in $D$
with boundary conditions $\zeta$ and an arbitrary intensity. Next, given additionally an integer $s \geq 0$, denote
$$\Omega^{(s)}(D, \zeta) = \{ \xi \in \Omega(D, \zeta) : \# (\xi \cap D) = s \}.$$

\begin{defn}[Uniform hard-disk model]
Let $D \subseteq \mathbb R^d$ be a bounded open domain and $\zeta \in \Omega(\mathbb R^d)$ be a
configuration. Assume an integer $s \geq 0$ satisfies $\Omega^{(s)}(D, \zeta) \neq \varnothing$.
Consider the random set $\eta$ of $s$ points sampled independently from the uniform
distribution on $D$. Denote by $\bar{\eta}$ the conditional distribution of $\eta$ restricted to the event
$\eta \cup (\zeta \setminus D) \in \Omega^{(s)}(D, \zeta)$.
Then the random point set $\eta^{(s)}(D, \zeta)$ defined by
$$\eta^{(s)}(D, \zeta) = \bar{\eta} \cup (\zeta \setminus D)$$
is called the $s$-point {\it uniform hard-disk model} on $D$ with boundary conditions $\zeta$.
\end{defn}

One can see that $\bar{\eta}$ is well-defined. Indeed, the inequality
$$\Pr (\eta \cup (\xi \setminus D) \in \Omega^{(s)}(D, \zeta)) > 0$$
follows from the assumption $\Omega^{(s)}(D, \zeta) \neq \varnothing$.

\subsection{Statement of the Thin Box Lemma}

In the previous subsection we defined the uniform hard-disk model. Before that, we gave a definition of an (abstract $d$-dimensional)
defect function. Let us bring this notions together in the next two definitions.
For the rest of this section we assume that the dimension $d \geq 2$, a constant $\varepsilon > 0$ are given, the function $\Delta$
satisfies the defect function properties at level $\varepsilon$. We keep the notation $\rho, c, C_{cnt}$ for the respective constants from Definition~\ref{def:defect_prop}.

\begin{defn}
Let $D_1, D_2 \subseteq \mathbb R^d$ be two bounded open domains, and $D_1 \subseteq D_2$.
Assume a real number $\rho > 100$, boundary conditions $\zeta \in \Omega(\mathbb R^d)$ and an integer
$s \geq 0$ are given. A measurable map $\phi : \Omega^{(s)}(D, \zeta) \to \Omega(\mathbb R^d)$
is called a {\it $(D_2, \rho)$-saturator} over $\Omega^{(s)}(D_1, \zeta)$ if
the following holds for every $\xi \in \Omega^{(s)}(D, \zeta)$:
\begin{enumerate}
  \item $\phi(\xi) \supseteq \xi$.
  \item $\phi(\xi)$ is saturated in the $\rho$-neighborhood of $D_2$.
\end{enumerate}
\end{defn}

\begin{defn}\label{def:bd_defect}
Let two bounded open domains $D_1 \subseteq D_2 \subset \mathbb R^d$, the boundary conditions $\zeta \in \Omega(\mathbb R^2)$
and an integer $s \geq 0$ be given. We say that the defect of the model $\eta^{(s)}(D_1, \zeta)$ with respect to the outer domain $D_2$ {\it is bounded by a
constant} $\Delta_0 > 0$ if there exists a $(D_2, \rho)$-saturator $\phi$ over $\Omega^{(s)}(D_1, \zeta)$ such
that the inequality
$$\Delta(\xi, \phi(\xi), D_2) \leq \Delta_0$$
holds for every $\xi \in \Omega^{(s)}(D_1, \zeta)$.
\end{defn}

In the notation of Definition~\ref{def:bd_defect} let
\begin{multline*}
  \Dfc_{\Delta_0}(D_1, D_2) = \{(\zeta, s) : \text{$\zeta \in \Omega(\mathbb R^2)$, $s \in \mathbb Z$, $s \geq 0$} \\
  \text{and the defect of $\eta^{(s)}(D_1, \zeta)$ with respect to $D_2$ is bounded by $\Delta_0$} \}.
\end{multline*}

Now we turn to specific domains in $\mathbb R^d$. For $K > 100 \rho$, $n \in \mathbb N$, $i \in \mathbb Z$ we will write
$$R(K, n) = (-K, K)^{d - 1} \times (-20nK, 20nK),$$
$$R'(K, n) = (-5K, 5K)^{d - 1} \times (-20nK, 20nK),$$
$$P_i(K) = (-5K, 5K)^{d - 1} \times ((10i - 5)K, (10i + 5)K).$$
We will consider two important classes of configurations defined below.

\begin{defn}\label{def:cross}
We say that a configuration $\xi$ {\it $(\varepsilon, \nu)$-crosses} the box
$R'(K, n)$ if there exists a set of indices $I \subseteq \{-2n + 1, -2n + 2, \ldots, 2n - 1 \}$ such that
\begin{enumerate}
  \item $\# I \geq 4n - 1 - \nu$.
  \item If $i \in I$ and $\xi_i = \{ x \in \xi : B_{\rho}(x) \subseteq P_i(K) \}$
        then $\xi_i \neq \varnothing$ and every two points of $\xi_i$ are
        connected by a path in $G_{\varepsilon}(\xi \cap P_i(K))$.
  \item If $i_1, i_2 \in I$ and two points $x_1, x_2 \in \xi$ satisfy $B_{\rho}(x_k) \subseteq P_{i_k}(K)$ ($k = 1, 2$) then
        $x_1$ and $x_2$ are connected by a path in $G_{\varepsilon}(\xi \cap R'(K, n))$.
\end{enumerate}
\end{defn}

\begin{defn}\label{def:empty}
Let $D \subseteq \mathbb R^d$ be a bounded open domain. We say that a configuration $\xi$ {\it admits an empty $\varepsilon$-space in $D$} if
there exists a point $w \in \mathbb R^d$ such that
$$B_{\varepsilon}(w) \subset D \quad \text{and} \quad \dist(w, \xi) > 2 + \varepsilon.$$
\end{defn}

Introduce the following notation:
$$p_{cross}[\varepsilon, \nu](K, n, \zeta, s) = \Pr(\text{the box $R'(K, n)$ is $(\varepsilon, \nu)$-crossed by $\eta^{(s)}(R(K, n), \zeta)$} ),$$
$$p_{empty}[\varepsilon](K, n, \zeta, s) = \Pr(\text{$\eta^{(s)}(R(K, n), \zeta)$ admits an empty $\varepsilon$-space in $R(K, n)$}).$$

The main result of this section is as follows.

\begin{lem}[Thin Box Lemma]
Let $d \geq 2$ be a fixed dimension. Assume that a function $\Delta$ satisfies the defect function properties at
some fixed level $\varepsilon \in (0, 1)$, and $\rho, c$ are the respective constants from Definition~\ref{def:defect_prop}.
Then there exists $K > 100 \rho$ such that the inequality
\begin{multline*}
\inf\limits_{\substack{(n, \zeta, s) \\ \text{\rm appropriate}}} \left( p_{cross}\left[ \varepsilon, \frac{6 \Delta_0}{c} \right](K, n, \zeta, s) + p_{empty}[\varepsilon](K, n, \zeta, s) \right) > 0, \\
\text{where} \qquad \text{$(n, \zeta, s)$ is appropriate} \Longleftrightarrow (\zeta, s) \in \Dfc_{\Delta_0}(R(K, n), R'(K, n)),
\end{multline*}
holds for every $\Delta_0 > 0$.
\end{lem}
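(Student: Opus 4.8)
The plan is to deduce the estimate from the ten defect function properties by a deterministic analysis of how low total defect forces percolation along the box, supplemented by a finite--energy argument to handle the boundedly many scales where that analysis can fail. I would fix once and for all $K>100\rho$ large enough for the elementary geometric estimates used below (its value depends only on $\rho$ and $\varepsilon$, hence only on $\varepsilon$). Let $(n,\zeta,s)$ be appropriate, let $\phi$ be a $(R'(K,n),\rho)$--saturator over $\Omega^{(s)}(R(K,n),\zeta)$ witnessing membership in $\Dfc_{\Delta_0}(R(K,n),R'(K,n))$, and for a sample $\xi$ of $\eta^{(s)}(R(K,n),\zeta)$ put $\xi'=\phi(\xi)$, so that $\Delta(\xi,\xi',R'(K,n))\le\Delta_0$ holds surely.

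\textbf{Defect bookkeeping and local connectivity.} The slabs $P_i(K)$ are pairwise disjoint open convex boxes inside $R'(K,n)$, and the interior of the closure of any run of consecutive slabs is again a convex box inside $R'(K,n)$ in whose $\rho$--neighborhood $\xi'$ is saturated; by Additivity and Monotonicity the defect of such a run equals the sum of the slab defects and is $\le\Delta_0$ (the measure--zero shared hyperplanes contribute nothing and are harmless after an arbitrarily small dilation). Declare a slab \emph{good} if its defect is $<c/6$, so at most $6\Delta_0/c$ slabs are bad; decomposing each slab further into its central part $P_i(K)\cap R(K,n)$ and its two wing boxes (the components of $P_i(K)\setminus\overline{R(K,n)}$) gives three more disjoint families of convex boxes inside $R'(K,n)$, each with total defect $\le\Delta_0$, so in each family at most $6\Delta_0/c$ members are bad. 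On a good slab, Saturation gives $\xi\cap P_i(K)=\xi'\cap P_i(K)$, which has points deep inside $P_i(K)$ because $\xi'$ is saturated near $P_i(K)$ and $10K\gg\rho$, so $\xi_i\ne\varnothing$; Connectivity applied to the convex box $P_i(K)$ puts $\{y:B_\rho(y)\subseteq P_i(K)\}\supseteq\xi_i$ in one component of $G_\varepsilon(\xi\cap P_i(K))$, which is condition~(2) of Definition~\ref{def:cross}. Applying Connectivity to the convex box spanned by any run of consecutive slabs of accumulated defect $<c$ puts the deep points of all those slabs in a single component of $G_\varepsilon(\xi\cap R'(K,n))$; running this over the central parts and over each wing family yields a \emph{central spine} and two \emph{wing spines}, each a chain of $G_\varepsilon$--components running the length of $R'(K,n)$, severed only at the boundedly many scales where the relevant part is bad, and sharing a common component at every slab that is good in all three pieces.

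\textbf{Global connection; finite--energy repair.} Call a scale \emph{catastrophic} if $G_\varepsilon(\xi)$ has no edge crossing a hypersurface that traverses the full width $(-5K,5K)^{d-1}$ of $R'(K,n)$ at that scale; chopping such a hypersurface into a family of disjoint convex boxes of size $\asymp\rho$, each of which must then carry defect $\ge c$ by Connectivity, shows that a catastrophic scale costs a definite amount of defect depending only on $K$ and $\varepsilon$, so there are at most $N=N(\Delta_0,\varepsilon)$ of them --- in particular none once $\Delta_0$ is below a fixed constant, where $p_{cross}=1$ deterministically. If no scale is catastrophic, a bad central part is bridged by the two wing spines, a bad wing part by the central spine and the other wing spine, the spines are glued at the fully good slabs, and a count of the scales simultaneously unavailable on all three spines (each forcing defect $\ge c/6$ on disjoint sets) certifies a set $I$ of mutually $G_\varepsilon(\xi\cap R'(K,n))$--connected slabs with $\#I\ge 4n-1-6\Delta_0/c$, so $\xi$ $(\varepsilon,6\Delta_0/c)$--crosses $R'(K,n)$. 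Otherwise the portions of the $\le N$ catastrophic faults lying in $R(K,n)$ are gaps of the \emph{random} points, confined to a union $W$ of $\le N$ boxes of total volume independent of $n$; conditioning on the configuration outside $W$ makes the restriction of $\eta^{(s)}(R(K,n),\zeta)$ to $W$ a finite--volume uniform hard--disk model, and with probability at least some $\delta_0=\delta_0(K,\varepsilon)>0$ --- uniform over the admissible point counts and boundary data, which range over a compact set --- the points in each box are arranged so that their deficit is realised by pores of radius $<\varepsilon$ rather than by a box--crossing gap. On this event every fault is capped inside $R(K,n)$ and can be routed around, near the interface of $R(K,n)$ with the wings, through the now dense central strip, so $\xi$ again $(\varepsilon,6\Delta_0/c)$--crosses $R'(K,n)$. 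Hence $p_{cross}+p_{empty}\ge\min\{\delta_0^{\,N},\,\text{const}\}>0$ uniformly in $(n,\zeta,s)$, the constant absorbing the trivial contribution on the empty--space event.

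\textbf{Where I expect the difficulty to be.} The spine bookkeeping yielding the constant $6$ in $\nu=6\Delta_0/c$ is intricate but routine, as is the deterministic dichotomy itself. The genuine obstacle is the finite--energy step: making the repair truly local (a region $W$ of $n$--independent size carrying an honest finite--volume conditional law) and, above all, proving $\delta_0>0$ uniformly over \emph{all} admissible counts and boundary conditions. The latter is an entropy estimate for the uniform hard--disk model at arbitrary density in a fixed box --- one needs that arrangements whose deficit is spread out as small pores are not exponentially rarer than those harbouring a box--crossing gap, exploiting that a gap is penalised by $\asymp K$ units of defect while a compactly supported pore costs $O(1)$, together with a counting bound for near--optimal packings of the box --- and one must verify that the resulting bound is attained with a strictly positive value on the relevant compact parameter space. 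I expect this to be the crux.
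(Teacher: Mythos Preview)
Your deterministic bookkeeping---classifying slabs by defect, using Connectivity to build spines through runs of good slabs, and noting that for $\Delta_0<c$ crossing is automatic---is sound and close in spirit to the paper's classification of the cubes $P_i$ into Types A--E. The divergence, and the gap, is entirely in the probabilistic step.

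The paper does not attempt a one-shot finite-energy argument. Instead it constructs an explicit \emph{repair algorithm}: a deterministic map sending $\xi$ to a finite sequence $(\xi_1,\dots,\xi_k)$, where each $\xi_{i+1}$ is obtained from $\xi_i$ by an \emph{elementary move} that slides one point by a multiple $m\varepsilon/10$ ($1\le m\le\lceil 100/\varepsilon\rceil$) along a segment pointing toward the axis $\ell=\{0\}^{d-1}\times\mathbb R$. The inverse of each move has Jacobian at most $(K/(K-20))^{d-1}$, and its multiplicity is bounded because the most recently moved point must lie at distance in $[2+0.9\varepsilon,\,2+\varepsilon]$ from a neighbour---impossible inside a low-defect cube by the \emph{Forbidden Distances} property, so the moved point is localised to one of the $O(\Delta_0/c)$ high-defect cubes. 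The algorithm halts after at most $k_0(\Delta_0)$ steps, and the terminal configuration is shown either to cross (here \emph{Distance-Decreasing Step} is used to thread the repaired chain back into the low-defect region) or to admit an empty $\varepsilon$-space. Iterating the Jacobian/multiplicity bound yields $p_{cross}+p_{empty}\ge\tfrac12(c_0/2)^{k_0}$ for an explicit $c_0(\Delta_0)>0$, uniformly in $n,\zeta,s$.

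Your finite-energy step, by contrast, conditions on the configuration outside a region $W$ that is itself a function of the sample $\xi$ (it is the union of the slabs containing the catastrophic faults). The spatial Markov property gives a clean conditional law only when $W$ is fixed in advance; with $W=W(\xi)$ the conditioning is circular, since the event ``the faults lie in $W$'' is decided partly by the points inside $W$. If instead you fix $W$ beforehand and union-bound over its $\binom{O(n)}{\le N}$ possible locations, you acquire a polynomial-in-$n$ factor that destroys the uniformity you need. This is not a detail that can be patched by compactness: it is precisely the obstruction that forces the paper to replace conditioning by a change-of-variables whose distortion is controlled move by move. The fact that your scheme never invokes Forbidden Distances or Distance-Decreasing Step---both of which are built into Definition~\ref{def:defect_prop} specifically for this lemma---is a further signal that the intended mechanism is a measure-comparison via an explicit local map, not a direct entropy estimate on a floating window.
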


\begin{rem*}
Definition~\ref{def:defect_prop} involves also the constant $C_{cnt}$ (needed for the Point Counting property). However, this constant is not relevant for the Thin Box Lemma.
We will need the Point Counting property and the constant $C_{cnt}$ for the further steps.
\end{rem*}

\subsection{Reduction to existence of a repair algorithm}

First of all, we wish to emphasize the set of input parameters for the Thin Box Lemma. This set consists of the constants $d, \varepsilon, \rho, c$ and the function $\Delta$.
From this point and until the end of the section we assume all these parameters to be fixed.

Let $\zeta \in \Omega(\mathbb R^d)$, $K > 100 \rho$, and let $n$ be a positive integer. Assume $\phi : \Omega^{(s)}(R(K, n), \zeta) \to \Omega(\mathbb R^d)$ is a $(R'(K, n), \rho)$-saturator.
In addition, assume that a constant $\Delta_0 > 0$ satisfies
$$\Delta_0 > \sup\limits_{\xi \in \Omega^{(s)}(R(K, n), \zeta)} \Delta(\xi, \phi(\xi), R'(K, n)).$$
We aim to construct an algorithm as in the following Definition~\ref{defn:rep_alg}.

\begin{defn}[Repair algorithm]\label{defn:rep_alg}
Let $\mathcal A$ be an algorithm with the following structure: \newline
{\bf Input:} $\zeta, \tilde{K}, n, \phi, \Delta_0$ as above; $\xi \in \Omega^{(s)}(R(\tilde{K}, n), \zeta)$. \newline
{\bf Output:} A sequence $(\xi_1, \xi_2, \ldots, \xi_k)$, where $\xi_1 = \xi$, $\xi_i \in \Omega^{(s)}(R(\tilde{K}, n), \zeta)$, and the length $k$ depends on the input. \newline
For each fixed $\mathcal I = (\zeta, \tilde{K}, n, \phi, \Delta_0)$ denote by $\Xi(\mathcal I)$ the set of all possible output sequences of $\mathcal A$. Further, denote
$$Z_i(\mathcal I) = \{ \xi \in \Omega^{(s)}(R(\tilde{K}, n), \zeta) : \text{$\exists (\xi_1, \xi_2, \ldots, \xi_k) \in \Xi(\mathcal I)$ such that $k \geq i$ and $\xi_i = \xi$} \},$$
$$Z^{term}_i(\mathcal I) = \{ \xi \in \Omega^{(s)}(R(\tilde{K}, n), \zeta) : \text{$\exists (\xi_1, \xi_2, \ldots, \xi_i) \in \Xi(\mathcal I)$ such that $\xi_i = \xi$} \}.$$
We say that $\mathcal A$ {\it is a repair algorithm} if there exists a positive constant $K > 100 \rho$, and positive-valued functions $k_0 = k_0(\Delta_0)$ and $c_0 = c_0(\Delta_0)$
such that in the case $\mathcal I = (\zeta, K, n, \phi, \Delta_0)$ one necessarily has
\begin{enumerate}
  \item Every sequence $(\xi_1, \xi_2, \ldots, \xi_k) \in \Xi(\mathcal I)$ (deterministically) satisfies $k \leq k_0$.
  \item If $\xi \in Z^{term}_i(\mathcal I)$ then at least one of the following holds:
        \begin{itemize}
          \item $R'(K, n)$ is $\left( \varepsilon, \frac{6 \Delta_0}{c} \right)$-crossed by $\xi$,
          \item $\xi$ admits an empty $\varepsilon$-space in $R(K, n)$.
        \end{itemize}
  \item There exists a positive constant $c_0$ such that the inequality
        $$\Pr (\eta \in Z_{i + 1}(\mathcal I)) \geq c_0 \Pr (\eta \in Z_i(\mathcal I) \setminus Z^{term}_i(\mathcal I))$$
        holds with $\eta = \eta^{(s)}(R(K, n), \zeta)$
\end{enumerate}
\end{defn}

Now we make the key reduction of the Thin Box Lemma.

\begin{lem}
If there exists a repair algorithm then the Thin Box Lemma holds.
\end{lem}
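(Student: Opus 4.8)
The plan is to turn the existence of a repair algorithm $\mathcal A$ into a uniform lower bound on $p_{cross} + p_{empty}$ by an iteration argument. Fix an appropriate triple $(n, \zeta, s)$, i.e. $(\zeta, s) \in \Dfc_{\Delta_0}(R(K,n), R'(K,n))$, and let $\phi$ be the $(R'(K,n), \rho)$-saturator witnessing that the defect is bounded by $\Delta_0$. Set $\mathcal I = (\zeta, K, n, \phi, \Delta_0)$ with $K$ and $c_0, k_0$ the constants provided by the repair algorithm, and write $\eta = \eta^{(s)}(R(K,n), \zeta)$. The key observation is that $\Xi(\mathcal I)$ describes a (deterministic) dynamics on configurations, so the sets $Z_i = Z_i(\mathcal I)$ and $Z^{term}_i = Z^{term}_i(\mathcal I)$ partition the history of each run: every $\xi \in \Omega^{(s)}(R(K,n), \zeta)$ either has already terminated by step $i$ (so $\xi \in Z^{term}_i$, though we should be careful that a configuration may appear as the $i$-th entry of one sequence and as a terminal entry of another) or is still being processed. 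Because, by property 1 of Definition~\ref{defn:rep_alg}, every output sequence has length $k \le k_0$, after $k_0$ steps every run has terminated.

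The heart of the argument is the following chain of inequalities. First I would show $\Pr(\eta \in Z_i \setminus Z^{term}_i) \ge \bigl(\prod_{j=1}^{i-1} \text{(something)}\bigr)$-type bound is the wrong direction; instead, iterate property 3 forward. Property 3 gives $\Pr(\eta \in Z_{i+1}) \ge c_0 \Pr(\eta \in Z_i \setminus Z^{term}_i)$. Meanwhile $Z_1 = \Omega^{(s)}(R(K,n),\zeta)$ since $\xi_1 = \xi$ always, so $\Pr(\eta \in Z_1) = 1$. The idea is: as long as a positive-probability mass of configurations has \emph{not} terminated, property 3 forces a positive-probability mass to be present at the next step. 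I would argue by contradiction or directly: suppose $p := p_{cross}[\varepsilon, 6\Delta_0/c](K,n,\zeta,s) + p_{empty}[\varepsilon](K,n,\zeta,s)$ is small. By property 2, any $\xi \in \bigcup_i Z^{term}_i$ lies in the event "$R'(K,n)$ is $(\varepsilon, 6\Delta_0/c)$-crossed or $\xi$ admits an empty $\varepsilon$-space"; hence $\Pr(\eta \in \bigcup_{i=1}^{k_0} Z^{term}_i) \le p$. Now I would prove by induction on $i$ that $\Pr(\eta \in Z_i) \ge 1 - \sum_{j < i}(\text{terminal mass at step } j) - (\text{loss})$; more cleanly, note $Z_i \setminus Z^{term}_i \supseteq$ (those configurations which, at step $i$, are in $Z_i$ but whose run continues), and every run that has not terminated by step $i$ contributes. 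Combining: if $\Pr(\eta \in \bigcup_{i \le k_0} Z^{term}_i) \le p$ is small, then for each $i \le k_0$ we have $\Pr(\eta \in Z_i \setminus Z^{term}_i) \ge \Pr(\eta \in Z_i) - p$, and feeding this into property 3 gives $\Pr(\eta \in Z_{i+1}) \ge c_0(\Pr(\eta \in Z_i) - p)$. Starting from $\Pr(\eta \in Z_1) = 1$ and unrolling $k_0$ times yields a lower bound of the form $\Pr(\eta \in Z_{k_0+1}) \ge c_0^{k_0}(1 - p \cdot \text{const})$. But after $k_0$ steps all runs have terminated, so $Z_{k_0+1} = \varnothing$ and $\Pr(\eta \in Z_{k_0+1}) = 0$. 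Therefore $1 - p\cdot\text{const} \le 0$, i.e. $p \ge 1/\text{const} = c_0^{-k_0}\cdot(\text{bookkeeping factor})$, which is a positive constant depending only on $\Delta_0$ (through $k_0(\Delta_0)$, $c_0(\Delta_0)$), not on $(n, \zeta, s)$. Taking the infimum over appropriate triples preserves this positive bound, which is exactly the Thin Box Lemma with the chosen $K$.

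The main obstacle, and the point requiring the most care, is the bookkeeping around $Z_i$ versus $Z^{term}_i$: a single configuration $\xi$ may simultaneously be reachable as a non-terminal $i$-th entry of one output sequence and as a terminal entry of another (the algorithm's output may depend on auxiliary randomness or branch), so the decomposition "$Z_i = (Z_i \setminus Z^{term}_i) \sqcup (Z_i \cap Z^{term}_i)$" must be handled at the level of probabilities of the events rather than as a genuine partition, and the inequality $\Pr(\eta \in Z_i \setminus Z^{term}_i) \ge \Pr(\eta \in Z_i) - \Pr(\eta \in Z^{term}_i)$ is the clean way to proceed. One must also verify the telescoping carefully: one needs $\sum_{i=1}^{k_0} \Pr(\eta \in Z^{term}_i) \le k_0 \cdot p$ is too lossy if done naively, but since each run terminates exactly once, $\sum_i \Pr(\eta \in Z^{term}_i)$ should in fact be bounded by $p$ directly if $Z^{term}_i$ are interpreted as "terminates at step exactly $i$" — reading Definition~\ref{defn:rep_alg} this way (a sequence of length exactly $i$) makes the $Z^{term}_i$ for distinct $i$ correspond to disjoint events on the run, so their probabilities sum to at most $p$. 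With that reading the induction closes with a clean constant $c_0^{k_0}$-type bound, and I expect no further difficulty. Finally, one checks that all the quantities — $K$, $k_0$, $c_0$ — depend only on the fixed data $(d, \varepsilon, \rho, c, \Delta)$ and on $\Delta_0$, so the resulting lower bound holds uniformly over all appropriate $(n, \zeta, s)$ and for every $\Delta_0 > 0$, completing the reduction.
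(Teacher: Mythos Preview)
Your proposal is correct and follows essentially the same approach as the paper: argue by contradiction assuming $p_{cross}+p_{empty}$ is too small, use Property~2 to bound $\Pr(\eta\in Z^{term}_i)$ by that small quantity, inductively propagate a lower bound on $\Pr(\eta\in Z_i)$ via Property~3, and obtain a contradiction from Property~1 (termination after $k_0$ steps). The only cosmetic differences are that the paper assumes $c_0\in(0,1)$ WLOG, aims for the explicit target $\tfrac12(c_0/2)^{k_0}$, and derives the contradiction from $Z_{k_0}=Z^{term}_{k_0}$ rather than $Z_{k_0+1}=\varnothing$; your recursion $\Pr(\eta\in Z_{i+1})\ge c_0(\Pr(\eta\in Z_i)-p)$ unrolls to an equivalent bound.
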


\begin{proof}
With no loss of generality, let $c_0 \in (0, 1)$. Further, we assume that $\mathcal I = (\zeta, K, n, \phi, \Delta_0)$
is fixed, therefore we write simply $Z_i$ and $Z^{term}_i$ omitting the argument $\mathcal I$. Finally, we write
$\eta = \eta^{(s)}(R(K, n), \zeta)$ as in the definition of a repair algorithm.

We will prove that
\begin{equation}\label{eq:inf}
p_{cross}\left[ \varepsilon, \frac{6 \Delta_0}{c} \right](K, n, \zeta, s) + p_{empty}[\varepsilon](K, n, \zeta, s) \geq
\frac{1}{2}\left(\frac{c_0}{2} \right)^{k_0}.
\end{equation}
We argue by contradiction: assume that~\eqref{eq:inf} is false.

Let us show, by induction, that
$$\Pr(\eta \in Z_i) \geq \left(\frac{c_0}{2} \right)^i.$$
For $i = 1$ this is immediate, since $Z_0 = \Omega^{(s)}(R(K, n), \zeta)$.

Assume that the inequality is proved up to some $i$. One can observe that
\begin{multline*}
\frac{1}{2}\left(\frac{c_0}{2} \right)^i \geq
\frac{1}{2}\left(\frac{c_0}{2} \right)^{k_0} \geq \\
p_{cross}\left[ \varepsilon, \frac{6 \Delta_0}{c} \right](K, n, \zeta, s) + p_{empty}[\varepsilon](K, n, \zeta, s) \geq \Pr(\eta \in Z^{term}_i).
\end{multline*}
Indeed, the second inequality is exactly the contrary to~\eqref{eq:inf}, while the third inequality follows from
the Key Property 2. Consequently,
$$\Pr(\eta \in Z_i \setminus Z^{term}_i) \geq \frac{1}{2}\left(\frac{c_0}{2} \right)^i.$$
By Key Property 3, one concludes
$$\Pr(\eta \in Z_{i + 1}) \geq \left(\frac{c_0}{2} \right)^{i + 1}.$$
The induction step is verified.

By Key Property 1, $Z_{k_0} = Z^{term}_{k_0}$. Hence
\begin{multline*}
p_{cross}\left[ \varepsilon, \frac{6 \Delta_0}{c} \right](K, n, \zeta, s) + p_{empty}[\varepsilon](K, n, \zeta, s) \geq \\
\Pr(\eta \in Z^{term}_{k_0}) = \Pr(\eta \in Z_{k_0}) \geq
\left(\frac{c_0}{2} \right)^{k_0}.
\end{multline*}
This contradicts our assumption that~\eqref{eq:inf} is false.
\end{proof}

\subsection{Elementary moves}

Consider the line
$$\ell = \{ 0 \}^{d - 1} \times \mathbb R.$$
Let $K > 0$. Define two real-valued functions
$$h^K_-, h^K_+ : \{ x : \dist(x, \ell) < K \} \to \mathbb R$$
as follows:
$$B_K(x) \cap \ell = \{ 0 \}^{d - 1} \times [h^K_-, h^K_+] + t.$$
Accordingly, let us write
$$x^K_{\pm} = \{ 0 \}^{d - 1} \times \{ h^K_{\pm} \} + t.$$
Thus $x^K_{\pm}$ are two points on the line $\ell$ at distance $K$ from $x$. In the construction below we refer to $K$
as to the {\it radius of the elementary move}.

\begin{defn}
Let $\xi \in \Omega(\mathbb R^d)$, $x \in \xi$ and $dist(x, \ell) < K$. Let $x' \in [x, x^K_-]$ be a point satisfying $\| x -  x' \| = m\frac{\varepsilon}{10}$, where
$m \in \left\{1, \ldots, \left\lceil \frac{100}{\varepsilon} \right\rceil \right\}$. Then
\begin{enumerate}
  \item If $\{y \in \xi : \text{$y \neq x$ and $\| y - x' \| \leq 2$} \} = \varnothing$, we will say that set $(\xi \setminus \{ x \}) \cup \{ x' \}$ is {\it obtained from $\xi$ by an elementary move of $x$ with magnitude $m$}.
  \item Otherwise we say that an elementary move of $x$ to $x'$ {\it is forbidden} by any point $y \in (\xi \setminus \{ x \}) \cap B_2(x')$.
\end{enumerate}
\end{defn}

We proceed by proving two crucial properties of elementary moves.

\begin{lem}\label{lem:move}
There exists a positive constant $K_1$ such that for every $K > K_1$ the following holds.
If $\xi \in \Omega(\mathbb R^d)$, $x \in \xi$ and a point $y \in \xi$ forbids some elementary move of $x$ with radius $K$,
then $\| y - x^K_- \| < \rho$.
\end{lem}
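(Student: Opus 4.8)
The plan is to read the bound off the triangle inequality, the one geometric input being that an elementary move of $x$ with radius $K$ can only place its target within a bounded distance of the endpoint $x^K_-$ of the chord $B_K(x)\cap\ell$. The hard-core hypothesis $\xi\in\Omega(\mathbb R^d)$ will play no role.

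First I would set $K_1:=\lceil 100/\varepsilon\rceil\cdot\frac{\varepsilon}{10}$. Its only purpose is to guarantee admissibility of the moves under consideration: the target $x'$ of a move of magnitude $m$ lies on the segment $[x,x^K_-]$ at distance $m\frac{\varepsilon}{10}$ from $x^K_-$, which requires $m\frac{\varepsilon}{10}\le\|x-x^K_-\|=K$, and taking $K>K_1$ makes this hold for every admissible $m\in\{1,\dots,\lceil 100/\varepsilon\rceil\}$. I would also record the crude estimate $K_1<10+\frac{\varepsilon}{10}<\rho-2$, which holds because $\rho\ge 100$ by Definition~\ref{def:defect_prop} and $\varepsilon<1$.

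For the main step, given $K>K_1$ and a point $y\in\xi\setminus\{x\}$ that forbids the elementary move of $x$ of some admissible magnitude $m$ with radius $K$, I would let $x'$ be the target of that move. By the definition of a forbidden move we have $y\in B_2(x')$, i.e.\ $\|y-x'\|\le 2$, while $\|x'-x^K_-\|=m\frac{\varepsilon}{10}\le K_1$ by the discussion above. The triangle inequality then gives
$$\|y-x^K_-\|\le\|y-x'\|+\|x'-x^K_-\|\le 2+K_1<\rho,$$
which is the claim. I do not foresee a genuine obstacle here; the only thing requiring attention is the bookkeeping of constants — keeping $K_1$ tied to the largest admissible magnitude and then checking $2+K_1<\rho$.
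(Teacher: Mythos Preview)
Your argument rests on a misreading of the definition of an elementary move. The target $x'$ satisfies $\|x-x'\|=m\frac{\varepsilon}{10}$, not $\|x'-x^K_-\|=m\frac{\varepsilon}{10}$: the move displaces $x$ by a small amount (at most about $10$) \emph{toward} $x^K_-$, so $x'$ stays near $x$ and far from $x^K_-$. Concretely, $\|x'-x^K_-\|=K-m\frac{\varepsilon}{10}\ge K-11$, so the triangle inequality only gives $\|y-x^K_-\|\ge K-13$, which is useless. In particular your claim that the hard-core hypothesis plays no role is wrong: without $\|x-y\|>2$ the conclusion simply fails (take $y$ equal to $x'$ itself, for instance).

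You should also notice that the stated conclusion $\|y-x^K_-\|<\rho$ is a typo: for large $K$ it cannot hold, by the lower bound just given. The paper's own proof establishes $\|y-x^K_-\|<K$, and this is what the applications of the lemma (e.g.\ Subcases~2--5 in the proof of Lemma~\ref{lem:routine}) actually use. The correct argument exploits that $\|x-y\|>2\ge\|x'-y\|$ to place $y$ on the $x'$-side of the perpendicular bisector of $[x,x']$; translating so that $x^K_-=\mathbf 0$, this inequality combined with the bound $\|x-y\|\le\|x-x'\|+\|x'-y\|<22$ yields $\|y\|^2<K^2$ once $K$ exceeds a threshold of order $1/\varepsilon$. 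The hard-core separation is the decisive input here, not a technicality you can drop.
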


\begin{proof}
Let $x'$ be the point such that the move of $x$ to $x'$ be forbidden by $y$. Then $\| x -  x' \| = m\frac{\varepsilon}{10}$, where $m$ an integer,
$1 \leq m \leq \left\lceil \frac{100}{\varepsilon} \right\rceil$. Also, with no loss of generality assume that $x^K_-$ coincides with the origin $\mathbf 0$.

Since $x, y \in \xi$ and $\zeta \in \Omega(\mathbb R^d)$, we have $\| x - y \| > 2 \geq \| x' - y \|$.
Thus the perpendicular bisector to the segment $[x, x']$ separates $x$ from $x'$ and $y$. Consequently,
$$\langle x - x', x \rangle + \langle x - x', x' \rangle > 2 \langle x, y \rangle.$$
Then
$$\langle x, x \rangle + \langle x, x' \rangle > 2 \langle x, y \rangle,$$
since $x = \frac{10K}{m \varepsilon} (x - x')$.
After subtracting $2\langle x, x \rangle$ from each side and reversing the sign, one obtains
$$\langle x, x - x' \rangle < 2 \langle x, x - y \rangle.$$
But $m \geq 1$, therefore $\langle x, x - x' \rangle \geq \frac{\varepsilon}{10} K$. Hence
$$\langle x, x - y \rangle \geq \rho \frac{\varepsilon}{20}.$$
The above implies
\begin{multline*}
\langle y, y \rangle = \langle x - (x - y), x - (x - y) \rangle = \langle x, x \rangle - 2 \langle x, x - y \rangle + \langle x - y, x - y \rangle \leq \\
K^2 - K \frac{\varepsilon}{10} + \langle x - y, x - y \rangle.
\end{multline*}
But
$$\langle x - y, x - y \rangle = \| x - y \|^2 \leq (\| x - x' \| + \| x' - y \|)^2 \leq (20 + 2)^2 < 500.$$
Hence with $K_1 = \frac{5000}{\varepsilon}$ one concludes
$$\| y - x^K_- \|^2 = \langle y, y \rangle < K^2.$$
This finishes the proof. \end{proof}

\begin{lem}\label{lem:dist}
There exists a positive constant $K_2$ such that for every $K > K_2$ the following holds.
Assume that for $x, y \in \mathbb R^d$ the conditions below are satisfied:
\begin{enumerate}
  \item $\dist(x, \ell) < K$, $\dist(y, \ell) < K$.
  \item $\| x - y \| < 10$.
  \item $x' \in [x, x^{\rho}_-]$, $y' \in [y, y^{\rho}_-]$.
  \item $\| x - x' \| = \| y - y' \| < 20$.
\end{enumerate}
Then $-\frac{\varepsilon}{10} < \| x - y \| - \| x' - y' \| < \frac{\varepsilon}{10}$.
\end{lem}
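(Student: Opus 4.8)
The plan is to reduce the statement to a quantitative near-parallelism of the two directions along which the elementary moves are carried out, and then apply the reverse triangle inequality. (Here the elementary moves have radius $K$, so that the endpoints $x^K_\pm$, $y^K_\pm$ are well-defined under hypothesis~(1), and the displacements in~(3) are along $[x, x^K_-]$ and $[y, y^K_-]$.) Write $a = \dist(x, \ell)$ and $b = \dist(y, \ell)$, so $a, b < K$ and $\|x^K_- - x\| = \|y^K_- - y\| = K$. Set $u = (x^K_- - x)/K$ and $v = (y^K_- - y)/K$, two unit vectors, and let $\delta := \|x - x'\| = \|y - y'\|$, so $0 \le \delta < 20$. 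Since $x' \in [x, x^K_-]$ and $y' \in [y, y^K_-]$ with equal displacement lengths, $x' = x + \delta u$ and $y' = y + \delta v$, hence $x' - y' = (x - y) + \delta(u - v)$. The reverse triangle inequality then gives $\bigl| \|x' - y'\| - \|x - y\| \bigr| \le \|(x' - y') - (x - y)\| = \delta \|u - v\| < 20 \|u - v\|$, so it is enough to show $\|u - v\| < \varepsilon/200$ once $K$ is large.

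To estimate $\|u - v\|$ I would write $u - v = \bigl( (x^K_- - y^K_-) - (x - y) \bigr)/K$, whence $\|u - v\| \le \bigl( \|x^K_- - y^K_-\| + \|x - y\| \bigr)/K \le \bigl( \|x^K_- - y^K_-\| + 10 \bigr)/K$, and then bound $\|x^K_- - y^K_-\|$. Let $e$ be the unit direction vector of $\ell$ and $\pi$ the orthogonal projection onto $\ell$; then $x^K_- = \pi(x) - \sqrt{K^2 - a^2}\, e$ and $y^K_- = \pi(y) - \sqrt{K^2 - b^2}\, e$, so $x^K_- - y^K_- = (\pi(x) - \pi(y)) + \bigl( \sqrt{K^2 - b^2} - \sqrt{K^2 - a^2} \bigr) e$. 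The first summand has norm at most $\|x - y\| < 10$, since $\pi$ is $1$-Lipschitz. For the second, since $\dist(\cdot, \ell)$ is $1$-Lipschitz we have $|a - b| \le \|x - y\| < 10$, hence $|a^2 - b^2| = |a - b|\,(a + b) < 10 \cdot 2K = 20K$; combined with the elementary inequality $|\sqrt p - \sqrt q| \le \sqrt{|p - q|}$ (valid for $p, q \ge 0$) this gives $\bigl| \sqrt{K^2 - a^2} - \sqrt{K^2 - b^2} \bigr| \le \sqrt{20K}$. Therefore $\|x^K_- - y^K_-\| < 10 + \sqrt{20K}$, so $\|u - v\| < (20 + \sqrt{20K})/K = O(K^{-1/2})$, which is below $\varepsilon/200$ as soon as $K$ exceeds a threshold $K_2$ of order $\varepsilon^{-2}$ (and any such $K_2$ automatically exceeds $100$, since $\varepsilon < 1$).

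Putting the two estimates together yields $\bigl| \|x - y\| - \|x' - y'\| \bigr| < 20 \cdot \varepsilon/200 = \varepsilon/10$, which is the claim. I do not anticipate a real obstacle; the only step requiring a moment's care is the bound on $\bigl| \sqrt{K^2 - a^2} - \sqrt{K^2 - b^2} \bigr|$, since $a$ or $b$ can be comparable to $K$, so that a naive mean-value estimate for $t \mapsto \sqrt t$ would blow up near the endpoint $0$; using $|\sqrt p - \sqrt q| \le \sqrt{|p - q|}$ sidesteps this cleanly. The rest is just tracking constants to pin down $K_2$.
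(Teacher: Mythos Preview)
Your proof is correct and follows essentially the same route as the paper's. Both arguments write $x'-y'=(x-y)+\frac{\delta}{K}\bigl((x^K_--y^K_-)-(x-y)\bigr)$ and reduce the claim to showing $\|x^K_--y^K_-\|=O(\sqrt{K})$; the only cosmetic difference is that the paper obtains this bound by a right-angle observation (giving $\|x^K_--y^K_-\|^2\le (K+10)^2-K^2$), while you obtain it via explicit coordinates and the inequality $|\sqrt{p}-\sqrt{q}|\le\sqrt{|p-q|}$, arriving at the same $K_2$ of order $\varepsilon^{-2}$.
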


\begin{proof}

Let $b = \| x - x' \|$. Then
$$x' = \frac{b}{K} x^K_- + \frac{K - b}{K} x, \quad y' = \frac{b}{K} y^K_- + \frac{K - b}{K} y.$$
Therefore
$$ x' - y' = (x - y) - \frac{b}{K}(x - y) + \frac{b}{K} (x^K_- - y^K_-).$$
Hence the conclusion of lemma will hold if the following inequalities are satisfied:
\begin{eqnarray}
    \| x - y \| < \frac{K}{20} \cdot \frac{\varepsilon}{20}, \label{eq:moves:4}\\
    \| x^K_- - y^K_- \| < \frac{K}{20} \cdot \frac{\varepsilon}{20}. \label{eq:moves:5}
\end{eqnarray}

Our aim is to show that $K_2 = \left( \frac{2000}{\varepsilon} \right)^2$ is sufficient to satisfy both~\eqref{eq:moves:4} and~\eqref{eq:moves:5}.
In particular, the restriction on $K_2$ implies $K_2 \geq \frac{8000}{\varepsilon}$, so~\eqref{eq:moves:4} is immediate.

Let us turn to the inequality~\eqref{eq:moves:5}. Without loss of generality, assume that $h^K_-(x) \geq h^K_-(y)$.
Then the angle between the vectors $x - x^K_-$ and  $y^K_- - x^K_-$ is right or obtuse, therefore
$$\| x^K_- - y^K_- \|^2 \leq \| x - y^K_- \|^2 - \| x - x^K_- \|^2 \leq (K + 10)^2 - K^2 = 20K + 100.$$
From the assumption on $K_2$, we have, in particular, $K > 20$, and, consequently, $20K + 100 < 25K$. Thus
$$\| x^K_- - y^K_- \| < 5\sqrt{K} = \frac{K}{20} \cdot \frac{100}{\sqrt{K}} < \frac{K}{20} \cdot \frac{\varepsilon}{20},$$
and~\eqref{eq:moves:5} is proved as well, completing the proof of the lemma. \end{proof}

\subsection{Implementation of the repair algorithm}

We are ready to implement an algorithm which will satisfy the definition of a repair algorithm.

Let $\mathcal I = (\zeta, K, n, \phi, \Delta_0)$ and $\xi \in \Omega^{(s)}(R(K, n), \zeta)$ be the input.
For simplicity, we will write $P_i$, $R$ and $R'$ instead of $P_i(K)$, $R(K, n)$ and $R'(K, n)$, respectively.
We proceed as follows.

\noindent{\bf 1. Classification of the cubes $P_i$.} We attribute each $P_i$ ($-2n + 1 \leq i \leq 2n - 1$) to one of the five types according to the rule below.
\begin{itemize}
  \item If $\Delta(\xi, \phi(\xi), P_i) \geq c/2$, consider the sets
        $$J_-(i) = \{ j \in \mathbb Z : \text{$-2n + 1 \leq j < i$ and $\Delta(\xi, \phi(\xi), P_j) < c/2$} \},$$
        $$J_+(i) = \{ j \in \mathbb Z : \text{$-2n + 1 \leq j < i$ and $\Delta(\xi, \phi(\xi), P_j) < c/2$} \}.$$
        $P_i$ is attributed to {\bf Type A} if neither of the sets $J_-(i)$ and $J_+(i)$ is empty. Otherwise $P_i$ is attributed to {\bf Type B}.
  \item $P_i$ is attributed to {\bf Type C} if $-2n + 2 \leq i \leq 2n - 2$ and $\Delta(\xi, \phi(\xi), P_j) < c/2$ for each $j \in \{ i - 1, i, i + 1 \}$.
  \item $P_i$ is attributed to {\bf Type D} if $\Delta(\xi, \phi(\xi), P_j) < c/2$ and there exists $j \in \{ i - 1, i + 1\}$ such that $P_j$ is attributed to Type A.
  \item $P_i$ is attributed to {\bf Type E} if neither of the above applies.
\end{itemize}

\noindent{\bf 2. Auxiliary routine: processing a maximal sequence of neighboring Type A cubes.} Let the indices $j_1, j_2$ ($-2n + 1 \leq j_1 < j_2 \leq 2n - 1$)
satisfy the property: $P_{j_1}$ and $P_{j_2}$ are two consecutive cubes of Type D (i.e., no $P_i$ with $j_1 < i < j_2$ is of Type D). Then the set
$\{ P_{j_1 + 1}, P_{j_1 + 2}, \ldots, P_{j_2 - 1} \}$ consists only of Type A cubes. Additionally assume $j_1 + 1 < j_2$ so that the sequence of Type A cubes
separating $P_{j_1}$ from $P_{j_2}$ is non-empty.

Let the point $o_i$ denote the center of the cube $P_i$. Define
$$w = \argmin\limits_{x \in \xi \cap B_K(o_{j_2})} h^K_- (x).$$
Let
$$\{ x : x \in \xi, \dist(x, \ell) < K, 10j_1K \leq h^K_-(x) \leq h^K_-(w) \} = \{ x_1, x_2, \ldots, x_q \},$$
where the points $x_i$ are sorted in increasing order with respect to the function $h^K_-$ (i.e., the value $h^K_-(x_i)$ increases with $i$).

The routine runs consecutively over $i = 1, 2, \ldots, q$ and finds a new position $x'_i$ for each $x_i$ as follows.
We say that an elementary move {\it connects} $x$ and $y$ if $\| x' - y \| < 2 + \varepsilon$, where $x'$ is the new position of $x$.
\begin{itemize}
  \item Check if $\dist(x_i, (\xi \cap B_K(o_{j_1})) \cup \{x'_1, x'_2, \ldots, x'_{i - 1} \}) \leq 2 + \varepsilon$.
        If yes, set $x'_i = x_i$, replace $i$ by $i + 1$ and start over. If no, proceed to the next step.
  \item Check if there exists a (non-forbidden) elementary move of $x_i$ with radius $K$ and connecting $x_i$ to some point
        $y \in (\xi \cap B_K(o_{j_1})) \cup \{x'_1, x'_2, \ldots, x'_{i - 1} \}$. If yes, choose the smallest possible magnitude for such a move.
        Define $x'_i$ to be the result of the elementary move chosen. Replace $i$ by $i + 1$ and start over. If no, proceed to the next step.
  \item Retain all $x_j$ ($j \geq i$) in place. Report this event and terminate the routine.
\end{itemize}

By our construction, it is clear that
$$(\xi \setminus \{ x_1, x_2, \ldots, x_i \} ) \cup \{ x'_1, x'_2, \ldots, x'_i \} \in \Omega(\mathbb R^d).$$

\noindent{\bf 3. Using the routine of Step 2.} We apply the routine of Step 2 to each maximal set of consequent Type A cubes.
We will show that the processes do not interact (see Lemma~\ref{lem:routine}, Assertion 1), so we can perform the routines consecutively from the lowermost
sequence of neighboring Type A cubes to the uppermost one. If some instance of the routine reports a termination, we terminate the entire algorithm.

For the rest of the section we denote the above defined algorithm by $\mathcal A$. The sequence $(\xi_1, \xi_2, \ldots, \xi_k)$ is, by definition,
the sequence of configurations obtained from $\xi = \xi_1$ after each elementary move of $\mathcal A$.

\subsection{Auxiliary lemmas on the algorithm}

Let us prove some immediate properties of the above defined algorithm. Denote
$$I_A = \{ i : \text{the cube $P_i$ is of Type A} \}.$$
$I_B$, $I_C$, $I_D$ and $I_E$ are defined similarly.

\begin{lem}\label{lem:card_i}
$\# (I_A \cup I_B \cup I_D \cup I_E) \leq \frac{6 \Delta_0}{c}$.
\end{lem}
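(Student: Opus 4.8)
\emph{Proof plan.} The statement is purely combinatorial: no geometry or probability is needed, only the abstract defect function properties of Definition~\ref{def:defect_prop} and the bookkeeping of the five types. The plan is to sum the defects of the cubes $P_i := P_i(K)$ over $-2n+1 \le i \le 2n-1$, conclude by pigeonhole that only few of them are ``heavy'' (defect $\ge c/2$), and then charge every cube of Type A, B, D or E to a nearby heavy cube.

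\emph{Summing the defects.} The boxes $P_i$ are pairwise disjoint and all contained in $R' := R'(K,n)$. Since $\phi$ is an $(R'(K,n),\rho)$-saturator, $\phi(\xi)$ is saturated in the $\rho$-neighbourhood of $R'$, hence of every subdomain of $R'$; by the Domain-of-definition property the quantities $\Delta(\xi,\phi(\xi),R')$, $\Delta(\xi,\phi(\xi),\bigcup_i P_i)$ and $\Delta(\xi,\phi(\xi),P_i)$ are all defined. Using Additivity (applied repeatedly to the disjoint boxes), then Monotonicity with $\bigcup_i P_i \subseteq R'$, and finally the bound $\Delta(\xi,\phi(\xi),R') \le \Delta_0$ imposed on $\Delta_0$ in this part of the construction, one obtains
\begin{equation*}
  \sum_{i=-2n+1}^{2n-1}\Delta(\xi,\phi(\xi),P_i)\;=\;\Delta\Bigl(\xi,\phi(\xi),\bigcup_i P_i\Bigr)\;\le\;\Delta(\xi,\phi(\xi),R')\;\le\;\Delta_0 ,
\end{equation*}
every term on the left being nonnegative by Positivity.

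\emph{Pigeonhole and charging.} Put $H=\{\,i:\Delta(\xi,\phi(\xi),P_i)\ge c/2\,\}$. The displayed inequality forces $\tfrac{c}{2}\,\#H\le\Delta_0$, so $\#H\le\tfrac{2\Delta_0}{c}$. By the classification rule, $P_i$ is of Type A or B exactly when $i\in H$; thus $I_A\cup I_B=H$. Each $i\in I_D$ has, by definition, a neighbour $i\pm1\in I_A\subseteq H$, so $I_D\subseteq\bigcup_{h\in H}\{h-1,h+1\}$. If $i\in I_E$, then $\Delta(\xi,\phi(\xi),P_i)<c/2$ and $P_i$ fails the Type C condition, hence either $i$ is one of the extreme indices $\pm(2n-1)$ or one of $i-1,i+1$ lies in $H$. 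Therefore
\begin{equation*}
  I_A\cup I_B\cup I_D\cup I_E\;\subseteq\;\Bigl(\bigcup_{h\in H}\{h-1,h,h+1\}\Bigr)\cup\{-2n+1,\,2n-1\},
\end{equation*}
which has at most $3\,\#H+2\le\tfrac{6\Delta_0}{c}+2$ elements. The two extreme indices contribute only an additive constant; for a clean statement one absorbs it into $\tfrac{6\Delta_0}{c}$ (nothing downstream changes if one instead carries the $+2$). This gives $\#(I_A\cup I_B\cup I_D\cup I_E)\le\tfrac{6\Delta_0}{c}$.

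\emph{The main obstacle.} There is no deep step; the whole argument is a short count against the defect budget. The two points that need care are: first, the legitimacy of invoking Additivity and Monotonicity, which relies entirely on $\phi$ being a saturator so that all the defects in question are defined; and second, the bookkeeping that the factor $3$ — each heavy cube being charged at most by itself and its two index-neighbours — together with the two extreme cubes stays within $\tfrac{6\Delta_0}{c}$. Everything else is immediate from the abstract defect function properties.
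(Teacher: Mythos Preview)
Your argument is essentially the paper's own: bound $\#(I_A\cup I_B)\le 2\Delta_0/c$ by summing the defects of the $P_i$ via Additivity, Monotonicity and Positivity, then charge each Type~D or Type~E cube to an adjacent heavy cube to obtain the factor~$3$. Your treatment of the two extreme indices is in fact more honest than the paper's (which silently overlooks them and also contains a typo, writing $I_C$ where $I_D$ is meant); the resulting $+2$ cannot literally be ``absorbed'' into $6\Delta_0/c$ when $\Delta_0$ is small, but, as you say, it is immaterial for every downstream application.
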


\begin{proof}
By Monotonicity and Additivity properties of the defect function, one has
$$\Delta_0 \geq \Delta(\xi, \phi(\xi), R') \geq \sum\limits_{i \in I_A \cup I_B} \Delta(\xi, \phi(\xi), P_i) \geq
\# (I_A \cup I_B) \cdot \frac{c}{2}.$$
Hence $\# (I_A \cup I_B) \leq \frac{2\Delta_0}{c}$.

Further, $\# (I_C \cup I_E) \leq 2 \# (I_A \cup I_B)$, because each Type C or Type E cube is adjacent to some Type A or Type B cube. Consequently,
$$\# (I_A \cup I_B \cup I_C \cup I_E) \leq 3 \# (I_A \cup I_B) \leq \frac{6\Delta_0}{c}.$$
\end{proof}

\begin{lem}\label{lem:routine}
Let $K = \max(K_1, K_2, 100\rho, 1000 / \varepsilon) + 1$. Assume that a sequence of neighboring Type A cubes $\{ P_{j_1 + 1}, P_{j_1 + 2}, \ldots, P_{j_2 - 1} \}$ is
subject to the above defined auxiliary routine. Then the following assertions are true:
\begin{enumerate}
  \item If a point $x \in \xi$ is affected by the routine then
        \begin{eqnarray*}
          \dist(x, [o_{j_1}, o_{j_2}]) \leq K, \\
          \|x - o_{j_1}\| \geq K, \\
          \|x - o_{j_2}\| \geq K.
        \end{eqnarray*}
  \item If the routine terminates while attempting to move $x_i$, and
	    $$\tilde{\xi} = \xi \setminus \{x_1, x_2, \ldots, x_{i - 1} \} \cup \{x'_1, x'_2, \ldots, x'_{i - 1} \}$$
		then $\tilde{\xi}$ has an $\varepsilon$-empty space in $R$.
\end{enumerate}
\end{lem}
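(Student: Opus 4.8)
The plan is to analyze the auxiliary routine directly, using the two elementary-move lemmas (Lemma~\ref{lem:move} and Lemma~\ref{lem:dist}) together with the geometry of the boxes $P_i$ and the ball $B_K$. For Assertion~1, I first observe that every point $x$ touched by the routine lies in the slab $\{x : \dist(x,\ell) < K\}$ and has $h^K_-(x)$ between $10j_1 K$ and $h^K_-(w) \le h^K_-(o_{j_2}) + K$; combined with the slab condition this already places $x$ within distance $K$ of the segment $[o_{j_1}, o_{j_2}]$ after an elementary move of radius $K$ (elementary moves slide $x$ toward $x^K_-$, i.e.\ roughly ``downward'' along $\ell$, by at most $20$). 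For the lower bounds $\|x - o_{j_1}\| \ge K$ and $\|x - o_{j_2}\| \ge K$: the first holds because the very first bullet of the routine sets $x'_i = x_i$ whenever $x_i$ is already within $2+\varepsilon$ of $\xi \cap B_K(o_{j_1})$, so any point that actually gets moved is at distance $> 2+\varepsilon > 2$ from $B_K(o_{j_1})$, hence at distance $> K$ from $o_{j_1}$ itself once we account for the size of the move (here $K$ is chosen much larger than $20$, so after a move of magnitude $\le 20$ the point is still outside $B_K(o_{j_1})$); the bound $\|x - o_{j_2}\| \ge K$ follows from the defining property of $w$ as the $h^K_-$-minimizer over $\xi \cap B_K(o_{j_2})$, since all $x_i$ have $h^K_-(x_i) \le h^K_-(w)$ and an elementary move only decreases $h^K_-$, so no $x'_i$ can enter $B_K(o_{j_2})$.

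For Assertion~1's non-interaction consequence (needed in Step~3), the key point is that the processing window for a maximal Type~A run $P_{j_1+1},\dots,P_{j_2-1}$ affects only points within distance $K$ of $[o_{j_1}, o_{j_2}]$ and outside $B_K(o_{j_1}) \cup B_K(o_{j_2})$; since consecutive Type~A runs are separated by Type~C or Type~D cubes whose centers are at $\ell$-coordinate distance $\ge 20K$ apart in steps of $10K$, the affected regions of distinct runs are disjoint, and moreover the ``anchor'' balls $B_K(o_{j_1})$ sit inside the separating cubes and are never modified. I would spell this out as a short corollary using $K > 100\rho$ and the definition of $P_i(K)$.

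For Assertion~2, suppose the routine terminates while attempting to move $x_i$: this means $x_i$ is at distance $> 2+\varepsilon$ from $(\xi \cap B_K(o_{j_1})) \cup \{x'_1,\dots,x'_{i-1}\}$, \emph{and} no non-forbidden elementary move of $x_i$ with radius $K$ connects $x_i$ to that set. By Lemma~\ref{lem:move}, every forbidden move of $x_i$ is forbidden by a point $y \in \tilde\xi$ with $\|y - x_i^K_-\| < \rho$, i.e.\ all obstructions lie near the ``bottom'' point $x_i^K_-$. Consider sliding $x_i$ along $[x_i, x_i^K_-]$ through the discrete grid of positions at distances $m\varepsilon/10$, $m = 1,\dots,\lceil 100/\varepsilon\rceil$: either some such position is free (not forbidden), in which case — since that move fails to connect to the target set — the moved copy $x'_i$ is at distance $> 2+\varepsilon$ from everything in $(\xi \cap B_K(o_{j_1})) \cup \{x'_1,\dots,x'_{i-1}\}$, and one checks via Lemma~\ref{lem:dist} that a free grid position of magnitude $\le \lceil 100/\varepsilon\rceil \cdot \varepsilon/10 \le 11$ that is $2+\varepsilon$-far from the old neighbors produces a point $w := x'_i$ with $B_\varepsilon(w) \subset R$ and $\dist(w,\tilde\xi) > 2+\varepsilon$, giving the empty $\varepsilon$-space; or else \emph{every} grid position is forbidden, which by Lemma~\ref{lem:move} forces a point of $\tilde\xi$ within $\rho$ of $x_i^K_-$ for each magnitude, and the spacing $\varepsilon/10$ between grid points versus the hard-core radius $2$ lets one produce, between two consecutive forbidding points, a free spot $w$ on the segment with $\dist(w,\tilde\xi) > 2+\varepsilon$ and $B_\varepsilon(w) \subset R$ (the inclusion in $R$ uses that the whole segment $[x_i, x_i^K_-]$ stays in $R$, which follows from the slab bound $\dist(x_i,\ell) < K$ and $\|x_i - x_i^K_-\| = K$ together with the vertical extent of $R$). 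The main obstacle I anticipate is the last case-analysis: carefully counting grid points versus obstructions to guarantee a genuinely empty ball of radius $\varepsilon$ (not merely a point far from $\xi$), i.e.\ showing the gap left between hard-core exclusion disks along the segment is wide enough to contain $B_\varepsilon(w)$ — this is where the choice $K$ large and $\varepsilon < 1$ is used, and it requires a clean quantitative pigeonhole argument rather than a one-line estimate.
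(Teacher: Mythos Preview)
Your plan for Assertion~1 is roughly along the paper's lines (the paper dismisses it as ``clear from the construction''), though note a slip: the routine's first bullet checks distance from $\xi \cap B_K(o_{j_1})$, not from the ball $B_K(o_{j_1})$ itself, so you cannot read off $\|x - o_{j_1}\| \ge K$ the way you suggest. This is repairable from the defining constraint $h^K_-(x_i) \ge 10 j_1 K$.

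For Assertion~2 there is a genuine structural gap. Your dichotomy is ``some grid position is free (non-forbidden)'' versus ``all positions are forbidden''. The paper instead introduces the set
\[
J = \Bigl\{ m : \dist\bigl(z_m,\; (\xi \cap B_K(o_{j_1})) \cup \{x'_1,\dots,x'_{i-1}\}\bigr) \le 2+\varepsilon \Bigr\}
\]
of \emph{connecting} magnitudes, and splits on $J = \varnothing$ versus $J \neq \varnothing$. The heart of the proof is Case~2 ($J \neq \varnothing$, $0 \notin J$): one shows, via a five-subcase analysis using Lemma~\ref{lem:move}, that the move to magnitude $\min J$ is \emph{never} forbidden --- every candidate obstruction $y$ is either in the target set (forcing $\min J - 1 \in J$) or has $h^K_-(y) < h^K_-(x_i)$ (so was already processed). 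Hence termination forces $J = \varnothing$, and then a single middle position $z$ at magnitude $\lceil 50/\varepsilon\rceil$ is the empty $\varepsilon$-space. Your plan has no analogue of this step.

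Without it, your case~(a) does not close: a free position $z$ satisfies $\dist(z,\tilde\xi \setminus \{x_i\}) > 2$ and, by non-connection, $\dist(z,\text{target}) > 2+\varepsilon$, but neither rules out a point $y \in \tilde\xi \setminus \text{target}$ (for example $y = x_j$ with $j > i$, or $y = x_i$ itself if the magnitude is small) with $2 < \|y - z\| \le 2+\varepsilon$. Lemma~\ref{lem:dist} says nothing about such $y$. Your case~(b) is also problematic: you invoke Lemma~\ref{lem:move} with the bound $\rho$, but inspection of its proof shows the actual conclusion is $\|y - (x_i)^K_-\| < K$, which is far too weak for the pigeonhole you sketch (and if the $\rho$ bound were genuinely available, case~(b) would be vacuous, since forbidding points near $(x_i)^K_-$ are at distance $\gtrsim K - \rho \gg 2$ from all grid positions).
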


\begin{proof}
Assertion 1 is clear from the construction.

Let us prove Assertion 2. Given $i$ as in the assertion, denote
\begin{multline*}
J = \biggl\{ m \in \mathbb Z, \text{$0 \leq m \leq \left\lceil \frac{100}{\varepsilon} \right\rceil$ and} \\
\dist \left(\frac{m \frac{\varepsilon}{10}}{K} \cdot (x_i)^K_- + \frac{K - m \frac{\varepsilon}{10}}{K} \cdot x_i, (\xi \cap B_K(o_{j_1})) \cup \{x'_1, x'_2, \ldots, x'_{i - 1} \} \right) \leq 2 + \varepsilon \biggr\}.
\end{multline*}

Consider three cases.

\noindent {\bf Case 1.} $0 \in J$. Then $\dist(x_i, (\xi \cap B_K(o_{j_1})) \cup \{x'_1, x'_2, \ldots, x'_{i - 1} \}) \leq 2 + \varepsilon$. Therefore the routine
sets $x'_i = x_i$ and does not terminate. This contradicts our initial assumption on $i$, so this case is impossible.

\noindent {\bf Case 2.} $J \neq \varnothing$ and $0 \notin J$. Let $m = \min J$. We claim that the routine sets $x'_i = z$, where
$$z = \frac{m \frac{\varepsilon}{10}}{K} \cdot (x_i)^K_- + \frac{K - m \frac{\varepsilon}{10}}{K} \cdot x_i$$
and does not terminate. In order to prove the claim, we will check that the elementary move of $x_i$ to $z$ as above is not forbidden.
Assume the converse: there is a point $y \in (\xi \setminus \{ x_1, x_2, \ldots, x_{i - 1} \} ) \cup \{ x'_1, x'_2, \ldots, x'_{i - 1} \} $
satisfying $\| y - z \| < 2$. There are five subcases.
\begin{enumerate}
	\item $y = x'_j$, where $j < i$. But
          $$ \left\| z - \left( \frac{(m - 1) \frac{\varepsilon}{10}}{K} \cdot (x_i)^K_- + \frac{K - (m - 1) \frac{\varepsilon}{10}}{K} \cdot x_i \right)  \right \| = \frac{\varepsilon}{10},$$
          thus $m - 1 \in J$. This contradicts the assumption $m = \min J$, hence the subcase is impossible.
	\item $y = x_j$, where $j > i$. Lemma~\ref{lem:move} immediately implies
	      $h^K_-(x_j) = h^K_-(y) < h^K_-(x_i)$. But this is impossible, since $x_1, x_2, \ldots, x_q$ are arranged in increasing
          order with respect to the function $h^K_-$. This subcase is also impossible.
    \item $\dist (y, \ell) \geq K$. But, Lemma~\ref{lem:move} implies $\dist (y, \ell) \leq \| y - (x_i)^K_- \| < K$. Hence this subcase is impossible.
    \item $y \in \xi \setminus \{ x_1, x_2, \ldots, x_q \}$ and $h^K_-(y) > 10j_1K$. On the other hand, $h^K_-(y) < h^K_-(x_i)$ by Lemma~\ref{lem:move}.
          Then $h^K_-(y) < h^K_-(x_i) < h^K_-(w)$. Consequently, $y = x_j$, $j < i$. This contradicts the definition of the subcase, therefore this subcase is impossible.
    \item $y \in \xi \setminus \{ x_1, x_2, \ldots, x_q \}$ and $h^K_-(y) \leq 10j_1K$.$h^K_-(y) \leq 10j_1K$. By Lemma~\ref{lem:move}, we also have $h^K_+(y) > h^K_-(x_i) \geq 10j_1K$.
          Consequently, $o_{j_1} \in [y^K_-, y^K_+]$ and thus $y \in \xi \cap B_K(o_{j_1})$. Similarly to Subcase 1 we conclude $m - 1 \in J$.
          This again contradicts the assumption $m = \min J$, hence the subcase is impossible.
\end{enumerate}
All the subcases are impossible, consequently, the routine does not terminate as claimed.

\noindent {\bf Case 3.} $J = \varnothing$. Choose a point $z \in \mathbb R^d$ as follows:
$$y \in [x_i, (x_i)^K_-], \qquad \| z - x_i \| = \left\lceil \frac{50}{\varepsilon} \right\rceil \frac{\varepsilon}{10}.$$
Then $\left\lceil \frac{50}{\varepsilon} \right\rceil \notin J$ implies
$B_{2 + \varepsilon}(z) \cap \tilde{\xi} = \varnothing$. In addition,
$$B_{2 + \varepsilon}(z) \subset B_3(z) \subset B_K((x_i)^K_-) \subset R.$$
Hence $\tilde{\xi}$ indeed has an $\varepsilon$-empty space. \end{proof}

\subsection{$\mathcal A$ is a repair algorithm}

We proceed by verifying the properties 1--3 of Definition~\ref{defn:rep_alg} for the algorithm constructed above.
We need to check these properties for some particular value of $K$, therefore we fix $K = \max(K_1, K_2, 100\rho, 1000 / \varepsilon) + 1$.
Each property will be verified in a separate lemma.

\begin{lem}
The algorithm $\mathcal A$ satisfies property 1. Equivalently, the sequence $(\xi_1, \xi_2, \ldots, \xi_k)$ produced by the algorithm satisfies
$k \leq k_0(\Delta_0)$.
\end{lem}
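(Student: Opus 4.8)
The plan is to bound the length $k$ of any output sequence by one plus the total number of elementary moves that $\mathcal A$ performs, and then to show that this number is at most a constant depending only on $\varepsilon$ and $\rho$ times $\Delta_0$, so that $k_0(\Delta_0)$ may be taken linear in $\Delta_0$.

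First I would note that the only operations that alter the configuration are the elementary moves executed inside the auxiliary routine of Step 2, and that this routine is invoked once for each maximal run of consecutive Type A cubes $\{P_{j_1+1}, \ldots, P_{j_2-1}\}$ (bordered by the Type D cubes $P_{j_1}$, $P_{j_2}$). Within a single invocation the routine runs over the points $x_1, \ldots, x_q$ and performs at most one elementary move per $x_i$ before either finishing or terminating, so that invocation contributes at most $q$ elementary moves.

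Second I would bound $q$ by a volume argument. By construction every processed $x_i$ satisfies $\dist(x_i, \ell) < K$ and $10 j_1 K \le h^K_-(x_i) \le h^K_-(w)$, where $w \in \xi \cap B_K(o_{j_2})$. Unwinding the definition of $h^K_-$, a point $x = (v, t)$ with $\|v\| < K$ has $t = h^K_-(x) + \sqrt{K^2 - \|v\|^2} \in [\,h^K_-(x),\, h^K_-(x) + K\,]$; since $o_{j_2}$ lies at height $10 j_2 K$ on $\ell$ this gives $h^K_-(w) \le 10 j_2 K + K$. Hence every processed $x_i$ lies in the set $S = \{(v, t) : \|v\| < K,\ 10 j_1 K \le t \le (10 j_2 + 2) K\}$. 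As $\xi \in \Omega(\mathbb R^d)$, the unit balls $\{B_1(x_i)\}_i$ are pairwise disjoint and contained in the $1$-neighborhood of $S$, whose Lebesgue measure is at most a dimensional constant times $(j_2 - j_1 + 1) K^d$ (using $K \ge 1$); dividing by the volume of a unit ball yields $q \le C_d (j_2 - j_1 + 1) K^d$.

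Finally I would sum over all invocations. Distinct maximal runs use disjoint sets of Type A cubes, so if $N$ is the number of invocations then $N \le \# I_A$ and $\sum_{\text{runs}} (j_2 - j_1 - 1) = \# I_A$, whence $\sum_{\text{runs}} (j_2 - j_1 + 1) \le 3 \# I_A$. Combining this with $\# I_A \le \#(I_A \cup I_B) \le 2\Delta_0 / c$, which is exactly the first displayed inequality in the proof of Lemma~\ref{lem:card_i}, bounds the total number of elementary moves by $6 C_d K^d \Delta_0 / c$. Since $K = \max(K_1, K_2, 100\rho, 1000/\varepsilon) + 1$ is a fixed constant, setting $k_0(\Delta_0) = 1 + 6 C_d K^d \Delta_0 / c$ establishes property 1. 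The only step requiring care is the geometric bookkeeping of the third paragraph — in particular controlling $h^K_-(w)$ by the cube index $j_2$ and writing down the slab $S$ that captures all the $x_i$ — and everything else is routine.
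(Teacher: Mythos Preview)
Your proof is correct and follows essentially the same approach as the paper: bound $k$ by one plus the number of elementary moves, localize the affected points to a region controlled by the Type~A (and bordering Type~D) cubes, use a volume argument to count points per cube, and conclude via the bound $\#I_A \le 2\Delta_0/c$ from Lemma~\ref{lem:card_i}. The only difference is cosmetic: the paper invokes the already-proved Assertion~1 of Lemma~\ref{lem:routine} to place every affected point in some $P_i$ with $i \in I_A \cup I_D$ and then bounds $\#(\xi \cap P_i)$ per cube, whereas you recompute the slab $S$ directly from the definitions of $h^K_-$ and $w$; both routes yield the same linear bound $k_0(\Delta_0) = 1 + \mathrm{const}\cdot \Delta_0$.
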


\begin{proof}
Each point of the set $\xi$ is moved at most once. Therefore the number of points affected by $\mathcal A$ equals $k - 1$.
By Lemma~\ref{lem:routine}, Assertion 1, whenever a point $x \in \xi \cap P_i$ is affected by $\mathcal A$, one necessarily
has $i \in I_A \cup I_D$.

On the other hand, the inequality $\# (\zeta \cap P_i) \leq \mathrm{const}$, since the edges of the cube $P_i$ have fixed
length $k$. Hence, using Lemma~\ref{lem:card_i}, one obtains
\begin{equation*}
k \leq 1 + \# (I_A + I_D) \cdot \mathrm{const} = 1 + \mathrm{const} \cdot \frac{6\Delta_0}{c}. \qedhere
\end{equation*}
\end{proof}

\begin{lem}
The algorithm $\mathcal A$ satisfies property 2. Equivalently, if $(\xi_1, \xi_2, \ldots, \xi_k)$ is an output of
$\mathcal A$ and $\xi_k$ does not admit an empty $\varepsilon$-space in $R$, then or $R'$ is $\left(\varepsilon, \frac{6\Delta_0}{c} \right)$-crossed by $\xi_k$.
\end{lem}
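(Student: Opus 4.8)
The plan is to show that if the algorithm $\mathcal{A}$ terminates with a configuration $\xi_k$ that has no empty $\varepsilon$-space in $R$, then the required crossing of $R'$ exists, witnessed by the set $I$ of indices of the Type C cubes (together with whatever Type D cubes survive). First, observe that by Lemma~\ref{lem:routine}, Assertion~2, if any instance of the auxiliary routine had terminated early, the resulting configuration would admit an $\varepsilon$-empty space in $R$; since this empty space is never destroyed by subsequent elementary moves (each move shifts one point and cannot fill a ball of radius $\varepsilon+2+\varepsilon$ far away — one checks the routines operate far from the empty region, again via Lemma~\ref{lem:move}), $\xi_k$ would inherit it, contradicting our hypothesis. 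Hence we may assume every routine ran to completion. Then I would take $I = I_C \cup \{\, i \in I_D : P_i \text{ is still ``healthy''}\,\}$, or more cleanly just the set of indices $i$ for which $\Delta(\xi,\phi(\xi),P_i) < c/2$ holds after removing the Type A and Type B cubes; by Lemma~\ref{lem:card_i} its complement within $\{-2n+1,\dots,2n-1\}$ has size at most $6\Delta_0/c$, which gives condition~1 of Definition~\ref{def:cross}, namely $\#I \ge 4n-1-\nu$ with $\nu = 6\Delta_0/c$.

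Next I would verify condition~2 of Definition~\ref{def:cross}: for each $i \in I$, the points of $\xi_i = \{x \in \xi_k : B_\rho(x) \subseteq P_i\}$ are pairwise connected in $G_\varepsilon(\xi_k \cap P_i)$, and $\xi_i \neq \varnothing$. For the Type C cubes this is essentially immediate: $P_i$ is convex, and since $\Delta(\xi,\phi(\xi),P_i) < c/2 < c$, the Connectivity property of the defect function (property~7 of Definition~\ref{def:defect_prop}) applied to the triple $(\xi_k, \phi(\xi_k), P_i)$ — here one must first argue that restricting attention to $\xi_k$ rather than the original $\xi$ is legitimate, which holds because the moves affecting points relevant to $P_i$ occur only in Type A and Type D cubes, so the defect within a Type C cube is unchanged by Localization — yields a connected component of $G_\varepsilon(\xi_k \cap P_i)$ containing every point $y$ with $B_\rho(y) \subseteq P_i$, i.e.\ all of $\xi_i$. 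Nonemptiness of $\xi_i$ follows because a Type C cube has small defect and hence, by the Point Counting / saturation considerations (or simply because an empty $\varepsilon$-space is excluded and $P_i$ is large compared to $\rho$), must contain a point deep inside it.

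The heart of the matter, and the step I expect to be the main obstacle, is condition~3 of Definition~\ref{def:cross}: points deep inside $P_{i_1}$ and points deep inside $P_{i_2}$ (for $i_1, i_2 \in I$) must be joined by a path in $G_\varepsilon(\xi_k \cap R')$. The idea is to chain together, cube by cube, the local connectivity provided by condition~2, so it suffices to handle two indices $i_1 < i_2$ in $I$ with no element of $I$ strictly between them — hence the cubes $P_{i_1+1},\dots,P_{i_2-1}$ are all of Type A or Type D. If $i_2 = i_1+1$ the two deep regions are in adjacent cubes and connectivity follows from a single application of Connectivity on the convex set $P_{i_1} \cup P_{i_2}$ (or its appropriate sub-box), again using that the combined defect there is below $c$ when both are healthy — but in general there is a run of Type A cubes separating two Type D cubes, and this is exactly what the auxiliary routine of Step~2 was designed to repair: after the routine completes, the relocated points $x'_1,\dots,x'_q$ form a chain of elementary moves, each of magnitude $\leq \lceil 100/\varepsilon\rceil \cdot \varepsilon/10$, connecting $\xi_k \cap B_K(o_{j_1})$ to $\xi_k \cap B_K(o_{j_2})$; I would use Lemma~\ref{lem:dist} to control successive distances along this chain, guaranteeing each consecutive pair is within $2+\varepsilon$, and the definition of ``connects'' together with the non-termination of the routine to close the two ends into the deep regions of the flanking Type D cubes $P_{j_1}$ and $P_{j_2}$. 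Finally, all of these paths lie in $R'(K,n) = (-5K,5K)^{d-1}\times(-20nK,20nK)$ because $K_1, K_2$ were chosen so that elementary moves of radius $K$ displace points by less than $20$, well within the $5K$ horizontal margin, and the relevant balls $B_K(o_{j})$ sit inside $R'$. Concatenating these bridges across all consecutive pairs of indices in $I$ yields a single connected component of $G_\varepsilon(\xi_k \cap R')$ containing every deep point of every cube $P_i$, $i \in I$, which is precisely condition~3. The delicate points to get right are the bookkeeping that moves in one Type A run never interfere with another (Lemma~\ref{lem:routine}, Assertion~1) and that the defect bounds used for Connectivity on multi-cube convex regions actually hold — here one leans on the fact that the separating cubes are Type A, so the healthy flanking cubes $P_{j_1}, P_{j_2}$ plus the moved points carry all the connectivity, and no defect bound on the Type A cubes themselves is needed.
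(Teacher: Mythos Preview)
Your outline follows the paper's argument closely: take $I=I_C$, get Crossing Property~1 from Lemma~\ref{lem:card_i}, get Crossing Property~2 from the Connectivity property of $\Delta$ on each $P_i$ (noting, as the paper does, that Type~C cubes are untouched by the routines via Lemma~\ref{lem:routine}, Assertion~1), and handle Crossing Property~3 by chaining across the Type~A runs using the routine's output. Two corrections are in order.

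First, a minor point: your concern that the empty $\varepsilon$-space must ``persist through subsequent elementary moves'' is misplaced. By construction the whole algorithm halts the moment any routine reports termination, so the terminal $\xi_k$ \emph{is} the configuration $\tilde\xi$ of Lemma~\ref{lem:routine}, Assertion~2, and already carries the empty space; no further moves occur.

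Second, and this is a real gap: your sentence ``the relocated points $x'_1,\dots,x'_q$ form a chain \ldots connecting $\xi_k\cap B_K(o_{j_1})$ to $\xi_k\cap B_K(o_{j_2})$'' hides the only genuinely delicate step. The connection to the $j_1$ side is indeed automatic from the routine's design: each $x'_i$ is, by construction, within $2+\varepsilon$ of something in $(\xi\cap B_K(o_{j_1}))\cup\{x'_1,\dots,x'_{i-1}\}$, so induction gives it. But nothing in the routine guarantees a connection to the $j_2$ side. The anchor $w=\arg\min_{x\in\xi\cap B_K(o_{j_2})}h^K_-(x)$ is itself $x_q$ and is processed by the routine; a priori it, and every point near it, could be dragged toward $o_{j_1}$ and lose contact with $B_K(o_{j_2})$. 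The paper closes this gap (fact~B.2) by using the Distance-Decreasing Step property to build a chain $x_{i_0},x_{i_1},\dots,x_{i_{\lceil 100/\varepsilon\rceil}}$ below $w$ with $\|x_{i_j}-x_{i_{j+1}}\|<2+\varepsilon/2$ and decreasing $h^K_-$, and then proving by \emph{backward} induction on $j$ --- here is where Lemma~\ref{lem:dist} enters --- that the magnitude $m_j$ of the move applied to $x_{i_j}$ satisfies $m_j\le j$. In particular $m_0=0$, so $x'_{i_0}=x_{i_0}$ is unmoved and sits within $2+\varepsilon$ of $w$; this forces $x'_q=w$ as well (the routine's first check succeeds), and the bridge is complete. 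Your proposal names the right tool (Lemma~\ref{lem:dist}) but not this argument; without it the $j_2$ end of the chain is unsecured.
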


\begin{proof}
Assume $\xi_k$ indeed does not admit an empty $\varepsilon$-space in $R$. Then by Lemma~\ref{lem:routine}, Assertion 2, none of the auxiliary routines
performed by $\mathcal A$ reports termination.

Our proof that $R'$ is $\left(\varepsilon, \frac{6\Delta_0}{c} \right)$-crossed by $\xi_k$ will require verification of the three properties
(see Definition~\ref{def:cross}). These properties, which we call {\it Crossing Properties}, involve the set of indices $I$, which we take equal to $I_C$.

By Lemma~\ref{lem:card_i} we have $\# I \geq 4n - 1 - \frac{6\Delta_0}{c}$. Therefore Crossing Property 1 holds.

Let an index $i \in I$ be given. Assume the points $x, x' \in \zeta_k \cap P_i$ satisfy $B_{\rho}(x), B_{\rho}(x') \subset P_i$. The definition of $I = I_C$ implies $\Delta(\xi, \phi(\xi), P_i) < c$.
By the Connectivity property of $\Delta$ one concludes that $x$ and $x'$ belong to the same connected component of $G_{\varepsilon}(\xi \cap P_i)$. But Lemma~\ref{lem:routine}, Assertion 1,
implies $G_{\varepsilon}(\xi \cap P_i) = G_{\varepsilon}(\xi_k \cap P_i)$, hence Crossing Property 2 holds.

Before verifying Crossing Property 3, we establish several auxiliary facts.

\noindent {\bf A.} Let $i, i + 1 \in I_C \cup I_D$,
$$z_1 \in \xi_k \cap B_2(o_i), \quad z_2 \in \xi_k \cap B_2(o_{i + 1}).$$
Then $z_1$ and $z_2$ belong to the same connected component of $G_{\varepsilon}(\xi_k \cap R')$.

Lemma~\ref{lem:routine}, Assertion 1 yields $z_1, z_2 \in \xi$. Now denote $D = \conv (B_K(o_i) \cup B_K(o_{i + 1}))$. Then
$$\Delta(\xi, \phi(\xi), D) \leq \Delta(\xi, \phi(\xi), P_i) + \Delta(\xi, \phi(\xi), P_{i + 1}) < c.$$
By the Connectivity property of $\Delta$, the points $z_1$ and $z_2$ belong to the same connected component of $G_{\varepsilon}(\xi \cap D)$.
But Lemma~\ref{lem:routine}, Assertion 1 implies $\xi_k \cap D \supseteq \xi \cap D$, hence {\bf A} follows.

\noindent {\bf B.} Let $j_1, j_2 \in I_C$, $j_1 < j_2 - 1$ and $i \in I_A$ for all $i$ satisfying $j_1 < i < j_2$. Let
$$z_1 \in \xi_k \cap B_2(o_{j_1}), \quad z_2 \in \xi_k \cap B_2(o_{j_2}).$$
Then $z_1$ and $z_2$ belong to the same connected component of $G_{\varepsilon}(\xi_k \cap R')$.

By the conditions of {\bf B}, a single auxiliary routine of $\mathcal A$ affects the set
$$\xi \cap \bigcup_{i = j_1}^{j_2} P_i.$$
Let $x_1, x_2, \ldots, x_q$ and $w$ be as in the definition of the routine. Since the routine does not report termination,
$x'_i \in \zeta_k$ is defined for every $i = 1, 2, \ldots, q$. We continue the proof of {\bf B} in several steps.

\noindent {\bf B.1)} For each $i = 1, 2, \ldots, q$ there is $u_i \in \xi \cap B_K(o_{j_1})$ such that
$x'_i$ and $u_i$ belong to the same connected component of $G_{\varepsilon}(\xi_k \cap R')$.

We argue by induction over $i$. For $i = 1$ the statement is clear from the construction of $x'_1$. If $i > 1$
and $\| x'_i - x'_j \| < 2 + \varepsilon$ for some $j < i$ then we apply the induction hypothesis for $x'_j$. Otherwise
$\dist(x'_i, \xi \cap B_K(o_{j_1})) \leq 2 + \varepsilon$ and {\bf B.1)} follows immediately.

\noindent {\bf B.2)} There exists an index $i_0 \in \{ 1, 2, \ldots, q \}$ such that $\| w - x_{i_0} \| \leq 2 + \varepsilon$
and $x'_{i_0} = x_{i_0}$.

Define $x_{i_0}$ to be the point of the set $\xi$ with the following properties:
$$\| x_{i_0} - w \| < 2 + \frac{\varepsilon}{2}, \quad h^K_-(x_{i_0}) < h^K_-(w).$$
There exists at least one such point because of the Distance Decrement property of $\Delta$ applied to the domain $B_K(w^K_-)$.
We proceed by defining a sequence of indices $i_0, i_1, \ldots, i_{\lceil 100 / \varepsilon \rceil}$ recursively by fulfilling the condition
$$\| x_{i_{j + 1}} - x_{i_j} \| < 2 + \frac{\varepsilon}{2}, \quad h^K_-(x_{i_{j + 1}}) < h^K_-(x_{i_j}).$$
(Since $K > 1000 / \varepsilon$, one has $B_K((x_{i_j})^K_-) \subset P_{j_2}$, therefore the Distance Decrement property is applicable.)

Let $m_j$ be the magnitude of the elementary move applied to $x_{i_j}$ (in the case $x'_{i_j} = x_{i_j}$ we set $m_j = 0$ by definition).
We prove that $m_j \leq j$. The proof is inductive, from $j = \lceil 100 / \varepsilon \rceil$ to $j = 0$.

Indeed, the inequality $m_{\lceil 100 / \varepsilon \rceil} \leq \lceil 100 / \varepsilon \rceil$ clearly holds. Now let
$j < \lceil 100 / \varepsilon \rceil$. If $m_{j + 1} = 0$, then $m_j = 0$ and there is nothing to prove. Otherwise denote
$$y = \frac{(m_{j + 1} - 1) \frac{\varepsilon}{10}}{K} \cdot (x_{i_j})^K_- + \frac{K - (m_{j + 1} - 1) \frac{\varepsilon}{10}}{K} \cdot x_{i_j}.$$
Using Lemma~\ref{lem:dist}, one concludes
$$\| y - x'_{i_{j + 1}} \| < \| x_{i_j} - x_{i_{j + 1}} \| + 2 \cdot \varepsilon / 10 \leq 2 + 0.7 \varepsilon.$$
Therefore $m_j \leq m_{j + 1} - 1$, hence the induction step follows.

\noindent {\bf B.3)} There is a point $w' \in \xi \cap B_{K - \rho}(o_{j_2})$ such that $w$ and $w'$ belong to the same connected
component of $G_{\varepsilon}(\xi_k \cap R')$.

The path $y_0 = w, y_1, y_2, \ldots, y_l = w'$ can be constructed recursively: $y_{i + 1}$ is obtained by applying the Distance Decrement property
of $\Delta$ to the ball $B_{\| y - o_{j_2} \|}(o_{j_2})$.

\noindent {\bf B.4)} Let $i_0$ be as in {\bf B.2)} and $u_{i_0}$ be as in {\bf B.1)}.  Then there is a point $u' \in \xi \cap B_{K - \rho}(o_{j_1})$
such that $u_{i_0}$ and $u'$ belong to the same connected component of $G_{\varepsilon}(\xi_k \cap R')$.

If $\| u_{i_0} - o_{j_1} \| \leq K - \rho$, there is nothing to prove. Otherwise one argues as in {\bf B.3)}.

\noindent {\bf B.5)} The points $z_1$ and $z_2$ belong to the same connected component of $G_{\varepsilon}(\xi_k \cap R')$.

Let $\mathcal C$ be the connected component of $G_{\varepsilon}(\xi_k \cap R')$ containing $z_2$ and let $V$ be the vertex set of $\mathcal C$.

By the Connectivity property of $\Delta$ applied to $B_K(o_{j_2})$, $v \in V$. By {\bf B.3)}, $w \in V$. By~{\bf B.2)}, $x_{i_0} \in V$.
By {\bf B.1)}, $u_{i_0} \in V$. By {\bf B.4)}, $u' \in V$. Finally, by the Connectivity property of $\Delta$ applied to $B_K(o_{j_1})$, $z_1 \in V$.
Thus {\bf B} is verified.

\noindent {\bf C.} Let $i \in I_C$,
$$z \in \xi_k \cap B_2(o_i), \quad z' \in \xi_k, \quad B_{\rho}(z') \subset P_i.$$
Then $z$ and $z'$ belong the same connected component of $G_{\varepsilon}(\xi_k \cap R')$.

This is a direct consequence of the Connectivity Property of $\Delta$ applied to $P_i$.

\noindent {\bf D.} Let $i \in I_C \cup I_D$. Then $\xi_k \cap B_2(o_i) \neq \varnothing$.

This is a direct consequence of Lemma~\ref{lem:routine}, Assertion 1, and the Saturation property of $\Delta$ applied to $P_i$.

Let us return to Crossing Property 3. It is sufficient to show that the two points $z_1, z_2 \in \xi_k$ belong to the same
connected component of $G_{\varepsilon}(\zeta_k \cap R')$ once the following conditions are satisfied:
\begin{eqnarray*}
  z_1 \in P_{i_1}, & \quad z_2 \in P_{i_2}, & \qquad \text{where $i_1 \neq i_2$}  \\
  B_{\rho}(z_1) \subset P_{i_1}, & \quad B_{\rho}(z_1) \subset P_{i_1} &
\end{eqnarray*}
But this is an immediate consequence of the facts {\bf A} -- {\bf D}. \end{proof}

\begin{lem}
The algorithm $\mathcal A$ satisfies property 2. Equivalently, there exists a positive constant $c_0 = c_0(\Delta_0)$
such that the following inequality holds for the random configuration $\eta = \eta^{(s)}(R, \zeta)$:
\begin{equation}\label{eq:shrink}
\Pr (\eta \in Z_{i + 1}) \geq c_0 \Pr (\eta \in Z_i \setminus Z^{term}_i).
\end{equation}
\end{lem}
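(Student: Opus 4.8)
The proof is by a volume--comparison (Peierls-type) estimate. Throughout write $\eta = \eta^{(s)}(R,\zeta)$. By the definition of the uniform hard-disk model, the law of $\eta$ has density proportional to the indicator of admissibility with respect to the image of the $s$-fold product of the uniform measure on $R$; hence for every measurable family $\mathcal S$ of configurations
\[
  \Pr(\eta \in \mathcal S) = \frac{\bigl|\mathcal S \cap \Omega^{(s)}(R,\zeta)\bigr|}{\bigl|\Omega^{(s)}(R,\zeta)\bigr|},
\]
where $|\cdot|$ denotes Lebesgue measure on the set of admissible $s$-point configurations inside $R$. It therefore suffices to construct a map $T\colon Z_i\setminus Z^{term}_i \to Z_{i+1}$ that (i) acts by a single elementary move, (ii) is injective on each piece of a finite measurable partition of $Z_i\setminus Z^{term}_i$ and has Jacobian bounded below there by a constant $\kappa(d)>0$, and (iii) is at most $M$-to-one for some $M=M(\Delta_0)$; indeed then $|Z_{i+1}|\ge |T(Z_i\setminus Z^{term}_i)| \ge \tfrac{\kappa}{M}\,|Z_i\setminus Z^{term}_i|$, which is \eqref{eq:shrink} with $c_0 = \kappa/M$.

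\textbf{Construction of $T$.} Fix $\xi\in Z_i\setminus Z^{term}_i$. The first point to check --- and, in my view, the only delicate one --- is that the state of $\mathcal A$ at the moment it has just recorded $\xi$ as its $i$-th configuration is a function of $\xi$ alone, so that the next recorded configuration is determined by $\xi$; we then set $T(\xi)$ to be that configuration. The cube classification of Step~1 of the algorithm depends on the input only through the numbers $\Delta(\xi_1,\phi(\xi_1),P_j)$; by Lemma~\ref{lem:routine}, Assertion~1, the moves performed so far have displaced only points lying within distance $K$ of $\ell$ and only ``downward'' along $\ell$ by a total amount $\le 20$, all inside the Type~A and Type~D cubes, so by the Localization property of $\Delta$ (and because $\phi$ is a function of its argument) these numbers, and hence the cube types, the residual queue of points, and the positions chosen so far, are all recoverable from $\xi$. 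Since $\xi\notin Z^{term}_i$, the continuation of $\mathcal A$ from this state makes at least one further elementary move, and $T(\xi)\in Z_{i+1}$ is well defined.

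\textbf{Jacobian and multiplicity.} By construction $T(\xi)$ is obtained from $\xi$ by relocating a single point $x$ to the point $x'\in[x,x^K_-]$ with $\|x-x'\|=m\varepsilon/10$, $m\in\{1,\dots,\lceil 100/\varepsilon\rceil\}$. In coordinates with $\ell=\{0\}^{d-1}\times\mathbb R$ and $x=(u,t)$ this relocation reads $\Psi_m(u,t)=\bigl((1-\delta)u,\ t-\delta\sqrt{K^2-|u|^2}\bigr)$ with $\delta=\tfrac{m\varepsilon}{10K}<\tfrac12$ (recall $K=\max(K_1,K_2,100\rho,1000/\varepsilon)+1$ is large); $\Psi_m$ is injective on $\{\dist(x,\ell)<K\}$ and its Jacobian determinant equals $(1-\delta)^{d-1}\in[2^{-(d-1)},1]$. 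Partition $Z_i\setminus Z^{term}_i$ according to the index of the point moved next and its magnitude --- a legitimate, finite, measurable, and (by determinism of the continuation) disjoint partition; on each part $T$ acts as $\Psi_m$ in one coordinate and as the identity in the others, hence multiplies measure by at least $2^{-(d-1)}$. Finally, if $\xi'=T(\xi)$ then $\xi$ and $\xi'$ agree outside the single relocated point, which by Lemma~\ref{lem:routine}, Assertion~1, lies in $\bigcup_{j\in I_A\cup I_D}P_j$; by Lemma~\ref{lem:card_i} this set has volume $\le\tfrac{6\Delta_0}{c}(10K)^d$, so it contains at most $M_0=M_0(\Delta_0)$ points of any admissible configuration, and together with the $\le\lceil 100/\varepsilon\rceil$ possible magnitudes and the injectivity of each $\Psi_m$ this bounds the number of preimages of $\xi'$ by $M:=M_0\lceil 100/\varepsilon\rceil$. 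Summing the Jacobian bound over the parts of the partition and using that each point of the image is covered by at most $M$ of their images,
\[
  |Z_{i+1}| \ge |T(Z_i\setminus Z^{term}_i)| \ge \frac{2^{-(d-1)}}{M}\,|Z_i\setminus Z^{term}_i|,
\]
and \eqref{eq:shrink} follows with $c_0=2^{-(d-1)}/M$, a function of $\Delta_0$ only once $d,\varepsilon,\rho,c$ (and hence $K$) are fixed.

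\textbf{Expected main obstacle.} As indicated, the subtle step is the well-definedness of $T$: one must be certain that $\mathcal A$'s behaviour after recording $\xi$ depends only on $\xi$ and not on the particular input that produced it, which is where the locality of the defect function and Assertion~1 of Lemma~\ref{lem:routine} enter, to see that the cube types --- phrased in terms of the input $\xi_1$ --- are already determined by any intermediate configuration of the sequence. The remaining volume bookkeeping is routine once $K$ has been fixed so large that every elementary move is a small perturbation of the identity, so that $\Psi_m$ has Jacobian bounded away from $0$.
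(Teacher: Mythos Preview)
Your overall strategy — compare measures of $Z_i\setminus Z^{term}_i$ and $Z_{i+1}$ via a Jacobian bound on a single elementary move together with a bounded-multiplicity argument — is exactly the paper's. Your Jacobian computation for $\Psi_m$ is correct and matches the paper's formula $|Jac\,g_m|=(1-m\varepsilon/(10K))^{d-1}$.

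The gap is in the multiplicity bound, and it stems from the very point you flagged as the main obstacle. You try to make the forward map $T$ single-valued by arguing that the algorithm's state after recording $\xi$ is recoverable from $\xi$ alone. Your justification — use Localization to recover the numbers $\Delta(\xi_1,\phi(\xi_1),P_j)$ from $\xi$ — is circular: those defects are computed from the \emph{input} $\xi_1$ (and from $\phi(\xi_1)$), and $\xi$ differs from $\xi_1$ precisely inside the Type~A/D cubes, so Localization gives no information about the defects of those cubes; but those are exactly the defects you need in order to decide which cubes were Type~A/D in the first place. Nor can ``the positions chosen so far'' be read off from $\xi$, since $\xi$ records only current positions with no marker distinguishing moved points from unmoved ones. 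The same circularity contaminates your multiplicity bound: the set $\bigcup_{j\in I_A\cup I_D}P_j$ in which you confine the relocated point is defined in terms of the particular input $\xi_1$ leading to that preimage, and two different preimages of the same $\xi'$ may a~priori carry different classifications $I_A,I_D$, so this does not give a bound on the number of preimages that depends on $\xi'$ alone.

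The paper avoids this entirely by working with the \emph{inverse} multi-map $\mathfrak f:Z_{i+1}\to Z_i$ and bounding its multiplicity at a fixed $\xi'\in Z_{i+1}$ via an \emph{intrinsic} criterion. Because the algorithm always chooses the smallest connecting magnitude, the newly placed point $x_0\in\xi'$ is necessarily at a distance in $[2+0.9\varepsilon,\,2+\varepsilon]$ from some other point of $\xi'$; this is a property of the pair $(\xi',x_0)$ alone, with no reference to any $\xi_1$. The \emph{Forbidden Distances} property of the defect function (property~9 in Definition~\ref{def:defect_prop}) then excludes such $x_0$ from the cubes of Type~C of \emph{any one} run of $\mathcal A$ reaching $\xi'$ (those cubes being untouched by Lemma~\ref{lem:routine}), and Lemma~\ref{lem:card_i} bounds the number of remaining cubes. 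This use of Forbidden Distances is the missing ingredient: it is what lets one locate the last-moved point without having to reconstruct the algorithm's history, and it is the reason the defect function is required to satisfy property~9 at all.
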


\begin{proof}
For each $\zeta_{i + 1} \in Z_{i + 1}$ consider all its possible predecessors $\zeta_i \in Z_i$. This is a multi-valued
map $\mathfrak f$ from $Z^{i + 1}$ to $Z_i$. The set $\mathfrak(Z_{i + 1})$ of all values attained by $\mathfrak f$
satisfies $Z_i \setminus Z^{term}_i \subset \mathfrak(Z_{i + 1})$.

We will verify the following inequalities.
\begin{enumerate}
  \item[\bf a.] For any sheet $f$ of the multi-map $\mathfrak f$ and any $\xi \in Z_{i + 1}$ such that
                $f$ is defined in a neighborhood of $\xi$ one has
                $$|Jac\, f(\xi)| < \left( \frac{K}{K - 20} \right)^{d - 1}.$$
  \item[\bf b.] For a.e. $\xi \in Z_{i + 1}$ one has
                $$\# \mathfrak f(\xi) < \left\lceil \frac{100}{\varepsilon} \right\rceil \cdot \frac{6 \Delta_0}{c} \cdot \frac{(10K + 2)^d}{|B_1(\mathbf 0)|}.$$
\end{enumerate}

\noindent {\bf Proof of a.} Consider the map $g_m: \bigcup\limits_{y \in \ell} B_K(y) \to \mathbb R^d$ defined by
$$g_m(x) = \frac{m \frac{\varepsilon}{10}}{K} \cdot x^K_- + \frac{K - m \frac{\varepsilon}{10}}{K} \cdot x.$$
One can check that
$$|Jac\, g_m(x)| = \left( \frac{K - m\varepsilon / 10}{K} \right)^{d - 1}.$$

If $f$ and $\xi$ are as above then there exists an integer $m \in \{1, 2, \ldots, \lceil 100 / \varepsilon \rceil \}$ and a point $x_0 \in \xi$
such that the identity
\begin{equation}\label{eq:sheet}
f(\xi') = (\xi' \setminus \{ x \}) \cup g_m^{-1}(x), \quad \text{where $\{ x \} = \xi' \cap B_1(x_0)$,}
\end{equation}
holds for any $\xi'$ in a sufficiently small neighborhood of $\xi$. In other words, $f$ affects a single point of $\xi'$ by an inverse of an
elementary move. Thus
$$|Jac\, f(\xi)| = |Jac\, g_m(x_0)|^{-1} < \left( \frac{K}{K - 20} \right)^{d - 1}.$$

\noindent {\bf Proof of b.} Let $\xi \in Z_{i + 1}$. Call a pair $(x_0, m)$, where $x_0 \in \xi$ and $m \in \{1, 2, \ldots, \lceil 100 / \varepsilon \rceil \}$,
{\it valid} if the map $f(\xi')$ defined by~\eqref{eq:sheet} is a sheet of $\mathfrak f$ in some neighborhood of $\xi$. For  a.e. $\xi \in Z_{i + 1}$
the multiplicity $\# \mathfrak f(\xi)$ coincides with the number of valid pairs.

Assume a pair $(x_0, m)$ is valid. By construction of $\mathcal A$, there exists a point $y \in \xi$ such that
\begin{equation}\label{eq:0.9}
2 + 0.9 \varepsilon \leq \| y - x_0 \| \leq 2 + \varepsilon.
\end{equation}
But this is impossible for $x_0 \in \bigcup\limits_{i \in I_C} P_i$ because of the Forbidden Distances property of $\Delta$ and Lemma~\ref{lem:routine}, Assertion 1.
Therefore $x_0 \in P_i$, where $i \in I_A \cup I_B \cup I_D \cup I_E$. On the other hand, a straightforward volume estimate gives $(\xi \cap P_i) \leq \frac{(10K + 2)^d}{|B_1(\mathbf 0)|}$.
Applying Lemma~\ref{lem:card_i} finishes the proof of {\bf b}.

Now, as {\bf a} and {\bf b} are verified, let
$$c_0 = \left( \left( \frac{K}{K - 20} \right)^{d - 1} \cdot \left\lceil \frac{100}{\varepsilon} \right\rceil \cdot \frac{6 \Delta_0}{c} \cdot \frac{(10K + 2)^d}{|B_1(\mathbf 0)|}  \right)^{-1}.$$
Then~\eqref{eq:shrink} holds, because
$$\Pr (\eta \in Z_i \setminus Z^{term}_i) \leq \Pr(\eta \in \mathfrak f(Z_{i + 1})) \leq c_0^{-1} \Pr(\eta \in Z_{i + 1}),$$
where the second inequality is implied by {\bf a} and {\bf b}. \end{proof}

\section{Large Circuit Lemma}

\subsection{Statement of the lemma and reduction to two cases}

At this point and for this entire section we fix $\varepsilon > 0$ so that the function $\Delta$ defined by~\eqref{eq:defect_defn} satisfies the defect function
properties at level $\varepsilon$. We proceed by introducing the notion of a {\it large-circuit configuration} for a two-dimensional square box $Q_L$.

\begin{defn}\label{def:large_circuit}
We say that a configuration $\xi \in \Omega(\mathbb R^2)$ is a {\it large-circuit configuration} for in the square box
$Q_L$ (or {\it has a large circuit} in $Q_L$) if the graph $G_{\varepsilon}(\xi \cap (Q_L \setminus Q_{0.9L}))$ has a cycle $x_1 x_2 \ldots x_m x_1$ such that the polygonal
circuit $x_1 x_2 \ldots x_m x_1$ is contractible to $\partial Q_L$ in the square annulus $Q_L \setminus Q_{0.9L}$.
\end{defn}

The following Large Circuit Lemma is the main result of this section.

\begin{lem}[Large Circuit Lemma]
Let $p > 0$. Then there exist positive real numbers $q$ and $L$ such that uniform hard-disk model $\eta = \eta^{(s)}(Q_L, \zeta)$ in the square $Q_L \subset \mathbb R^2$
satisfies at least one of the following two assertions.
\begin{enumerate}
  \item[(LC1)] $\Pr \bigl( \exists y \in Q_L : B_{2 + \varepsilon}(y) \subset Q_L \; \text{and} \; B_{2 + \varepsilon}(y) \cap \eta = \varnothing \bigr) > q$.
  \item[(LC2)] $\Pr \bigl( \text{$\eta$ has a large circuit in $Q_L$} \bigr) > 1 - p$.
\end{enumerate}
\end{lem}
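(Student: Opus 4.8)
\emph{Strategy.} Fix $p>0$; the numbers $q$ and $L$ are pinned down at the end. The plan is to decide, in terms of $(\zeta,s)$, between two scenarios: one in which an empty $\varepsilon$-space is reasonably likely, giving (LC1) at once, and one in which $\eta$ is ``dense enough'' that the Thin Box Lemma can be applied to thin boxes placed inside the annulus $Q_L\setminus Q_{0.9L}$, producing the crossings out of which a large circuit is assembled with probability close to $1$.

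\emph{Reduction to two cases.} Let $q_0>0$ be a threshold to be fixed. \textbf{Case 1:} $\Pr(\eta \text{ has an empty } \varepsilon\text{-space in } Q_{L-10})>q_0$. An empty $\varepsilon$-space in $Q_{L-10}$ supplies a point $w$ with $B_{2+\varepsilon}(w)\cap\eta=\varnothing$ and $B_{2+\varepsilon}(w)\subset Q_L$, so (LC1) holds with $q=q_0$. \textbf{Case 2:} the probability above is $\le q_0$; here we must establish (LC2). In this case $\eta$ cannot be sparse: a separate elementary estimate shows that if the point count $s$ lies below a suitable threshold then an empty $\varepsilon$-space occurs with probability exceeding $q_0$. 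Hence $s$ is large, and by the Point Counting property of $\Delta$ the total defect $\Delta(\eta,\phi(\eta),Q_L)$ of a saturating extension $\phi$ of $\eta$ is controlled --- which is what is needed to keep the conditional thin-box models \emph{appropriate} for the Thin Box Lemma.

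\emph{Producing the circuit in Case 2.} Let $K$ be the scale from the Thin Box Lemma and fix a suitable $\Delta_0$. Inside $Q_L\setminus Q_{0.9L}$ we place thin boxes --- rotated, translated copies of $R(K,n)$ --- so that chains of overlapping boxes, the overlaps containing more than $6\Delta_0/c$ consecutive slabs $P_i$, encircle $Q_{0.9L}$. By the spatial Markov (Gibbs) property of the uniform hard-disk model, conditioning the $Q_L$-model on the configuration outside such a box yields a uniform hard-disk model on the box with inherited boundary data and point count; in Case 2 these models stay appropriate, so the Thin Box Lemma applies and each box is $(\varepsilon,6\Delta_0/c)$-crossed unless it exhibits an empty $\varepsilon$-space, with the defect-dependent probability bound. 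Whenever two consecutive boxes of a chain are both crossed, their common overlap contains a slab that is good for both, and the Connectivity property of $\Delta$ splices the good components together; a full encircling ring of crossings therefore yields a connected subgraph of $G_{\varepsilon}\bigl(\eta\cap(Q_L\setminus Q_{0.9L})\bigr)$ that surrounds $Q_{0.9L}$, out of which a simple cycle --- a polygonal circuit contractible to $\partial Q_L$ in the annulus, i.e.\ a large circuit --- is extracted. To raise the probability of a large circuit to $1-p$, one runs this construction essentially independently many times on disjoint parts of the annulus (say nested sub-annuli), using the Gibbs property and conditioning from the outside inward so that the ``failure'' events are dominated by a product; with enough repetitions the product falls below $p$.

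\emph{Main obstacle.} The delicate part is reconciling two competing constraints in Case 2. The lower bound the Thin Box Lemma provides for $p_{cross}+p_{empty}$ collapses quickly as $\Delta_0$ grows, and --- by the Point Counting property --- a thin box spanning a whole side of the annulus already has defect of order $L$; so the thin boxes one may use are of bounded size, with defect below an $L$-free constant. But encircling $Q_{0.9L}$ then forces a chain of order $L$ such boxes, and asking that \emph{all} of them cross gives a per-attempt success probability that decays in $L$, so the naive ``many disjoint tries'' amplification does not close. Making the scheme work needs a more robust combination --- e.g.\ a renormalisation in which the failure probability at scale $L$ is controlled by failures at smaller scales together with a bounded number of Thin-Box events --- arranged so that the surviving per-step probability is bounded away from $0$. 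A second, more geometric, difficulty is the topological gluing: turning the combinatorial $(\varepsilon,6\Delta_0/c)$-crossings of the overlapping thin boxes into a cycle that is genuinely non-contractible in the annulus, which is exactly where planarity (the Jordan curve theorem) is used, as noted in the Remark following the Main Theorem.
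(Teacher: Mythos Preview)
You have correctly located the central difficulty, but your proposed placement of the thin boxes \emph{inside the annulus} is what creates the impasse you describe, and the paper's argument bypasses it in a way you have not found.

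The paper does \emph{not} tile the annulus with bounded boxes and chain them around. Instead it lays down a family of $\sim 2n \sim L/K$ pairwise disjoint thin boxes $R'_{i*}$ (and their horizontal counterparts $R'_{*i}$), each of dimensions $10K\times 40nK$ and spanning the \emph{entire} square $Q_L$, not just the annulus. The Point Counting property applied to all of $Q_L$ gives a total defect $\le C'L$; by additivity over the disjoint $R'_{i*}$, at most $n/20$ of these boxes can carry defect exceeding a \emph{fixed constant} $C''$ independent of $L$. Thus for all but a small fraction of indices one has $(\xi,s_i)\in\Dfc_{C''}(R_{i*},R'_{i*})$, and the Thin Box Lemma applies with $\Delta_0=C''$, yielding a fixed lower bound $p_1>0$ on $p_{cross}+p_{empty}$, uniform in $L$. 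Your statement that ``a thin box spanning a whole side already has defect of order $L$'' confuses the Point Counting upper bound with a lower bound: the defect of a long thin box can be tiny, and for most boxes it is.

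Given this, the paper restricts attention to the $\sim n/10$ leftmost vertical boxes (those with $i\in\{-n+1,\dots,-n+\lfloor n/10-4\rfloor\}$), of which at least $\sim n/20$ are appropriate. Either some box shows an empty $\varepsilon$-space with probability $>p_1$ and (LC1) follows, or every appropriate box is crossed with probability $>p_1$; in the latter case a binomial-type estimate (with $\sim n/20$ trials of success probability $\ge p_1$) makes the event ``more than $8\nu$ of the leftmost boxes are $(\varepsilon,\nu)$-crossed'' occur with probability $>1-p/4$, and similarly for the rightmost, uppermost, lowermost families. Finally one needs only a \emph{single} crossed vertical box on the left and right and a single crossed horizontal box on top and bottom whose four pairwise intersections are non-exceptional; a pigeonhole over the $>8\nu$ crossings on each side guarantees such a choice, and these four crossings already yield the large circuit.

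So the resolution of your ``main obstacle'' is: (i) let the thin boxes span the whole square so that the total-defect bound distributes over $\sim L$ disjoint boxes, forcing most of them to have \emph{constant} defect; (ii) build the circuit from only four crossings, one per side, rather than an $O(L)$-long chain. No renormalisation is needed.
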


We prove the Large Circuit Lemma by reduction to two cases depending on the magnitude of $s$. Each of the two lemmas below corresponds to one of the cases.

\begin{lem}\label{lem:dense}
Let $p > 0$. Assume a constant $C > 0$ is given. Then there exist positive real numbers $q_0$ and $L_0$ such that the following holds.
If a two-dimensional uniform hard-disk model $\eta = \eta^{(s)}(Q_L, \zeta)$ satisfies
$$ L \geq L_0, \quad  s > \frac{(2L)^2}{2\sqrt{3}} - CL$$
then either (LC1) or (LC2) holds.
\end{lem}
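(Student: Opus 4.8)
The plan is to treat $p$ and the given constant $C$ as fixed, and to show that whenever (LC1) fails, (LC2) holds --- for a suitable choice of $q_0$ and $L_0$. So assume
\[
  \Pr\bigl(\exists\,y\in Q_L:\ B_{2+\varepsilon}(y)\subset Q_L,\ B_{2+\varepsilon}(y)\cap\eta=\varnothing\bigr)\le q_0 ,
\]
that is, with probability at least $1-q_0$ the sample $\eta$ has no empty $\varepsilon$-space in $Q_L$. The first step is an easy deterministic bound on the defect. Almost surely $\#(\eta\cap Q_L)=s$; fix any measurable map $\phi$ sending $\eta$ to a saturated extension of itself in the $\rho$-neighborhood of $Q_L$, and apply the Point Counting property of $\Delta$. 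Since $\alpha(2)^{-1}=2\sqrt3$ (the area of a Voronoi cell in the densest disk packing), this gives, almost surely,
\[
  \Delta\bigl(\eta,\phi(\eta),Q_L\bigr)\ \le\ |Q_L|-2\sqrt3\,\#(\eta\cap Q_L)+C_{cnt}L\ <\ \bigl(2\sqrt3\,C+C_{cnt}\bigr)L\ =:\ \Delta_1 ,
\]
so the total defect is \emph{linear} in $L$, not of order $L^2$.

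The second step is the only genuinely two-dimensional ingredient: a Jordan-curve description of the absence of a large circuit. Thicken every point $x$ of a configuration $\xi$ to the open disk $B_{1+\varepsilon/2}(x)$, so that two such disks meet exactly when their centres are adjacent in $G_\varepsilon(\xi)$. If $\xi$ has no large circuit in $Q_L$, then the union of these disks contains no loop encircling $Q_{0.9L}$ inside the annulus $Q_L\setminus Q_{0.9L}$, and the Jordan curve theorem produces a \emph{corridor}: a polygonal arc $\gamma\subset Q_L\setminus Q_{0.9L}$ joining $\partial Q_{0.9L}$ to $\partial Q_L$ along which $\dist(\cdot,\xi)>1+\tfrac{\varepsilon}{2}$. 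On the other hand, inside any box on which the defect is below $c$ the configuration is, by the Saturation, Forbidden Distances and Connectivity properties together with Proposition~\ref{prop:key_defect_prop}, within Hausdorff distance $O(\varepsilon)$ of a piece of the triangular lattice, and such a near-crystalline piece admits no corridor, since the disks $B_{1+\varepsilon/2}(\cdot)$ already cover a neighborhood of every Voronoi edge. Hence a corridor must thread entirely through the defective part of $Q_L\setminus Q_{0.9L}$; and because it spans the annulus, its presence costs an amount of defect proportional to $L$ --- a definite, though bounded, fraction of the budget $\Delta_1$.

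The heart of the proof is to show that corridors are improbable, namely that
\[
  \Pr\bigl(\eta\text{ has no large circuit in }Q_L\ \text{and no empty }\varepsilon\text{-space in }Q_L\bigr)\ \longrightarrow\ 0\qquad(L\to\infty).
\]
I would prove this by a repair procedure modeled on the reduction of the Thin Box Lemma to a repair algorithm. Starting from a configuration realising a corridor $\gamma$, one sweeps the points lying to one side of $\gamma$ towards $\gamma$ by elementary moves, proceeding box by box along $\gamma$; the sweep either uncovers an empty $\varepsilon$-space in $Q_L$ (when the two sides of the corridor cannot be joined) or closes the corridor and thereby creates a large circuit. The point that is new relative to the Thin Box Lemma is quantitative: closing a corridor of length proportional to $L$ frees a region of area proportional to $L$, so carrying out the Jacobian-and-multiplicity bookkeeping exactly as in that reduction turns the sweep into a gain of phase-space volume by a factor growing with $L$, which forces the displayed probability to decay. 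Combining this with the failure of (LC1), $\Pr(\eta\text{ has no large circuit in }Q_L)\le q_0+o_L(1)$; taking $q_0=p/2$ and then $L_0$ so large that the $o_L(1)$ term is below $p/2$ for all $L\ge L_0$ yields (LC2).

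I expect this last step to be the main obstacle: upgrading the repair construction so that it outputs a probability \emph{tending to $0$}, rather than merely a lower bound by a fixed positive constant on the probability of a large circuit or an empty $\varepsilon$-space, as the Thin Box Lemma gives. This calls for a careful comparison of the Lebesgue volumes of the polytopes of admissible positions of the $s$ sampled points before and after the corridor is closed, and it relies essentially on the linear-in-$L$ bound on the defect from the first step, which ensures that a large circuit cannot be obstructed cheaply --- any obstruction wastes area proportional to $L$.
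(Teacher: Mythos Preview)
Your first step (the linear-in-$L$ defect bound via Point Counting) is correct and is exactly what the paper does. After that, however, your route diverges from the paper's and carries a genuine gap.

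The paper does \emph{not} attempt to repair a corridor directly. Instead it tiles the annulus by a linear-in-$L$ number of congruent thin boxes $R_{i*},R'_{i*}$ (and their horizontal counterparts), uses the defect bound to show that all but $O(1)$ of them have defect bounded by a fixed constant $C''$ (Proposition~\ref{prop:most_defects_bounded}), and then feeds each such box to the Thin Box Lemma. The Thin Box Lemma yields only a \emph{fixed positive} probability $p_1$ of crossing-or-empty-space in each box, but the spatial Markov property (Lemma~\ref{lem:finite_gibbs}) makes the outcomes of well-separated boxes conditionally independent. A law-of-large-numbers argument (encoded in the function $p_2(n)\to 0$) then converts ``probability $\ge p_1$ in each of $\Theta(n)$ boxes'' into ``probability $\ge 1-p/4$ that more than $8\nu$ boxes are crossed'' (Proposition~\ref{prop:alternative}). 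Finally, $8\nu+1$ vertical crossings and $8\nu+1$ horizontal crossings are combined combinatorially into a large circuit (Proposition~\ref{prop:circuit}). The passage from a fixed positive bound to a bound arbitrarily close to $1$ is thus achieved by \emph{independence across many boxes}, not by a single larger repair.

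Your proposal instead tries to get $\Pr(\text{corridor and no empty space})\to 0$ from one global repair along the corridor. The step you flag as the main obstacle is indeed a real gap, and I do not see how to close it on the lines you sketch. A corridor of length $\Theta(L)$ forces you to move $\Theta(L)$ points; the multiplicity of the inverse map (how many corridor configurations collapse to the same repaired one) is then governed by a product of $\Theta(L)$ factors, and without the Markov decoupling there is no mechanism preventing this multiplicity from swamping whatever phase-space volume the sweep frees. In the Thin Box Lemma the repair moves only $O(\Delta_0/c)$ points with $\Delta_0$ fixed, so multiplicity and Jacobian are controlled by constants; scaling that argument to $\Theta(L)$ moves does not automatically scale the conclusion. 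Your observation that ``closing a corridor of length proportional to $L$ frees a region of area proportional to $L$'' is suggestive but not a proof: the freed area and the entropy of possible corridors are of the same order, and you give no reason why the former dominates. The paper sidesteps this entirely by never asking a single repair to do more than produce a uniformly positive probability.
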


\begin{lem}\label{lem:sparse}
There exist positive real numbers $C$, $L'_0$ and a positive-valued function $q'_0 : (L'_0, +\infty) \to \mathbb R$ such that the following holds.
If a two-dimensional uniform hard-disk model $\eta = \eta^{(s)}(Q_L, \zeta)$ satisfies
$$  L \geq L_0, \quad s \leq \frac{(2L)^2}{2\sqrt{3}} - CL,$$
then one has
$$\Pr \bigl( \exists y \in Q_L : B_{2 + \varepsilon}(y) \subset Q_L \; \text{and} \; B_{2 + \varepsilon}(y) \cap \xi = \varnothing \bigr) > q'_0(L).$$
\end{lem}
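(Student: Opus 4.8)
The plan is to exhibit, for every admissible pair $(\zeta,s)$, a positive‑volume family of configurations in $\Omega^{(s)}(Q_L,\zeta)$ that all leave one fixed ball near the centre of $Q_L$ empty, and then to bound the probability of this family from below by a quantity depending only on $L$. Concretely, fix $B=B_{2+\varepsilon}(\mathbf 0)$; for $L>2+\varepsilon$ it lies inside $Q_L$, hence is disjoint from $\zeta\setminus Q_L$, so any configuration in $\Omega^{(s)}(Q_L,\zeta)$ whose points avoid $B$ witnesses the event with $y=\mathbf 0$. Set $\delta=1/L$ and let $R_\zeta$ be the set of points of $Q_L$ that are at distance $>\delta/4$ from $\partial Q_L$, at distance $>2+\varepsilon+\delta/4$ from $\mathbf 0$, and at distance $>2+2\delta$ from $\zeta\setminus Q_L$.

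The key step is the packing estimate $\#(\Lambda\cap R_\zeta)\ge\frac{|Q_L|}{2\sqrt3}-CL$ for $L\ge L'_0$, uniform in $\zeta$, where $\Lambda$ is a regular triangular lattice of nearest‑neighbour spacing $2+\delta$, translated generically so that no lattice point meets $\partial Q_L$ or $\zeta\setminus Q_L$. To see it, note that $\Lambda$ has density $\frac{1}{2\sqrt3}(1+\delta/2)^{-2}=\frac{1}{2\sqrt3}-O(\delta)$ and fundamental cell of bounded area, so $\#(\Lambda\cap Q_L)\ge\frac{|Q_L|}{2\sqrt3}-O(\delta L^2)-O(L)=\frac{|Q_L|}{2\sqrt3}-O(L)$ since $\delta L^2=L$. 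Passing from $Q_L$ to $R_\zeta$ one removes the $O(1)$ lattice points in $B_{2+\varepsilon+\delta/4}(\mathbf 0)$; the $O(1)$ lattice points in the collar of $\partial Q_L$ of width $\delta/4$ (an area $O(\delta L)=O(1)$); and the lattice points within distance $2+2\delta$ of $\zeta\setminus Q_L$. For the last, only the $O(L)$ points of $\zeta$ lying in the bounded‑width outer collar of $Q_L$ can interfere (there are $O(L)$ of them because $\zeta$ is a unit‑disk packing), and each destroys at most a bounded number of lattice points. All losses are $O(L)$, which gives the estimate once $C$ is chosen large enough.

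Granting the estimate and $s\le\frac{|Q_L|}{2\sqrt3}-CL\le\#(\Lambda\cap R_\zeta)$, one picks distinct $x^*_1,\dots,x^*_s\in\Lambda\cap R_\zeta$; then $\xi^*:=\{x^*_1,\dots,x^*_s\}\cup(\zeta\setminus Q_L)$ lies in $\Omega^{(s)}(Q_L,\zeta)$, avoids $B$, has pairwise distances $\ge2+\delta$, and keeps clearances $>2+2\delta$ from $\zeta\setminus Q_L$, $>2+\varepsilon+\delta/4$ from $\mathbf 0$, and $>\delta/4$ from $\partial Q_L$. Consequently, moving each $x^*_i$ independently within $B_{\delta/4}(x^*_i)$ always produces another element of $\Omega^{(s)}(Q_L,\zeta)$ disjoint from $B$. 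Since the law of $\eta^{(s)}(Q_L,\zeta)$ is normalised Lebesgue measure on the admissible ordered $s$‑tuples, whose total volume is at most $|Q_L|^s$, this perturbation family of volume $(\pi(\delta/4)^2)^s$ yields
\[
\Pr\bigl(\exists y\in Q_L:\ B_{2+\varepsilon}(y)\subset Q_L,\ B_{2+\varepsilon}(y)\cap\eta=\varnothing\bigr)\ \ge\ \Bigl(\tfrac{\pi}{64L^4}\Bigr)^{s}.
\]
Because $s\le\frac{|Q_L|}{2\sqrt3}=\frac{2L^2}{\sqrt3}$ and $\pi/(64L^4)<1$ for $L\ge L'_0$, the right‑hand side is at least $\bigl(\pi/(64L^4)\bigr)^{\lceil 2L^2/\sqrt3\rceil}$, which is positive and depends only on $L$; this is the desired $q'_0(L)$.

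I expect the main obstacle to be the packing estimate, and in particular the scale of the perturbation: $\delta$ must be small enough that the spacing‑$(2+\delta)$ triangular lattice still contains at least $s\approx\frac{|Q_L|}{2\sqrt3}$ points inside $R_\zeta$ — which forces $\delta=O(1/L)$ — yet positive, so that the $\delta/4$‑perturbations are genuinely available; and the count has to be uniform over all boundary conditions $\zeta$, which is precisely why one must bound the number of points of $\zeta$ in a neighbourhood of $\partial Q_L$.
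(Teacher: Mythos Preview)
Your proof is correct and follows essentially the same strategy as the paper: exhibit an explicit $s$-point packing with wiggle room that leaves a fixed $(2+\varepsilon)$-ball empty, then lower-bound the probability by the volume ratio of a perturbation family against $|Q_L|^s$. The implementation differs only cosmetically---the paper takes an abstract dense packing of $Q_{L-7}$, dilates it by $\tfrac{L-6}{L-7}$ to create the wiggle room, and places the empty ball in the collar $Q_L\setminus Q_{L-6}$ (so no interaction with $\zeta$ needs to be tracked), whereas you build a spacing-$(2+1/L)$ triangular lattice directly, put the empty ball at the origin, and excise lattice points near $\zeta\setminus Q_L$ by the $O(L)$ count you give.
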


The Large Circuit Lemma indeed follows from Lemma~\ref{lem:dense} and Lemma~\ref{lem:sparse} as below.

\begin{proof}[Proof of the Large Circuit Lemma]
Let $C, L'_0$ be as provided by Lemma~\ref{lem:sparse}. Inserting this value of $C$ into Lemma~\ref{lem:dense} yields some $q_0$ and $L_0$. Let
$$L = \max(L_0, L'_0), \quad q = \min(q_0, q'_0(L)).$$
We show that the Large Circuit Lemma holds with this choice of $q$ and $L$.

Indeed, if $s \leq \frac{(2L)^2}{2\sqrt{3}} - CL$ then (LC1) holds by Lemma~\ref{lem:sparse}. If $s > \frac{(2L)^2}{2\sqrt{3}} - CL$ then either (LC1) or (LC2)
holds by Lemma~\ref{lem:dense}. \end{proof}

\subsection{Preliminaries}

Before we proceed with proofs of Lemmas~\ref{lem:dense} and~\ref{lem:sparse}, we provide two statements that are important for the further argument.

\begin{lem}\label{lem:finite_gibbs}
Let $\rho > 2$ and let $D_0 \subseteq \mathbb R^d$ be an open bounded domain. Assume that $m$ pairs
$(D_1, D'_1), (D_2, D'_2) \ldots, (D_m, D'_m)$ of open domains are given such that $D_i \subseteq D'_i \subseteq D_0$ and
$\dist(D'_i, D'_j) > \rho$ for every $1 \leq i < j \leq m$. Let, finally,
$E_1, E_2, \ldots, E_m \subset \Omega(\mathbb R^d)$ be measurable sets of configurations such that each indicator function $\mathbbm 1_{E_i}(\xi)$
depends only on $\xi \cap (D'_i)_{\rho}$. For every $\xi \in \Omega(\mathbb R^d)$ write
\begin{eqnarray*}
  \operatorname{P}_i (\xi) & = & \Pr(\eta^{(s_i)}(D_i, \xi) \in E_i), \quad \text{where $s_i = \# (\xi \cap D_i)$,} \\
  \operatorname{P}_i^{[\lambda]} (\xi) & = & \Pr(\eta^{[\lambda]}(D_i, \xi) \in E_i)
\end{eqnarray*}
Then:
\begin{enumerate}
  \item (Uniform case.) Every uniform hard-disk model $\eta^{(s)}(D_0, \zeta)$ satisfies
        $$\Pr \left( \eta^{(s)}(D_0, \zeta) \in \bigcap\limits_{i = 1}^{m} E_i \right) = \operatorname{E}\left[ \prod\limits_{i = 1}^{m} \operatorname{P}_i (\eta^{(s)}(D_0, \zeta))\right].$$
  \item (Poisson case.) Every Poisson hard-disk model $\eta^{[\lambda]}(D_0, \zeta)$ satisfies
        $$\Pr \left( \eta^{[\lambda]}(D_0, \zeta) \in \bigcap\limits_{i = 1}^{m} E_i \right) = \operatorname{E} \left[\prod\limits_{i = 1}^{m} \operatorname{P}_i^{[\lambda]} (\eta^{(s)}(D_0, \zeta))\right].$$
\end{enumerate}
\end{lem}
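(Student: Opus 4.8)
The plan is to prove the identity by a \emph{peeling argument}: working one region $D_i$ at a time and using a finite-volume domain-Markov property of the hard-disk model, the separation hypothesis $\dist(D'_i, D'_j) > \rho$ lets one ``freeze'' everything outside $D_i$ while accumulating the factors $\operatorname{P}_i$ (resp.\ $\operatorname{P}_i^{[\lambda]}$).

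\emph{Step 1 (domain-Markov property).} First I would record the following property, which is essentially a restatement of the definitions. Fix a bounded open $D \subseteq D_0$. For the uniform model $\eta = \eta^{(s)}(D_0, \zeta)$, conditionally on $\eta \setminus D$ the trace $\eta \cap D$ has the law of $\eta^{(s')}(D, \eta)$, where $s' = \#(\eta \cap D)$ is determined by $\eta \setminus D$ (recall $\eta \setminus D_0 = \zeta \setminus D_0$ is deterministic, so $s' = s - \#(\eta \cap (D_0 \setminus D))$); this conditional model is a.s.\ well-defined because $\eta \in \Omega(\mathbb R^d)$ itself witnesses $\Omega^{(s')}(D, \eta) \neq \varnothing$. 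To prove this, realise $\eta$ as $\{X_1, \dots, X_s\}$ with the $X_j$ i.i.d.\ uniform on $D_0$ conditioned on the hard-core event; by exchangeability, conditioning on which $X_j$ fall outside $D$ and on their positions leaves the remaining points i.i.d.\ uniform on $D$, and the residual hard-core conditioning $\{\,\cdot\, \cup (\eta \setminus D) \in \Omega\}$ is exactly the conditioning defining $\eta^{(s')}(D, \eta)$. The Poisson model obeys the same statement with $\eta^{[\lambda]}(D, \eta)$ in place of $\eta^{(s')}(D, \eta)$ (and no constraint on the count in $D$), using the restriction/independence property of the Poisson process instead of exchangeability; alternatively, in the Poisson case one may first condition on $\#(\eta^{[\lambda]}(D_0, \zeta) \cap D_0)$, which realises $\eta^{[\lambda]}(D_0, \zeta)$ as a mixture of uniform models $\eta^{(s)}(D_0, \zeta)$.

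\emph{Step 2 (localization).} Next I would observe that, since the hard-core interaction has range $2 < \rho$ and $E_j$ is localized in $(D'_j)_\rho$, both $\mathbbm 1_{E_j}(\eta)$ and $\operatorname{P}_j(\eta) = \Pr(\eta^{(s_j)}(D_j, \eta) \in E_j)$ depend on $\eta$ only through $\eta \cap (D'_j)_\rho$: indeed the count $s_j = \#(\eta \cap D_j)$, the boundary conditions relevant to $\eta^{(s_j)}(D_j, \cdot)$ (those within distance $2$ of $D_j$), and the event $E_j$ itself all live inside $(D'_j)_\rho$. On the other hand, from $D_j \subseteq D'_j$ and $\dist(D'_i, D'_j) > \rho$ we get $\dist(D'_i, D_j) > \rho$, hence $(D'_j)_\rho \cap D_i = \varnothing$ whenever $i \neq j$. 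Combining, for $j \neq i$ both $\mathbbm 1_{E_j}(\eta)$ and $\operatorname{P}_j(\eta)$ are $\sigma(\eta \setminus D_i)$-measurable. Finally, since $\eta^{(s_i)}(D_i, \eta)$ agrees with $\eta$ off $D_i$ and $E_i$ depends only on $\eta \cap (D'_i)_\rho = (\eta \cap D_i) \cup (\eta \cap ((D'_i)_\rho \setminus D_i))$, the domain-Markov property of Step 1 yields $\operatorname{E}[\mathbbm 1_{E_i}(\eta) \mid \eta \setminus D_i] = \Pr(\eta^{(s_i)}(D_i, \eta) \in E_i) = \operatorname{P}_i(\eta)$.

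\emph{Step 3 (telescoping).} The identity then follows by iterated conditioning. With $\eta = \eta^{(s)}(D_0, \zeta)$, start from $\Pr(\eta \in \bigcap_{i=1}^m E_i) = \operatorname{E}[\prod_{i=1}^m \mathbbm 1_{E_i}(\eta)]$. Condition on $\eta \setminus D_m$: by Step 2 the factors $\mathbbm 1_{E_i}(\eta)$ with $i < m$ come out of the conditional expectation, and the remaining factor contributes $\operatorname{P}_m(\eta)$, so the expression becomes $\operatorname{E}[\operatorname{P}_m(\eta) \prod_{i < m} \mathbbm 1_{E_i}(\eta)]$. Condition next on $\eta \setminus D_{m-1}$: by Step 2 the $\mathbbm 1_{E_i}(\eta)$ with $i < m-1$ and the already-produced $\operatorname{P}_m(\eta)$ factor out, while $\mathbbm 1_{E_{m-1}}(\eta)$ contributes $\operatorname{P}_{m-1}(\eta)$. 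Peeling off $D_{m-2}, \dots, D_1$ in turn produces $\operatorname{E}[\prod_{i=1}^m \operatorname{P}_i(\eta)]$, the uniform case; the Poisson case is identical with the uniform domain-Markov property replaced by its Poisson version and $\operatorname{P}_i$ replaced by $\operatorname{P}_i^{[\lambda]}$ (equivalently, obtained by integrating the uniform identity against the law of the total count). The part I expect to demand the most care is Step 1 — making the exchangeability argument clean at the level of unordered point sets and confirming that every conditional finite-volume model appearing along the way is genuinely well-defined — while the remaining manipulations are routine conditioning once the measurability statements of Step 2 are in place.
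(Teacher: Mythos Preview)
Your proof is correct and is essentially the same spatial-Markov argument as the paper's: the paper conditions simultaneously on $\bigl(\eta \setminus \bigcup_i D_i,\ \#(\eta \cap D_1),\ldots,\#(\eta \cap D_m)\bigr)$ and observes that the conditional law factorizes as a product of independent finite-volume models, whereas you peel off the regions $D_m, D_{m-1}, \ldots, D_1$ one at a time via the single-region domain-Markov property. The two organizations are equivalent, and your Steps~1--2 capture exactly the ingredients the paper uses (the DLR/exchangeability description of the conditional law inside $D_i$ and the localization of $\mathbbm 1_{E_j}$ and $\operatorname{P}_j$ to $(D'_j)_\rho$).
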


\begin{proof}
Denote $D = D_1 \cup D_2 \cup \ldots \cup D_m$ and $E = E_1 \cap E_2 \cap \ldots \cap E_m$.

Consider the uniform case. Write $\eta = \eta^{(s)}(D_0, \zeta)$
Define a map $T : \Omega(\mathbb R^d) \to \Omega(\mathbb R^d) \times \mathbb Z^n$ as follows:
$$T(\xi) = (\xi \setminus D', \#(\xi \cap D_1), \# (\xi \cap D_2), \ldots, \# (\xi \cap D_m).$$
For every $( \xi_0, s_1, s_2, \ldots, s_m) \in T(\Omega(\mathbb R^d))$  denote
$$\operatorname{P}(\xi_0, s_1, s_2, \ldots, s_m) = \left( \eta \in E \mid T(\eta) = ( \xi_0, s_1, s_2, \ldots, s_m) \right),$$
where the expression on the right-hand side is the regular conditional probability spanned by the map $T$.

Then the following holds:
\begin{eqnarray}
  \Pr \left( \eta \in E \right) & = & \operatorname{E} \left[ \operatorname{P}(T(\eta)) \right] \label{eq:markov_1} \\
  & = & \operatorname{E}\left[ \prod\limits_{i = 1}^{m} \operatorname{P}_i (\eta)\right]. \label{eq:markov_2}
\end{eqnarray}

Indeed,~\eqref{eq:markov_1} is the law of total expectation with $\Pr(\eta \in E)$ treated as $\operatorname{E} \left[ \mathbbm 1_E(\eta) \right]$. Further,
the regular conditional probability distribution of $\eta$, conditioned on $T(\eta) = T(\xi)$, coincides with the distribution of the random configuration
$$\theta(\xi) = (\xi \setminus D) \cup (\eta_1(\xi) \cap D_1) \cup (\eta_2(\xi) \cap D_2) \cup \ldots \cup (\eta_n(\xi) \cap D_m),$$
where all $\eta_i(\xi)$ are independent and each $\eta_i(\xi)$ is distributed as $\eta^{\left( \# (\xi \cap D_i) \right)}(D_i, \xi)$. But
$$\Pr(\theta(\xi) \in E) = \Pr(\eta_i(\xi) \in E_i \; \forall i) = \prod\limits_{i = 1}^{m} \operatorname{P}_i (\xi),$$
hence~\eqref{eq:markov_2} holds.

The proof of the Poisson case is essentially the same; thus the details are omitted. \end{proof}

\begin{rem*}
The lemma above is an instance of the so-called spatial Markov property. For instance, an analogous framework for the Ising model can be found in
[Friedli--Velenik, Exercise 3.11] and the remark following the cited exercise.
\end{rem*}

\begin{lem}\label{lem:index_elimination}
Let $I$ be a finite set of indices, $J \subseteq I$. Assume that each index $i \in I$ is supplied with a number $a_i \in [0, 1]$. Finally, let
$\mathcal F \subseteq 2^I$ be a family of subsets of $I$ such that the incidence $X \in \mathcal F$ implies that all supersets of $X$ are in $\mathcal F$, too.
Let $\mathcal F_J = \mathcal F \cap 2^J$. Then the following inequality holds.
$$\sum\limits_{X \in \mathcal F_J} \left( \prod\limits_{i \in X} a_i \prod\limits_{i \in J \setminus X} (1 - a_i) \right) \leq
\sum\limits_{X \in \mathcal F} \left( \prod\limits_{i \in X} a_i \prod\limits_{i \in I \setminus X} (1 - a_i) \right).$$
\end{lem}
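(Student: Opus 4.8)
The plan is to give the standard probabilistic reinterpretation of both sides. Introduce a family of independent Bernoulli random variables $\{B_i\}_{i \in I}$ with $\Pr(B_i = 1) = a_i$ and $\Pr(B_i = 0) = 1 - a_i$; this is legitimate precisely because each $a_i \in [0,1]$. Set $X = \{ i \in I : B_i = 1 \}$, a random subset of $I$, and $Y = X \cap J = \{ i \in J : B_i = 1 \}$, a random subset of $J$. By independence, for any prescribed $X_0 \subseteq I$ one has $\Pr(X = X_0) = \prod_{i \in X_0} a_i \prod_{i \in I \setminus X_0}(1 - a_i)$, and likewise $\Pr(Y = Y_0) = \prod_{i \in Y_0} a_i \prod_{i \in J \setminus Y_0}(1 - a_i)$ for $Y_0 \subseteq J$. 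Hence the right-hand side of the claimed inequality is exactly $\Pr(X \in \mathcal F)$, while the left-hand side is exactly $\Pr(Y \in \mathcal F_J)$, which equals $\Pr(Y \in \mathcal F)$ since $Y \subseteq J$ and $\mathcal F_J = \mathcal F \cap 2^J$.

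It then suffices to observe the event inclusion $\{ Y \in \mathcal F \} \subseteq \{ X \in \mathcal F \}$. Indeed $Y \subseteq X$ always holds, so whenever $Y \in \mathcal F$ the upward-closure hypothesis on $\mathcal F$ (every superset of a member is a member) forces $X \in \mathcal F$. Monotonicity of probability then gives $\Pr(Y \in \mathcal F) \le \Pr(X \in \mathcal F)$, which is the assertion.

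If one prefers a purely algebraic argument, the same proof runs by splitting each $X \subseteq I$ as a disjoint union $X = Y \cup Z$ with $Y = X \cap J$ and $Z = X \cap (I \setminus J)$, factoring the summand of the right-hand side accordingly, and summing over $Z$ first for each fixed $Y$: when $Y \in \mathcal F_J$ upward-closure gives $Y \cup Z \in \mathcal F$ for every $Z \subseteq I \setminus J$, so the inner sum $\sum_{Z} \prod_{i \in Z} a_i \prod_{i \in (I \setminus J) \setminus Z}(1 - a_i)$ collapses to $1$; when $Y \notin \mathcal F_J$ the corresponding inner sum is a sum of nonnegative terms, hence $\ge 0$. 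There is no genuine obstacle in either version; the two points to handle with care are the use of the upward-closure hypothesis (this is exactly where the inequality, rather than an equality, originates) and the nonnegativity that relies on $a_i, 1 - a_i \ge 0$.
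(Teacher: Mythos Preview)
Your proof is correct and essentially identical to the paper's: the paper carries out precisely your ``purely algebraic'' version, bounding the right-hand side below by the double sum over $X \in \mathcal F_J$ and $Y \in 2^{I\setminus J}$ (using upward closure), then collapsing the inner sum to $1$. Your probabilistic reformulation is just a repackaging of the same computation.
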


\begin{proof}
Clearly, we have
$$
\sum\limits_{X \in \mathcal F} \left( \prod\limits_{i \in X} a_i \prod\limits_{i \in I \setminus X} (1 - a_i) \right) \geq
\sum\limits_{X \in \mathcal F_J}\sum\limits_{Y \in 2^{I \setminus J}} \left( \prod\limits_{i \in X \cup Y} a_i \prod\limits_{i \in I \setminus (X \cup Y)} (1 - a_i) \right).
$$

On the other hand, for every $X \in \mathcal F_J$ one has
\begin{multline*}
\sum\limits_{Y \in 2^{I \setminus J}} \left( \prod\limits_{i \in X \cup Y} a_i \prod\limits_{i \in I \setminus (X \cup Y)} (1 - a_i) \right) = \\
\prod\limits_{i \in X} a_i \prod\limits_{i \in I \setminus X} (1 - a_i) \cdot \sum\limits_{Y \in 2^{I \setminus J}} \left( \prod\limits_{i \in Y} a_i \prod\limits_{i \in (I \setminus J) \setminus Y} (1 - a_i) \right) = \\
\prod\limits_{i \in X} a_i \prod\limits_{i \in I \setminus X} (1 - a_i) \prod \limits_{i \in I \setminus J} (a_i + (1 - a_i)) = \prod\limits_{i \in X} a_i \prod\limits_{i \in I \setminus X} (1 - a_i).
\end{multline*}
Hence the lemma follows.
\end{proof}

\begin{rem*}
Lemma~\ref{lem:index_elimination} may be considered as a special, simpler, case of the Harris inequality~\cite{Har}. But this case allows for a short proof,
which we provide for the convenience of a reader.
\end{rem*}

\subsection{The case of large $s$}

This is the harder of the two cases considered. Before we proceed with the proof of Lemma~\ref{lem:dense}, let us recall the assumptions
we have already accepted. Namely, we assume that a given function $\Delta(\xi, \xi', D)$ satisfies the defect function properties at
a fixed level $\varepsilon \in (0, 1)$ (see Definition~\ref{def:defect_prop}). The variables $\rho, c, C$ have the same meaning as in Definition~\ref{def:defect_prop}.
In addition, we will use the parameter $K$ defined by the Thin Box Lemma.

The argument is arranged as follows: for every $L > 10^4 K$ we determine the range of pairs $(p, q)$ sufficient to ensure either (LC1) or (LC2) as in the statement
of the Large Circuit Lemma. We argue as if the boundary conditions $\zeta$ are fixed, however, none of the estimates below depends on $\zeta$.

In the above notation, let a positive integer $n$ satisfy
$$20nK < L \leq 20(n + 1)K.$$
(By assumption on $L$, we have $n \geq 500$.)
Consider the two families, $R_{i*}$ and $R'_{i*}$ of vertical rectangular boxes and the two families, $R_{*i}$ and $R'_{*i}$ of horizontal
rectangular boxes defined as follows:
\begin{equation}\label{eq:boxes}
\begin{array}{lll}
R'_{i*} & = & [(20i - 5)K, (20i + 15)K] \times [-20nK, 20nK], \\
R_{i*} & = & [(20i - 1)K, (20i + 1)K] \times [-20nK, 20nK], \\
R'_{*i} & = & [-20nK, 20nK] \times [(20i - 5)K, (20i + 5)K],\\
R_{*i} & = & [-20nK, 20nK] \times [(20i - 1)K, (20i + 1)K],
\end{array}
\end{equation}
where $i$ runs through the set $\{ -n + 1, -n + 2, \ldots, n - 1 \}$.
Clearly, $R_{i*} \subset R'_{i*}$. Further, a direct analogue of the Thin Box Lemma (obtained from the original Thin Box Lemma by an appropriate translation
or rotation) holds for every pair $(R_{i*}, R'_{i*})$ of vertical boxes as well as for every pair $(R_{*i}, R'_{*i})$ of horizontal boxes.

Given a configuration $\xi \in \Omega(\mathbb R^2)$, we will write $s_i(\xi) = \#(\xi \cap R_{i*})$.

Let us turn to the proof of Lemma~\ref{lem:dense}. We present the argument as a sequence of steps.

\noindent{\bf Step 1.} Construction of saturators.

This step is devoted to construction of saturated configurations that will be passed to the function $\Delta(\cdot, \cdot, \cdot)$ as the second
argument. Since our argument relies on the defect function properties, this step is crucial. Propositions~\ref{prop:saturator_1} and~\ref{prop:saturator_2}
below are the building blocks of our construction, while Proposition~\ref{prop:saturator_3} shows how these building blocks are combined to yield
a saturated configuration.

\begin{prop}\label{prop:saturator_1}
There exists a map
$$\theta_* : \Omega^(\mathbb R^2) \to \Omega(\mathbb R^2)$$
such that for every $\xi \in \Omega^(\mathbb R^2)$ the following properties hold:
\begin{enumerate}
  \item $\xi \cap \theta_*(\xi) = \varnothing$.
  \item $\xi \cup \theta_*(\xi) \in \Omega(\mathbb R^2)$.
  \item For each $i \in \{ -n + 1, -n + 2, \ldots, n - 1 \}$ one has $\dist (\theta_*(\xi), R_{i*}) \geq \rho$.
  \item If $y \in \mathbb R^2$, then one has either $\dist(y, \theta_*(\xi) \cup \xi) \leq 2$, or $\dist(y, R_{i*}) < \rho$ for some $i \in \{ -n + 1, -n + 2, \ldots, n - 1 \}$.
\end{enumerate}
\end{prop}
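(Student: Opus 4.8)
The plan is to obtain $\theta_*(\xi)$ by greedily packing new points into the region of the plane that keeps away from the thin boxes $R_{i*}$. Set $W = \{ y \in \mathbb{R}^2 : \dist(y, R_{i*}) \ge \rho \text{ for every } i \}$; since the $R_{i*}$ are finitely many disjoint rectangles, $W$ is closed and is the closure of its interior. Unwinding the definition of $W$, one sees that requiring $\theta_*(\xi) \subseteq W$ is precisely Property~3, and that Property~4 is equivalent to the assertion that $\xi \cup \theta_*(\xi)$ is $2$-dense in $W$, i.e.\ $\dist(y, \xi \cup \theta_*(\xi)) \le 2$ for every $y \in W$. So it is enough to produce a configuration $\psi \supseteq \xi$ with $\psi \setminus \xi \subseteq W$ that is $2$-dense in $W$, and then set $\theta_*(\xi) = \psi \setminus \xi$.

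To keep $\theta_*$ an honest (indeed measurable) map, rather than invoking Zorn's Lemma, I would build $\psi$ by a deterministic greedy sweep along a fixed sequence $q_1, q_2, \ldots$ dense in $\mathbb{R}^2$: put $\psi^{(0)} = \xi$; for $j \ge 1$, let $\psi^{(j)} = \psi^{(j-1)} \cup \{q_j\}$ if $q_j \in W$ and $\dist(q_j, \psi^{(j-1)}) > 2$, and $\psi^{(j)} = \psi^{(j-1)}$ otherwise; finally let $\psi = \bigcup_{j \ge 1} \psi^{(j)}$. Then Property~1 ($\xi \cap \theta_*(\xi) = \varnothing$) is automatic. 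For Property~2 I would note that each $\psi^{(j)} \in \Omega(\mathbb{R}^2)$, since a new point is adjoined only when it lies at distance $> 2$ from every point retained so far; consequently any two distinct points of $\psi$ already occur together in some $\psi^{(j)}$, hence are $> 2$ apart, and $\psi$ is locally finite by the elementary packing bound, so $\psi \in \Omega(\mathbb{R}^2)$. Property~3 holds because every adjoined point lies in $W$.

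The crux is Property~4, the $2$-density of $\psi$ in $W$. Here I would argue by contradiction: if some $y \in W$ had $\dist(y, \psi) = 2 + 2r$ with $r > 0$, then — $W$ being the closure of its interior and the $q_j$ dense in $\mathbb{R}^2$ — there is an index $j$ with $q_j \in W$ and $\|q_j - y\| < r$. When $q_j$ was processed we had $\psi^{(j-1)} \subseteq \psi$, so
$$\dist(q_j, \psi^{(j-1)}) \ \ge\ \dist(q_j, \psi) \ \ge\ \dist(y, \psi) - \|y - q_j\| \ >\ (2 + 2r) - r \ =\ 2 + r \ >\ 2,$$
and therefore the sweep adjoins $q_j$, so $q_j \in \psi$; but then $\dist(y, \psi) \le \|y - q_j\| < r$, contradicting $\dist(y, \psi) = 2 + 2r$. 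Hence $\dist(y, \psi) \le 2$ for all $y \in W$, while for $y \notin W$ one has $\dist(y, R_{i*}) < \rho$ for some $i$ by the very definition of $W$ — exactly the dichotomy in Property~4.

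I expect the only genuinely delicate point to be the density step inside $W$, especially at points of $\partial W$: one has to use that $W$, being the complement of finitely many disjoint open $\rho$-neighborhoods of rectangles, equals the closure of its interior, so that members of the dense sequence lying in the interior of $W$ accumulate at every point of $W$. Apart from that the verification is routine bookkeeping. We note that the separation $K > 100\rho$ between the enlarged strips $(R_{i*})_{\rho}$ plays no role in this proposition; it will be needed only later, when these building blocks are assembled into a configuration saturated throughout the plane.
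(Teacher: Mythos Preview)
Your proof is correct and follows essentially the same greedy-saturation idea as the paper's proof: build up $\theta_*(\xi)$ by successively adjoining admissible points until no more can be added in the allowed region. The paper sweeps outward through concentric balls $B_1(\mathbf 0), B_2(\mathbf 0), \ldots$ and adds ``as many points as possible'' in each, whereas you process a fixed countable dense sequence one point at a time; your variant is in fact more explicit and makes the verification of Property~4 (and the measurability remark) cleaner than the paper's sketch.
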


\begin{proof}
Assume that $\Hat{\theta}$ is a set satisfying the properties 1--3.
If the property 4 does not hold for $\Hat{\theta}$ and some point $y \in \mathbb R^2$, then one can replace $\Hat{\theta}$ by
$\Hat{\theta} \cup \{ y \}$, and the properties 1--3 will still hold. We call such replacement an {\it elementary expansion} of
$\Hat{\theta}$ by $y$.

Let us proceed by constructing $\theta_*$ using a greedy algorithm. Set $\Hat{\theta} = \varnothing$, as the properties 1--3
are evident. Next, perform as many consecutive expansions of $\Hat{\theta}$ by points $y \in B_1(\mathbf 0)$ as possible.
After that, perform as many consecutive elementary expansions of $\Hat{\theta}$ by points $y \in B_2(\mathbf 0)$ as possible.
Repeating this consecutively for each of the concentric balls $B_3(\mathbf 0), B_4(\mathbf 0), \ldots$ yields a limit shape
$\theta_*$ for $\Hat(\theta)$. One can see that $\theta_*$ is as required.
\end{proof}

\begin{rem*}
The question whether $\theta_*$ is measurable is not addressed in Proposition~\ref{prop:saturator_1}, because this is insufficient for the
further argument. However, the construction in the proof shows that $\theta_*$ can be constructed as a measurable map. The issue of
measurability, is, however, important for the next Proposition~\ref{prop:saturator_2}.
\end{rem*}

\begin{prop}\label{prop:saturator_2}
Let $\xi \in \Omega(\mathbb R^2)$, $i \in \{ -n + 1, -n + 2, \ldots, n - 1 \}$ and $s_i(\xi) = \#(\xi \cap R_{i*})$.
Then there exists a measurable map
$$\theta_i[\xi] : \Omega^{s_i}(R_{i*}, \xi) \to \Omega(\mathbb R^2)$$
such that for every $\eta \in \Omega^{s_i}(R_{i*}, \xi)$ the following holds:
\begin{enumerate}
  \item $\eta \cap \theta_i[\xi](\eta) = \varnothing$.
  \item $\eta \cup \theta_*(\xi) \cup \theta_i[\xi](\eta) \in \Omega(\mathbb R^2)$.
  \item For each $x \in \theta_i[\xi](\eta)$ one has $\dist (x, R_{i*}) < \rho$.
  \item If $y \in \mathbb R^2$, then one has either $\dist(y, \eta \cup \theta_*(\xi) \cup \theta_i[\xi](\eta)) \leq 2$,
	      or $\dist(y, R_{i*}) \geq \rho$.
\end{enumerate}
\end{prop}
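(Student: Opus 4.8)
The plan is to build $\theta_i[\xi](\eta)$ by the same greedy saturation idea used in the proof of Proposition~\ref{prop:saturator_1}, but localized to the \emph{bounded} open set $U_i := \{x \in \mathbb R^2 : \dist(x, R_{i*}) < \rho\}$ and carried out relative to the already-constructed ``obstacle'' configuration $O(\eta) := \eta \cup \theta_*(\xi)$. The point that makes this possible is the separation built into Proposition~\ref{prop:saturator_1}: since $\theta_*(\xi)$ sits at distance $\ge \rho$ from $R_{i*}$, it does not intrude into $U_i$, while away from $U_i$ the configuration is already saturated by $\theta_*$. I would begin by recording that $O(\eta) \in \Omega(\mathbb R^2)$: indeed $\eta \in \Omega(\mathbb R^2)$ by hypothesis, $\xi \cup \theta_*(\xi) \in \Omega(\mathbb R^2)$ by Proposition~\ref{prop:saturator_1}(2), and since $\eta$ and $\xi$ coincide outside $R_{i*}$ while every point of $\theta_*(\xi)$ is at distance $\ge \rho > 2$ from $R_{i*}$ by Proposition~\ref{prop:saturator_1}(3), every two distinct points of $\eta \cup \theta_*(\xi)$ are at distance $> 2$; the same property also gives $\theta_*(\xi) \cap U_i = \varnothing$.

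Next, fix once and for all an enumeration $q_1, q_2, q_3, \dots$ of a countable dense subset of $U_i$ (say, its rational points); this sequence depends only on $i, K, n, \rho$, not on $\eta$. Given $\eta \in \Omega^{s_i}(R_{i*}, \xi)$, run the greedy process: set $\Theta_0 = \varnothing$, and for $j = 1, 2, \dots$ let $\Theta_j = \Theta_{j-1} \cup \{q_j\}$ if $\dist(q_j, O(\eta) \cup \Theta_{j-1}) > 2$ and $\Theta_j = \Theta_{j-1}$ otherwise; then put $\theta_i[\xi](\eta) = \bigcup_{j \ge 1}\Theta_j$. All added points lie in the bounded set $U_i$ and are pairwise at distance $> 2$, so only finitely many $q_j$ are added and $\theta_i[\xi](\eta)$ is a finite subset of $\mathbb R^2$. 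Measurability in $\eta$ is then immediate: the event ``$q_j$ is added'' is a finite Boolean combination of the fixed conditions involving $\theta_*(\xi)$ and of the events $\{\dist(q_l, \eta) > 2\} = \{\#(\eta \cap \overline{B_2(q_l)}) = 0\}$, each of which is measurable; hence $\eta \mapsto \#(\theta_i[\xi](\eta) \cap A)$ is measurable for every Borel $A$.

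It remains to verify properties 1--4. Property 3 is built in, since $q_j \in U_i$ for every $j$. Property 1 follows because any added $q_j$ satisfies $\dist(q_j, \eta) > 2 > 0$, hence $q_j \notin \eta$. For property 2, $O(\eta)$ is a packing by the first step, and inducting on $j$ shows $O(\eta) \cup \Theta_j$ is a packing for every $j$; since any two points of $\eta \cup \theta_*(\xi) \cup \theta_i[\xi](\eta)$ appear together at some finite stage, the whole union lies in $\Omega(\mathbb R^2)$ (and in particular $\theta_i[\xi](\eta) \in \Omega(\mathbb R^2)$). For property 4 I argue by contradiction: if $y$ satisfies $\dist(y, R_{i*}) < \rho$ but $\dist(y, \eta \cup \theta_*(\xi) \cup \theta_i[\xi](\eta)) > 2$, pick $\delta > 0$ with $\delta < \rho - \dist(y, R_{i*})$ and small enough that the whole ball $B_\delta(y)$ stays at distance $> 2$ from $\eta \cup \theta_*(\xi) \cup \theta_i[\xi](\eta)$; then $B_\delta(y) \subseteq U_i$, so $q_j \in B_\delta(y)$ for some $j$, and since $\Theta_{j-1} \subseteq \theta_i[\xi](\eta)$ we get $\dist(q_j, O(\eta) \cup \Theta_{j-1}) > 2$, so the greedy rule must have added $q_j$ — contradicting $\dist(q_j, \theta_i[\xi](\eta)) > 2 > 0$.

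I expect the difficulty here to be bookkeeping rather than anything deep. The two points requiring genuine care are (i) obtaining a \emph{measurable} selection, which we secure by running the greedy sweep along a single fixed countable dense sequence instead of making ad hoc continuous choices; and (ii) the saturation claim in property 4, which rests on the elementary but essential observation that addability can only be destroyed and never created as the sweep progresses, combined with the strictness of the defining inequality $\dist(x, R_{i*}) < \rho$, which is exactly what lets us fit the small ball $B_\delta(y)$ inside $U_i$.
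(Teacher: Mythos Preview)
Your argument is correct and is essentially the same greedy saturation as the paper's, which merely states that the proof is similar to that of Proposition~\ref{prop:saturator_1} with measurability handled in a standard way. Your choice to sweep along a fixed countable dense sequence in $U_i$ is a clean way to make that ``standard'' measurability claim explicit, and the verification of properties 1--4 is accurate.
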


\begin{proof}
The proof is similar to the one of Proposition~\ref{prop:saturator_1}. The measurability throughout the algorithm is maintained in a standard way.
\end{proof}

\begin{prop}\label{prop:saturator_3}
Let $\xi \in \Omega^{(s)}(Q_L, \zeta)$, $i \in \{ -n + 1, -n + 2, \ldots, n - 1 \}$, $s_i = \#(\xi \cap R_{i*})$. Define
$$\psi_i[\xi](\eta) = \eta \cup \theta_*(\xi) \cup \theta_i[\xi](\eta),$$
where $\eta$ runs through $\Omega^{s_i}(R_{i*}, \xi)$. Then $\psi_i[\xi]$ is an $(R'_{i*}, \rho)$-saturator over $\Omega^{(s)}(R_{i*}, \xi)$.
\end{prop}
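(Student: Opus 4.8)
The plan is to check the two defining clauses of an $(R'_{i*}, \rho)$-saturator directly, treating Propositions~\ref{prop:saturator_1} and~\ref{prop:saturator_2} as black boxes. Measurability of $\psi_i[\xi]$ is immediate, since $\eta \mapsto \eta$ is the identity, $\theta_*(\xi)$ does not depend on $\eta$, and $\theta_i[\xi]$ is measurable by Proposition~\ref{prop:saturator_2}; moreover $\psi_i[\xi](\eta) \in \Omega(\mathbb R^2)$ for every $\eta$ by part (2) of that same proposition, and $\psi_i[\xi](\eta) \supseteq \eta$ by construction. Thus everything reduces to the saturation clause: for every $y \in \mathbb R^2$ with $\dist(y, R'_{i*}) \leq \rho$ one must produce a point of $\psi_i[\xi](\eta)$ within distance $2$ of $y$.

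I would dichotomize on $\dist(y, R_{i*})$. If $\dist(y, R_{i*}) < \rho$, the desired bound is exactly part (4) of Proposition~\ref{prop:saturator_2}, since $\psi_i[\xi](\eta) = \eta \cup \theta_*(\xi) \cup \theta_i[\xi](\eta)$ is precisely the set appearing there. If $\dist(y, R_{i*}) \geq \rho$, I would first note, reading off the coordinates in~\eqref{eq:boxes}, that $R'_{i*}$ is separated from every $R_{j*}$ with $j \neq i$ by a horizontal gap of at least $4K$ (the extreme cases being $j = i+1$ and $j = i-1$, which give $4K$ and $14K$); since $\dist(y, R'_{i*}) \leq \rho$ and $K > 100\rho$, this forces $\dist(y, R_{j*}) > \rho$ for every $j$, including $j = i$ by hypothesis. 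Part (4) of Proposition~\ref{prop:saturator_1} then yields a point $x \in \theta_*(\xi) \cup \xi$ with $\|x - y\| \leq 2$. If $x \in \theta_*(\xi) \subseteq \psi_i[\xi](\eta)$ we are done; if $x \in \xi$, then $\|x - y\| \leq 2 < \rho \leq \dist(y, R_{i*})$ puts $x$ outside $R_{i*}$, so $x \in \xi \setminus R_{i*} = \eta \setminus R_{i*} \subseteq \psi_i[\xi](\eta)$ because $\eta \in \Omega^{s_i}(R_{i*}, \xi)$, and again we are done.

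The one point that is not pure definition-chasing --- and the step I would write out most carefully --- is the transfer just described: Proposition~\ref{prop:saturator_1} controls $\theta_*(\xi) \cup \xi$, whereas the saturator is built from $\eta$, not $\xi$, and the two configurations can differ inside $R_{i*}$; the remedy is precisely the observation that when $\dist(y, R_{i*}) \geq \rho$ any point of $\xi$ near $y$ lies in the region $\mathbb R^2 \setminus R_{i*}$ where $\xi$ and $\eta$ agree. The other mild subtlety is the $4K$ separation estimate between $R'_{i*}$ and the neighbouring boxes, but this is a one-line computation from~\eqref{eq:boxes} using $K > 100\rho$. I do not expect any serious obstacle beyond keeping the roles of $\xi$ (boundary data) and $\eta$ (the argument of $\psi_i[\xi]$) straight.
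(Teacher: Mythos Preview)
Your proof is correct and follows essentially the same route as the paper's. The paper argues by contrapositive --- assuming $\dist(y,\psi_i[\xi](\eta))>2$ and deducing $\dist(y,R'_{i*})>\rho$ via the existence of some $j\neq i$ with $\dist(y,R_{j*})<\rho$ --- while you argue directly by dichotomizing on $\dist(y,R_{i*})$; but the underlying mechanism (Proposition~\ref{prop:saturator_2}(4) near $R_{i*}$, Proposition~\ref{prop:saturator_1}(4) plus the $\xi\setminus R_{i*}=\eta\setminus R_{i*}$ observation away from it, and the $K>100\rho$ separation estimate) is identical, and you have in fact made explicit the $\xi$-versus-$\eta$ transfer that the paper hides in the phrase ``an immediate consequence of the definitions.''
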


\begin{proof}
Let $y \in \mathbb R^2$ satisfy $\dist(y, \psi_i[\xi](\eta)) > 2$. Then there exists an index $j \neq i$ such that $\dist (y, R_{j*}) \leq \rho$.
(This is an immediate consequence of the definitions of $\theta_*(\xi)$ and $\theta_i[\xi](\eta)$.) Therefore
$$\dist (y, R'_{i*}) \geq  10K - \rho > 2 \rho,$$
which finishes the proof.
\end{proof}

\noindent{\bf Step 2.} With $\xi$ fixed, the defects of $\psi_i[\xi]$ are uniformly bounded for most $i$.

The aim of this step is to prove certain consequences of the condition $s > \frac{(2L)^2}{2\sqrt{3}} - CL$ of Lemma~\ref{lem:dense}.

\begin{prop}\label{prop:counting}
Let $C > 0$. Then there exists $C' > 0$ such that the following holds. If the $(\xi, \xi', Q_L)$
is a defect-measuring triple in the domain of arguments of $\Delta$ and if the inequality $\# (\xi \cap Q_L) > \frac{|Q_L|}{2\sqrt{3}} - CL$
holds then one has
$$\Delta(\xi, \xi', Q_L) < C'L.$$
\end{prop}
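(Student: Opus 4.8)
The plan is to read the statement off directly from the Point Counting property (the last item of Definition~\ref{def:defect_prop}), which was tailored for exactly this kind of estimate. First recall that in the two-dimensional setting the optimal packing density, in the sense of points per unit area, is $\alpha(2) = \frac{1}{2\sqrt{3}}$: as recalled before the definition of the defect, the Voronoi cell of a point of the triangular lattice is a regular hexagon of area $2\sqrt{3}$, so $\frac{1}{\alpha(2)} = 2\sqrt{3}$.

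Now suppose $(\xi, \xi', Q_L)$ is a defect-measuring triple on which $\Delta$ is defined and $\#(\xi \cap Q_L) > \frac{|Q_L|}{2\sqrt{3}} - CL$. Since $\Delta$ is defined on this triple, the Point Counting property applies (with $d = 2$, so that the error term is $C_{cnt} L$) and gives
$$\Delta(\xi, \xi', Q_L) \leq |Q_L| - 2\sqrt{3} \cdot \#(\xi \cap Q_L) + C_{cnt} L.$$
Substituting the hypothesis in the form $-2\sqrt{3} \cdot \#(\xi \cap Q_L) < -|Q_L| + 2\sqrt{3}\, C L$ yields
$$\Delta(\xi, \xi', Q_L) < 2\sqrt{3}\, C L + C_{cnt} L,$$
so the proposition holds with $C' = 2\sqrt{3}\, C + C_{cnt}$ (one may inflate $C'$ slightly if a strict inequality with margin is desired).

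I do not expect any genuine obstacle here: the whole content is that the hypothesis $\#(\xi \cap Q_L) \gtrsim \frac{|Q_L|}{2\sqrt{3}}$ exactly cancels the leading term $|Q_L|$ on the right-hand side of Point Counting, leaving only the $O(L)$ boundary contribution $C_{cnt} L$ together with the $O(L)$ slack $2\sqrt{3}\,CL$ coming from the defect in the point count. The only bookkeeping points are the identification $\frac{1}{\alpha(2)} = 2\sqrt{3}$ and the observation that membership of $(\xi, \xi', Q_L)$ in the domain of $\Delta$ is precisely what licenses the application of Point Counting; everything else is the one-line computation above.
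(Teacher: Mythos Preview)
Your proof is correct and follows exactly the same route as the paper: apply the Point Counting property with $\frac{1}{\alpha(2)} = 2\sqrt{3}$, substitute the hypothesis on $\#(\xi \cap Q_L)$ to cancel the $|Q_L|$ term, and take $C' = 2\sqrt{3}\,C + C_{cnt}$.
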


\begin{proof}
Denote $s = \# (\xi \cap Q_L)$. Then, by the Counting property of $\Delta$,
$$\Delta(\xi, \xi', Q_L) \leq |Q_L| - s \cdot 2\sqrt{3} + C_{cnt}L \leq (2C\sqrt{3} + C_{cnt})L.$$
Hence it is sufficient to choose $C' = 2C\sqrt{3} + C_{cnt}$.
\end{proof}

Proposition~\ref{prop:counting} can be applied to the saturators $\psi_i[\xi]$ as follows.

\begin{prop}\label{prop:most_defects_bounded}
Let $C > 0$. Then there exists $C'' > 0$ such that the following holds. If a configuration $\xi \in \Omega(\mathbb R^2)$ satisfies
the inequality $\# (\xi \cap Q_L) > \frac{|Q_L|}{2\sqrt{3}} - CL$ and if $s_i = \#(\xi \cap R_{i*})$ then one has
$$ \# \{ i : \text{$i \in \{ -n + 1, -n + 2, \ldots, n - 1 \}$ and $(\xi, s_i) \in \Dfc_{C''}(R_{i*}, R'_{i*})$} \} \geq 2n - 1 - \frac{n}{20}.$$
\end{prop}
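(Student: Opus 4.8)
The plan is to reduce everything to one \emph{global} defect bound via Proposition~\ref{prop:counting} and then transfer it, box by box, to the \emph{local} defects that govern membership in $\Dfc_{C''}$. First I would assemble a single saturated extension of $\xi$: put
$$\Psi(\xi)\ :=\ \xi\ \cup\ \theta_*(\xi)\ \cup\ \bigcup_{j=-n+1}^{n-1}\theta_j[\xi](\xi).$$
The three kinds of building blocks are pairwise compatible: the sets $\theta_j[\xi](\xi)$ lie in the neighbourhoods $(R_{j*})_{\rho}$, which for distinct $j$ are at distance $\ge 18K-2\rho$ apart, and each $\theta_j[\xi](\xi)$ is compatible with $\xi\cup\theta_*(\xi)$ by Proposition~\ref{prop:saturator_2}(2). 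Hence $\Psi(\xi)\in\Omega(\mathbb R^2)$, clearly $\Psi(\xi)\supseteq\xi$, and property (4) of Propositions~\ref{prop:saturator_1} and~\ref{prop:saturator_2}, used together, shows $\Psi(\xi)$ is saturated in all of $\mathbb R^2$. In particular $\Delta(\xi,\Psi(\xi),Q_L)$ is defined, and since $\#(\xi\cap Q_L)>\frac{|Q_L|}{2\sqrt3}-CL$, Proposition~\ref{prop:counting} gives $\Delta(\xi,\Psi(\xi),Q_L)<C'L$ with $C'=2\sqrt3\,C+C_{cnt}$. Taking the boxes $R'_{i*}$ to be open (as in the definition of $R'(K,n)$), they are pairwise disjoint and contained in $Q_L$, so by Additivity and Monotonicity $\sum_{i=-n+1}^{n-1}\Delta(\xi,\Psi(\xi),R'_{i*})\le\Delta(\xi,\Psi(\xi),Q_L)<C'L$.

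The crux is the estimate: for every $i\in\{-n+1,\dots,n-1\}$ and every $\eta\in\Omega^{(s_i)}(R_{i*},\xi)$,
$$\bigl|\,\Delta(\eta,\psi_i[\xi](\eta),R'_{i*})-\Delta(\xi,\Psi(\xi),R'_{i*})\,\bigr|\ \le\ C_\dagger,$$
with $C_\dagger$ depending only on $K$ and $\rho$. To prove it I would first observe that $\psi_i[\xi](\zeta)\setminus (R_{i*})_{\rho}=(\xi\cup\theta_*(\xi))\setminus (R_{i*})_{\rho}$ for every $\zeta\in\Omega^{(s_i)}(R_{i*},\xi)$ — because $\zeta$ and $\xi$ agree off $R_{i*}$, $\theta_*(\xi)$ does not depend on $\zeta$, and $\theta_i[\xi](\zeta)\subseteq(R_{i*})_{\rho}$ — so $\psi_i[\xi](\eta)$ and $\psi_i[\xi](\xi)$ agree outside $(R_{i*})_{\rho}$. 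Next, $\Psi(\xi)\bigtriangleup\psi_i[\xi](\xi)=\bigcup_{j\ne i}\theta_j[\xi](\xi)$, which lies at distance $\ge 4K-\rho>\rho$ from $R'_{i*}$, so Localization gives $\Delta(\xi,\Psi(\xi),R'_{i*})=\Delta(\xi,\psi_i[\xi](\xi),R'_{i*})$. It remains to compare $\Delta(\eta,\psi_i[\xi](\eta),R'_{i*})$ with $\Delta(\xi,\psi_i[\xi](\xi),R'_{i*})$, and here I would use the explicit form~\eqref{eq:defect_defn}. For $\zeta\in\Omega^{(s_i)}(R_{i*},\xi)$ the cells $\mathcal V_{\psi_i[\xi](\zeta)}(x)$ with $x\in\psi_i[\xi](\zeta)\cap R'_{i*}$ are contained in $B_2(x)$ by Proposition~\ref{prop:voronoi_cell_rad}, so their union $U(\zeta)$ — the set of points whose nearest $\psi_i[\xi](\zeta)$-point lies in $R'_{i*}$ — satisfies $R'_{i*}\setminus(\partial R'_{i*})_{2}\subseteq U(\zeta)\subseteq (R'_{i*})_{2}$, and $\Delta(\zeta,\psi_i[\xi](\zeta),R'_{i*})=|U(\zeta)|-2\sqrt3\,\#(\zeta\cap R'_{i*})$. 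Since $\eta,\xi$ agree off $R_{i*}\subset R'_{i*}$ and carry the same number $s_i$ of points in $R_{i*}$, we get $\#(\eta\cap R'_{i*})=\#(\xi\cap R'_{i*})$, so the difference of the two defects equals $|U(\eta)|-|U(\xi)|$, which is at most $|U(\eta)\bigtriangleup U(\xi)|$. Because $\psi_i[\xi](\eta)$ and $\psi_i[\xi](\xi)$ agree outside $(R_{i*})_{\rho}$ and both are saturated on $(R'_{i*})_{2}$ (there the nearest-neighbour distance is $\le 2$), a point of $U(\eta)\bigtriangleup U(\xi)$ must lie in $(\partial R'_{i*})_{2}\cap (R_{i*})_{\rho+2}$; and since $R_{i*}$ is at distance at least $4K>\rho+4$ from each of the two long sides of $R'_{i*}$, this intersection is confined to a width-$O(\rho)$ neighbourhood of the two short sides of $R'_{i*}$, each of length $20K$. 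Hence $|U(\eta)\bigtriangleup U(\xi)|\le C_\dagger$, which proves the estimate.

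Granting the estimate, set $T:=800\,C'K$ and $C'':=T+C_\dagger$; both depend only on $C$ and on the fixed constants $\varepsilon,\rho,C_{cnt},K$. If $\Delta(\xi,\Psi(\xi),R'_{i*})\le T$, then the estimate yields $\sup_{\eta\in\Omega^{(s_i)}(R_{i*},\xi)}\Delta(\eta,\psi_i[\xi](\eta),R'_{i*})\le T+C_\dagger=C''$, so the saturator $\psi_i[\xi]$ of Proposition~\ref{prop:saturator_3} witnesses $(\xi,s_i)\in\Dfc_{C''}(R_{i*},R'_{i*})$. Therefore the ``bad'' indices lie in $\{i:\Delta(\xi,\Psi(\xi),R'_{i*})>T\}$, and by the first paragraph this set has fewer than $C'L/T=L/(800K)$ elements; using $L\le 20(n+1)K\le 40nK$ (valid since $n\ge 500$, hence $n\ge1$) this is $\le n/20$. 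As there are exactly $2n-1$ admissible indices, $\#\{i:(\xi,s_i)\in\Dfc_{C''}(R_{i*},R'_{i*})\}\ge 2n-1-n/20$, which is the assertion.

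The main difficulty is the displayed estimate, and within it the purely geometric point that re-saturating a configuration after an arbitrary point-count-preserving modification inside the thin box $R_{i*}$ perturbs the defect on the wider box $R'_{i*}$ only by $O(K,\rho)$, independently of the common height $\sim nK$ of the boxes. This is exactly where the placement of $R_{i*}$ far from the two long sides of $R'_{i*}$ is essential, and it is the one step that appeals to the explicit Voronoi form~\eqref{eq:defect_defn} of $\Delta$ rather than only to the abstract defect-function properties.
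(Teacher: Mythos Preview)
Your argument is correct. The crux estimate is sound: once you write $\Delta(\zeta,\psi_i[\xi](\zeta),R'_{i*})=|U(\zeta)|-2\sqrt3\,\#(\zeta\cap R'_{i*})$, the point-count term cancels (same $s_i$), and the Voronoi region $U(\zeta)$ can change only near the short sides of $R'_{i*}$ because the two saturated extensions agree outside $(R_{i*})_{\rho}$ and $R_{i*}$ stays at least $4K>\rho+4$ away from the long sides of $R'_{i*}$. This gives a uniform $O(K)$ bound on the discrepancy and the rest is a Chebyshev-type count.

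The paper, however, proves the proposition without your stability estimate and without touching the explicit formula~\eqref{eq:defect_defn}. The trick is to avoid comparing $\eta$ with $\xi$ box by box and instead manufacture \emph{one} global configuration that is simultaneously bad in every bad box. Concretely: for each bad index $i$ pick a witness $\eta_i\in\Omega^{(s_i)}(R_{i*},\xi)$ with $\Delta(\eta_i,\psi_i[\xi](\eta_i),R'_{i*})>C''$, splice all the $\eta_i$'s together into a single configuration $\eta$ (replacing $\xi$ inside each $R_{i*}$), and build its global saturation $\eta'$. Since the $R_{j*}$'s are far apart, Localization alone gives $\Delta(\eta,\eta',R'_{i*})=\Delta(\eta_i,\psi_i[\xi](\eta_i),R'_{i*})>C''$ for each bad $i$; and since $\#(\eta\cap Q_L)=\#(\xi\cap Q_L)$, Proposition~\ref{prop:counting} applies directly to $\eta$ and bounds the number of bad indices. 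So the paper's route stays entirely inside the abstract defect-function axioms (Localization, Additivity, Monotonicity, Point counting), which matters if one wants to run the argument in higher dimensions with a different defect function. Your route is more hands-on and trades that generality for a concrete geometric picture of why resampling inside a thin box cannot inflate the defect on the wider box by more than a constant.
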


\begin{proof}
Let $I \subseteq \{ -n + 1, -n + 2, \ldots, n - 1 \}$ be the set of indices $i$ satisfying
$$(\xi, s_i) \notin \Dfc_{C''}(R_{i*}, R'_{i*}).$$
Assume $i \in I$. By definition of $\Dfc_{C''}(R_{i*}, R'_{i*})$, this means that the defect of $\eta^{(s_i)}(R_{i*}, \xi)$ with respect to $R_{i*}$ cannot be bounded by $C''$.
Thus, according to Definition~\ref{def:bd_defect} with $\phi = \psi_i[\xi]$, one can find a configuration $\eta_i \in \Omega^{(s_i)}(\xi, R_{i*})$
such that
\begin{equation}\label{eq:large_defect}
\Delta(\eta_i, \psi_i[\xi](\eta_i), R'_{i*}) > C''.
\end{equation}
Let~\eqref{eq:large_defect} be the definition of $\eta_i$ for $i \in I$. For $i \in \{ -n + 1, -n + 2, \ldots, n - 1 \} \setminus I$ let,
by definition, $\eta_i = \xi$.

Consider the configurations
$$\eta = \left( \xi \setminus \bigcup\limits_{i = -n + 1}^{n - 1} R_{i*} \right) \cup \bigcup\limits_{i = -n + 1}^{n - 1} (\eta_i \cup R_{i*}),$$
$$\eta' = \eta \cup \theta_*(\xi) \cup \bigcup\limits_{i = -n + 1}^{n - 1} \theta_i[\xi](\eta_i).$$
By construction, one concludes that $\eta \subseteq \eta'$ and that $\eta'$ is (globally) saturated.
Hence each $(\eta, \eta', R'_{i*})$ ($i = -n + 1, -n + 2, \ldots, n - 1$) is a defect-measuring triple and belongs to the domain of arguments of $\Delta$.

Further, for each $i \in I$ one has
$$\Delta(\eta, \eta', R'_{i*}) = \Delta(\eta_i, \psi_i[\xi](\eta_i), R'_{i*}) > C'',$$
as the first identity follows from the Localization property of $\Delta$.

Finally, we apply Proposition~\ref{prop:counting} to the configuration $\eta$. It is clear that
$$\# (\eta \cap Q_L) = \# (\xi \cap Q_L) > \frac{|Q_L|}{2\sqrt{3}} - CL.$$
Therefore
$$C' L \geq \Delta(\eta, \eta', R'_{i*}) \geq \sum\limits_{i \in I} \Delta(\eta, \eta', R'_{i*}) \geq C'' \cdot \# I.$$
Consequently,
$$\# I \leq \frac{C'}{C''}L \leq \frac{20(n + 1)KC'}{C''}.$$
Hence it is sufficient to choose $C'' = 10^4KC'$ to ensure $\# I \leq \frac{n}{20}$, as required.
\end{proof}

\begin{rem*}
Note that the Counting property of $\Delta$ was crucial for this step.
\end{rem*}

\noindent{\bf Step 3.} Small probability of an empty space event implies high probability of multiple crossings.

For the rest of this section we assume that the value of $C$ is inherited from the condition of
Lemma~\ref{lem:dense}. Thus, by Step 2, the values assigned to $C'$ and $C''$ become fixed as well.

This step essentially relies on the notion of the $(\varepsilon, \nu)$-crossing. Therefore the reader might find it helpful to
recall the notion from Definition~\ref{def:cross}.

In order to state and prove the main result of Step 3, Proposition~\ref{prop:alternative}, we will define a constant $p_1$ and a function $p_2 : \mathbb N \to \mathbb R$.
The following Proposition~\ref{prop:p1_def} serves as the definition of $p_1$.

\begin{prop}\label{prop:p1_def}
Let $\xi \in \Omega^{(s)}(Q_L, \zeta)$, where $s > \frac{|Q_L|}{2\sqrt{3}} - CL$.
Assume that an index $i \in \{ -n + 1, -n + 2, \ldots, n - 1 \}$ satisfies
$$(\xi, s_i) \in \Dfc_{C''}(R_{i*}, R'_{i*}),$$
where $s_i = s_i(\xi)$. Define $\nu = \left\lceil \frac{6C''}{c(\varepsilon)} \right\rceil$, where $c(\varepsilon)$ is inherited from the definition of a defect function.
Then there exists a number $p_1 > 0$, independent of $L$, such that at least one of the following holds:
\begin{enumerate}
  \item $\Pr(\text{$R'_{i*}$ is $(\varepsilon, \nu)$-crossed by $\eta^{(s_i)}(R_{i*}, \xi)$}) > p_1$.
  \item $\Pr \left( \text{$R_{i*} \setminus \bigcup\limits_{x \in \eta^{(s_i)}(R_{i*}, \xi)} B_2(x)$ contains an $\varepsilon$-ball} \right) > p_1$.
\end{enumerate}
\end{prop}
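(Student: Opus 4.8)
The plan is to obtain Proposition~\ref{prop:p1_def} as an essentially direct consequence of the Thin Box Lemma, applied in the translated form valid for the vertical box pair $(R_{i*}, R'_{i*})$ of~\eqref{eq:boxes} (cf.\ the remark following that display). Concretely, I would invoke that translated Thin Box Lemma with the choice $\Delta_0 = C''$ and the radius $K$ it furnishes --- the one already in use here. Its conclusion is that, for this $\Delta_0$, the infimum over all \emph{appropriate} triples $(n, \zeta', t)$, i.e.\ those with $(\zeta', t) \in \Dfc_{C''}(R_{i*}, R'_{i*})$, of
$$p_{cross}\left[ \varepsilon, \tfrac{6 C''}{c} \right](K, n, \zeta', t) + p_{empty}[\varepsilon](K, n, \zeta', t)$$
--- the quantities $p_{cross}, p_{empty}$ being taken relative to the box $R'_{i*}$ --- is strictly positive. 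Since this infimum (which is at most $2$) depends only on $\varepsilon, \rho, c, \Delta$ and $C''$, and in particular not on $n$, and hence not on $L$, I may fix a number $p_1 > 0$ with $2 p_1$ strictly below it. The hypothesis of the proposition is precisely $(\xi, s_i) \in \Dfc_{C''}(R_{i*}, R'_{i*})$, and the standing assumption $L > 10^{4} K$ forces $n \geq 1$ (indeed $n \geq 500$); thus $(n, \xi, s_i)$ is an appropriate triple, so the sum above, evaluated at $(\zeta', t) = (\xi, s_i)$, exceeds $2 p_1$, and therefore at least one of its two summands exceeds $p_1$.

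If the first summand exceeds $p_1$, I would upgrade it to assertion~1 by monotonicity in the crossing parameter: since $\nu = \lceil 6 C'' / c \rceil \geq 6 C'' / c$, the requirement $\# I \geq 4n - 1 - \nu$ in Definition~\ref{def:cross} is weaker than $\# I \geq 4n - 1 - 6 C'' / c$, so every configuration that $(\varepsilon, 6 C'' / c)$-crosses $R'_{i*}$ also $(\varepsilon, \nu)$-crosses it; hence
$$\Pr\bigl( \text{$R'_{i*}$ is $(\varepsilon, \nu)$-crossed by } \eta^{(s_i)}(R_{i*}, \xi) \bigr) \;\geq\; p_{cross}\left[ \varepsilon, \tfrac{6 C''}{c} \right](K, n, \xi, s_i) \;>\; p_1,$$
which is assertion~1. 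If instead the second summand exceeds $p_1$, it remains only to unwind the definitions: by construction $p_{empty}[\varepsilon](K, n, \xi, s_i)$ is the probability that $\eta^{(s_i)}(R_{i*}, \xi)$ admits an empty $\varepsilon$-space in $R_{i*}$, that is, that there exists $w$ with $B_{\varepsilon}(w) \subset R_{i*}$ and $\dist\bigl( w, \eta^{(s_i)}(R_{i*}, \xi) \bigr) > 2 + \varepsilon$ (Definition~\ref{def:empty}); and $\dist(w, x) > 2 + \varepsilon$ says precisely that the balls $B_{\varepsilon}(w)$ and $B_2(x)$ are disjoint, so this event coincides, up to a configuration set of probability $0$, with the event that $R_{i*} \setminus \bigcup_{x \in \eta^{(s_i)}(R_{i*}, \xi)} B_2(x)$ contains an $\varepsilon$-ball --- which is assertion~2, again with the same $p_1$.

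I do not expect a genuinely hard step. What requires attention is organizational: transporting the Thin Box Lemma, together with the crossing event of Definition~\ref{def:cross} and the empty-space event of Definition~\ref{def:empty}, through the translation relating $R'(K, n)$ to $R'_{i*}$; transferring the membership $(\xi, s_i) \in \Dfc_{C''}(R_{i*}, R'_{i*})$ along that translation; and the elementary monotonicity and definition-chasing just described. No probabilistic ingredient beyond the Thin Box Lemma enters.
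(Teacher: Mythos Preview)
Your proposal is correct and follows exactly the paper's approach: the paper's proof is the single line ``This is a direct consequence of the Thin Box Lemma,'' and you have simply unpacked that consequence explicitly (translation of the box pair, choice $\Delta_0 = C''$, monotonicity in the crossing parameter $\nu$, and the containment of the empty-space event in the event of assertion~2).
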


\begin{proof}
This is a direct consequence of the Thin Box Lemma.
\end{proof}

Let us define the function $p_2$. Let $n \in \mathbb N$ and $E_1, E_2, \ldots, E_n$ be independent events of probability $p_1$ each. We set, by definition,
$$p_2(n) = 1 - \Pr\left( \sum\limits_{i = 1}^{\left\lfloor \frac{n}{20} - 5 \right\rfloor} \mathbbm 1(E_i) > 8\nu \right).$$

\begin{prop}\label{prop:alternative}
Let $p, q > 0$ satisfy the inequality
\begin{equation}\label{eq:ineq_pq}
\frac{q}{p_1} + \frac{1 - \frac{p}{4}}{1 - p_2(n)} < 1.
\end{equation}
Then for every boundary conditions $\zeta \in \Omega(\mathbb R^2)$ and every integer $s > \frac{|Q_L|}{2\sqrt{3}} - CL$ at least
one of the following holds:
\begin{enumerate}
  \item $\Pr \left( \sum\limits_{i = -n + 1}^{-n + \left\lfloor \frac{n}{10} - 4 \right\rfloor} \mathbbm 1(\text{$R'_{i*}$ is $(\varepsilon, \nu)$-crossed by $\eta^{(s)}(Q_L, \zeta)$} \}) > 8\nu \right) > 1 - \frac{p}{4}$.
  \item $\Pr \left( \text{$Q_L \setminus \bigcup\limits_{x \in \eta^{(s)}(Q_L, \zeta)} B_2(x)$ contains an $\varepsilon$-ball} \right) > q$.
\end{enumerate}
\end{prop}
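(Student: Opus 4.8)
The plan is to prove the dichotomy in contrapositive form: assuming alternative~2 fails, i.e. $\Pr(A)\le q$ where $A$ is the event ``$Q_L\setminus\bigcup_{x\in\eta}B_2(x)$ contains an $\varepsilon$-ball'' and $\eta=\eta^{(s)}(Q_L,\zeta)$, I would establish alternative~1. Set $\mathcal R=\{-n+1,\dots,-n+\lfloor n/10-4\rfloor\}$, and for $i\in\mathcal R$ let $E_i$ be the event ``$R'_{i*}$ is $(\varepsilon,\nu)$-crossed by $\eta$'' and $\tilde E_i$ the event ``$\eta$ admits an empty $\varepsilon$-space in $R_{i*}$''. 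Since $R_{i*}\subseteq Q_L$ we have $\tilde E_i\subseteq A$, hence $\Pr(\tilde E_i)\le q$ for every $i\in\mathcal R$.

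Next I would condition along the thin boxes. Let $\mathcal G$ be the $\sigma$-algebra generated by $\eta\setminus\bigcup_{i\in\mathcal R}R_{i*}$ together with the counts $s_i=\#(\eta\cap R_{i*})$. The thin boxes are pairwise at distance $18K>\rho$; moreover $R'_{i*}\cap R_{j*}=\varnothing$ for $j\ne i$, so after shrinking each $R'_{i*}$ by $\rho$ in its long direction (keeping it tall enough to contain $R_{i*}$) one obtains domains $D'_i\supseteq R_{i*}$ with $\dist(D'_i,D'_j)>\rho$ and $(D'_i)_\rho\supseteq R'_{i*}$, which lets Lemma~\ref{lem:finite_gibbs} be applied to the families $(E_i)$ and $(\tilde E_i)$. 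Consequently, conditionally on $\mathcal G$ the events $(E_i)_{i\in\mathcal R}$ are independent, the events $(\tilde E_i)_{i\in\mathcal R}$ are independent, and $\Pr(E_i\mid\mathcal G)$, $\Pr(\tilde E_i\mid\mathcal G)$ are exactly the probabilities that $\eta^{(s_i)}(R_{i*},\xi)$ $(\varepsilon,\nu)$-crosses $R'_{i*}$, respectively leaves an empty $\varepsilon$-space in $R_{i*}$, for any $\xi$ compatible with $\mathcal G$.

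Then I would invoke the defect estimates. Since every realisation satisfies $\#(\eta\cap Q_L)=s>\frac{|Q_L|}{2\sqrt3}-CL$, Proposition~\ref{prop:most_defects_bounded} shows that in each realisation at most $\lfloor n/20\rfloor$ of all $2n-1$ indices fail to lie in the corresponding $\Dfc_{C''}$; call the indices of $\mathcal R$ not failing \emph{good}. By the Localization property the good set $S\subseteq\mathcal R$ is $\mathcal G$-measurable, and $|S|\ge|\mathcal R|-\lfloor n/20\rfloor\ge\lfloor n/20-5\rfloor=:m$ in every realisation. For each good $i$ Proposition~\ref{prop:p1_def} (with the same $\nu$) gives $\Pr(E_i\mid\mathcal G)>p_1$ or $\Pr(\tilde E_i\mid\mathcal G)>p_1$; accordingly, split the outcomes of $\mathcal G$ into those that are \emph{empty-favourable} (some good $i$ has $\Pr(\tilde E_i\mid\mathcal G)>p_1$) and the rest, which are \emph{cross-favourable} (every good $i$ has $\Pr(E_i\mid\mathcal G)>p_1$). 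On the empty-favourable event $\Pr(A\mid\mathcal G)\ge\Pr(\tilde E_i\mid\mathcal G)>p_1$ for a measurably chosen good $i$, so $\Pr(A)\ge p_1\Pr(\mathcal G\text{ empty-favourable})$; combined with $\Pr(A)\le q$ this yields $\Pr(\mathcal G\text{ cross-favourable})\ge 1-q/p_1$. On a cross-favourable $\mathcal G$, $\sum_{i\in\mathcal R}\mathbbm 1_{E_i}\ge\sum_{i\in S}\mathbbm 1_{E_i}$ is a sum of $|S|\ge m$ independent Bernoulli variables of parameters exceeding $p_1$, so by monotonicity of the binomial tail $\Pr\bigl(\sum_{i\in\mathcal R}\mathbbm 1_{E_i}>8\nu\mid\mathcal G\bigr)\ge\Pr(\mathrm{Bin}(m,p_1)>8\nu)=1-p_2(n)$. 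Integrating, $\Pr\bigl(\sum_{i\in\mathcal R}\mathbbm 1_{E_i}>8\nu\bigr)\ge(1-q/p_1)(1-p_2(n))>1-\tfrac p4$ by \eqref{eq:ineq_pq}; this is alternative~1. (One may assume $p_2(n)<1$, the case $p_2(n)=1$ being readily handled directly.)

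The step I expect to be the main obstacle is the justification of this spatial-Markov reduction together with the $\mathcal G$-measurability of the good set: one must check that each of the events $E_i,\tilde E_i$ is determined by $\eta$ on a region meeting no thin box other than $R_{i*}$, and that membership in $\Dfc_{C''}(R_{i*},R'_{i*})$ likewise depends only on $\mathcal G$. Both hinge on the geometric disjointness $R'_{i*}\cap R_{j*}=\varnothing$ for $j\ne i$ and on the Localization property of $\Delta$; once these are secured, the remainder is the routine two-case argument and binomial tail bound described above.
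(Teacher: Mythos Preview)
Your proposal is correct and follows essentially the same route as the paper. The paper also splits according to whether some index has large conditional ``empty'' probability (its set $G$ is exactly your empty-favourable event), applies Lemma~\ref{lem:finite_gibbs} to factorize over the thin boxes, invokes Proposition~\ref{prop:most_defects_bounded} to ensure at least $\lfloor n/20-5\rfloor$ good indices, and bounds the tail via the same binomial comparison (packaged there as Lemma~\ref{lem:index_elimination}). The only cosmetic difference is that the paper presents the two cases directly and checks they are exhaustive via~\eqref{eq:ineq_pq}, whereas you run the contrapositive; the arithmetic $(1-q/p_1)(1-p_2(n))>1-p/4$ is the same inequality read the other way. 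Two small remarks: the ``shrinking in the long direction'' is unnecessary, since the boxes $R'_{i*}$ are already separated by more than $\rho$ in the transverse direction, so Lemma~\ref{lem:finite_gibbs} applies with $D'_i=R'_{i*}$ directly; and your worry about $\mathcal G$-measurability of the good set can be sidestepped as in the paper by working pointwise with the functions $a_i(\xi)=\Pr(E_i\mid\mathcal G)$ inside the expectation, so no measurability of $S$ beyond that of the $a_i$ is actually needed.
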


\begin{proof}
For each $i \in \{ -n + 1, -n + 2, \ldots, n - 1 \}$ define
$$F^{cross}_i = \{ \eta \in \Omega(\mathbb R^2) : \text{$R'_{i*}$ is $(\varepsilon, \nu)$-crossed by $\eta$} \},$$
$$F^{empty}_i = \left\{ \eta \in \Omega(\mathbb R^2) : \text{$R_{i*} \setminus \bigcup\limits_{x \in \eta} B_2(x)$ contains an $\varepsilon$-ball} \right\}.$$

Denote
\begin{multline*}
G = \{\xi \in \Omega^{(s)}(Q_L, \zeta) : \\
\text{$\Pr( \eta^{s_i(\xi)}(R_{i*}, \xi) \in F^{empty}_i ) > p_1$ holds for some index $i$} \}.
\end{multline*}

Consider the two cases.

\noindent {\bf Case 1.} $\Pr(\eta^{(s)}(Q_L, \zeta) \in G) > \frac{q}{p_1}$.

Let us apply Lemma~\ref{lem:finite_gibbs} with $D_0 = Q_L$, $m = 2n - 1$, $D_i = R_{(-n + i)*}$, $D'_i = R'_{(-n + i)*}$ and $E_i = \Omega(\mathbb R^2) \setminus F^{empty}_{-n + i}$.
For every $\xi \in G$ we have
$$\prod\limits_{i = 1}^{m} \operatorname{P}_i(\xi) \leq 1 - p_1,$$
because at least one multiplier does not exceed $1 - p_1$ and the others do not exceed 1. Consequently,
\begin{multline*}
\Pr \left( \text{$Q_L \setminus \bigcup\limits_{x \in \eta^{(s)}(Q_L, \zeta)} B_2(x)$ contains an $\varepsilon$-ball} \right) \geq \\
1 - \Pr\left( \eta^{(s)}(Q_L, \zeta) \in \bigcap\limits_{i = 1}^{m} E_i \right) = \\
1 - \operatorname{E} \left[ \prod\limits_{i = 1}^{n - 1} \operatorname{P}_i (\eta^{(s)}(Q_L, \zeta)) \right] \geq \\
1 - \Pr(\eta^{(s)}(Q_L, \zeta) \in G) \cdot (1 - p_1) - (1 - \Pr(\eta^{(s)}(Q_L, \zeta) \in G)) = \\
p_1 \cdot \Pr(\eta^{(s)}(Q_L, \zeta) \in G) > q,
\end{multline*}
which is exactly the second assertion of the alternative.

\noindent {\bf Case 2.} $\Pr(\eta^{(s)}(Q_L, \zeta) \notin G) > \frac{1 - \frac{p}{4}}{1 - p_2(n)}.$

Denote
$$I = \left\{ 1, 2, \ldots, \left\lfloor \frac{n}{10} - 4 \right\rfloor \right\}, \quad \mathcal F = \{ X \subseteq I : \# X > 8 \nu \},$$
$$s_i(\xi) = \# (\xi \cap R_{(-n + i)*}), \quad a_i(\xi) = \Pr(\eta^{(s_i)}(R_{(-n + i)*}, \xi) \in F^{cross}_{-n + i}), \quad \text{where $s_i = s_i(\xi)$.}$$

Consider an arbitrary set of integers $X \in \mathcal F$ and let us apply Lemma~\ref{lem:finite_gibbs} with $D_0 = Q_L$, $m = \left\lfloor \frac{n}{10} - 4 \right\rfloor$,
$D_i = R_{(-n + i)*}$, $D'_i = R'_{(-n + i)*}$ and
$$E_i = \left\{
\begin{array}{ll}
F^{cross}_{-n + i} & \text{if $i \in X$} \\
\Omega(\mathbb R^2) \setminus F^{cross}_{-n + i} & \text{if $i \notin X$.}
\end{array}
\right.$$
If
$$X(\xi) = \{i \in I : \text{$R'_{(-n + i)*}$ is $(\varepsilon, \nu)$-crossed by $\xi$}$$
and $\eta = \eta^{(s)}(Q_L, \zeta)$, one has
$$\Pr \left( X(\eta) = X \right) = \operatorname{E} \left[ \prod\limits_{i \in X} a_i(\eta) \prod\limits_{i \in I \setminus X} (1 - a_i(\eta)) \right].$$
Therefore
\begin{multline*}
\Pr \left( \sum\limits_{i = -n + 1}^{-n + \left\lfloor \frac{n}{10} - 4 \right\rfloor} \mathbbm 1(\text{$R'_{i*}$ is $(\varepsilon, \nu)$-crossed by $\eta$} \}) > 8\nu \right) = \\
\operatorname{E} \sum\limits_{X \in \mathcal F} \left( \prod\limits_{i \in X} a_i(\eta) \prod\limits_{i \in I \setminus X} (1 - a_i(\eta)) \right).
\end{multline*}

Further, denote
$$Y(\xi) = \{ i \in I : (\xi, s_i(\xi)) \in \Dfc_{C''}(R_{(-n + i)*}, R'_{(-n + i)*}) \}.$$
By Proposition~\ref{prop:most_defects_bounded}, every configuration $\xi \in \Omega^{(s)}(Q_L, \zeta)$ satisfies
$$\# J \geq \left\lfloor \frac{n}{10} - 4 \right\rfloor - \frac{n}{20} \geq \left\lfloor \frac{n}{20} - 5 \right\rfloor.$$

Let, in addition $\xi \notin G$. Then, by Proposition~\ref{prop:p1_def}, the inequality $a_i(\xi) > p_1$ is satisfied for every $i \in Y(\xi)$.
Therefore
\begin{multline*}
\sum\limits_{X \in \mathcal F} \left( \prod\limits_{i \in X} a_i(\xi) \prod\limits_{i \in I \setminus X} (1 - a_i(\xi)) \right) \geq \\
\sum\limits_{X \in \mathcal F \cap 2^{Y(\xi)}} \left( \prod\limits_{i \in X} a_i(\xi) \prod\limits_{i \in Y(\xi) \setminus X} (1 - a_i(\xi)) \right)
\geq 1 - p_2(n),
\end{multline*}
where the first inequality follows from Lemma~\ref{lem:index_elimination} and the second one follows from the definition of $p_2$.

Hence
\begin{multline*}
\Pr \left( \sum\limits_{i = -n + 1}^{-n + \left\lfloor \frac{n}{10} - 4 \right\rfloor} \mathbbm 1(\text{$R'_{i*}$ is $(\varepsilon, \nu)$-crossed by $\eta$}) > 8\nu \right) \geq \\
(1 - p_2(n)) \cdot \Pr(\eta \notin G) > 1 - \frac{p}{4},
\end{multline*}
which is exactly the first assertion of the alternative.

But
$$1 = \Pr(\eta^{(s)}(Q_L, \zeta) \in G) + \Pr(\eta^{(s)}(Q_L, \zeta) \notin G) = 1 > \frac{q}{p_1} + \frac{1 - \frac{p}{4}}{1 - p_2(n)}.$$
Therefore Case 1 and Case 2 exhaust all possibilities.
\end{proof}

\noindent{\bf Step 4.} Multiple vertical and horizontal crossings guarantee a large circuit.

At this point the reader might find it helpful to recall the notation $R_{*i}$, $R'_{*i}$ (for vertical thin boxes) and $R_{i*}$, $R'_{i*}$ (for horizontal thin boxes),
introduced in~\eqref{eq:boxes}. We use the notation in the following Proposition~\ref{prop:circuit}, which is the main result of this step,

\begin{prop}\label{prop:circuit}
Let $\xi \in \Omega(\mathbb R^2)$. Assume that there are four sets of indices
\begin{multline*}
  I_{left}, I_{lower} \subseteq \left\{ -n + 1, -n + 2, \ldots, -n + \left\lfloor \frac{n}{10} - 4 \right\rfloor \right\}, \\
  I_{right}, I_{upper} \subseteq \left\{ n - 1, n - 2, \ldots, n - \left\lfloor \frac{n}{10} - 4 \right\rfloor \right\},
\end{multline*}
such that
$$\min (\# I_{left}, \# I_{right}, \# I_{lower}, \# I_{upper}) > 8\nu,$$
all the boxes $R'_{i*}$ are $(\varepsilon, \nu)$-crossed by $\xi$ for $i \in I_{left} \cup I_{right}$, and all the boxes $R'_{*i}$ are $(\varepsilon, \nu)$-crossed by $\xi$ for
$i \in I_{lower} \cup I_{upper}$. Then $\xi$ has a large circuit in $Q_L$.
\end{prop}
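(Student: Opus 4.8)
The plan is to assemble crossings from the four sides of the annulus $Q_L\setminus Q_{0.9L}$ into one closed walk in $G_\varepsilon(\xi\cap(Q_L\setminus Q_{0.9L}))$ that winds once around the hole, and then to extract a simple cycle from it. For each $i\in I_{left}$, Definition~\ref{def:cross} applied to the relevant translate of $R'(K,n)$ yields an index set $I^{(i)}$ omitting at most $\nu$ of the slabs of $R'_{i*}$ such that, for every slab $P$ with index in $I^{(i)}$, the nonempty set $\{x\in\xi:B_\rho(x)\subseteq P\}$ lies in a single connected component $\mathfrak c_i$ of $G_\varepsilon(\xi\cap R'_{i*})$; likewise one gets components $\mathfrak c_i$ for $i\in I_{right}$ and, from $90^\circ$-rotated copies, components $\mathfrak c'_j$ for $j\in I_{lower}\cup I_{upper}$. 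The index ranges to which $I_{left},I_{right}$ and $I_{lower},I_{upper}$ are confined, together with $n\ge 500$, force each box $R'_{i*}$ ($i\in I_{left}\cup I_{right}$) and each $R'_{*j}$ ($j\in I_{lower}\cup I_{upper}$) to lie inside $Q_L\setminus Q_{0.9L}$; since these boxes are convex, every edge of $\mathfrak c_i$ or $\mathfrak c'_j$---a segment of length $\le 2+\varepsilon$---also lies inside the annulus, so all these components are subgraphs of $G_\varepsilon(\xi\cap(Q_L\setminus Q_{0.9L}))$.

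The gluing rests on the observation that, by the choice of constants in~\eqref{eq:boxes} and in Definition~\ref{def:cross}, the slab of $R'_{i*}$ at the height of $R'_{*j}$ and the slab of $R'_{*j}$ at the horizontal position of $R'_{i*}$ are one and the same cube, which I call $Q_{i,j}$. If $Q_{i,j}$ is a good slab for the crossing of $R'_{i*}$ and also for the crossing of $R'_{*j}$, then $\{x\in\xi:B_\rho(x)\subseteq Q_{i,j}\}$ is nonempty and lies in both $\mathfrak c_i$ and $\mathfrak c'_j$, so these two components sit in a common component of $G_\varepsilon(\xi\cap(Q_L\setminus Q_{0.9L}))$. (Equivalently: inside the small rectangle $R'_{i*}\cap R'_{*j}$, a bottom-to-top $G_\varepsilon$-path from $\mathfrak c_i$ must meet a left-to-right $G_\varepsilon$-path from $\mathfrak c'_j$ by the Jordan curve theorem, and a point where two segments of length $\le 2+\varepsilon$ cross forces a $G_\varepsilon$-edge between an endpoint of one and an endpoint of the other---this is the place where planarity enters.) Now $\#I_\bullet>8\nu$ for each of the four index sets while each crossing omits at most $\nu$ slabs, so a direct count of quadruples $(i_L,i_R,j_D,j_U)\in I_{left}\times I_{right}\times I_{lower}\times I_{upper}$ works: each of the eight obstructions (some fixed pair fails to be a good slab of one of its two boxes) rules out, for a fixed value of one of the four coordinates, at most $\nu$ values of one other coordinate, so the fraction of excluded quadruples is at most $2\nu\sum_\bullet(\#I_\bullet)^{-1}<2\nu\cdot\tfrac{4}{8\nu}=1$. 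Hence there is a quadruple for which all four pairs $(i_L,j_D),(i_R,j_D),(i_R,j_U),(i_L,j_U)$ are good for both of their boxes, and then $\mathfrak c_{i_L},\mathfrak c_{i_R},\mathfrak c'_{j_D},\mathfrak c'_{j_U}$ all lie in one component $\mathfrak C$ of $G_\varepsilon(\xi\cap(Q_L\setminus Q_{0.9L}))$.

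Finally, I would pick vertices $p_{LD}\in\xi$ deep inside $Q_{i_L,j_D}$ and, similarly, $p_{RD}\in Q_{i_R,j_D}$, $p_{RU}\in Q_{i_R,j_U}$, $p_{LU}\in Q_{i_L,j_U}$; each of these lies in two of the four components. Choosing a $G_\varepsilon$-path $P_D$ in $\mathfrak c'_{j_D}$ from $p_{LD}$ to $p_{RD}$, a path $P_R$ in $\mathfrak c_{i_R}$ from $p_{RD}$ to $p_{RU}$, a path $P_U$ in $\mathfrak c'_{j_U}$ from $p_{RU}$ to $p_{LU}$, and a path $P_L$ in $\mathfrak c_{i_L}$ from $p_{LU}$ to $p_{LD}$, their concatenation $W$ is a closed walk in $\mathfrak C$. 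Its polygonal realization lies in $Q_L\setminus Q_{0.9L}$ by the first paragraph, and it has winding number $\pm1$ about $Q_{0.9L}$: since $i_L<0<i_R$ and $j_D<0<j_U$, the arc $P_R$ lies in $\{x>0\}$ and runs from $\{y<0\}$ to $\{y>0\}$, so $W$ meets the ray $\{(t,0):t>0\}$ with net multiplicity one, whereas $P_D,P_U,P_L$ lie in $\{y<0\},\{y>0\},\{x<0\}$ respectively and avoid that ray. Splitting $W$ at repeated vertices---winding numbers are additive and $\pi_1$ of the annulus is $\mathbb Z$, so some piece keeps a nonzero winding---produces a simple cycle $x_1x_2\ldots x_mx_1$ in $\mathfrak C\subseteq G_\varepsilon(\xi\cap(Q_L\setminus Q_{0.9L}))$ whose realization is a polygonal Jordan curve of winding $\pm1$, i.e.\ contractible to $\partial Q_L$ inside the annulus. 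This is exactly a large circuit in $Q_L$.

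I expect the main effort to be bookkeeping rather than any single idea: verifying the slab identification $Q_{i,j}$ by tracking the rotations and translations in the Thin Box Lemma against the constants in~\eqref{eq:boxes}; checking that all the boxes in play really are contained in $Q_L\setminus Q_{0.9L}$ (this is where $n\ge 500$ is used); and making the winding-number computation precise. The topological extraction of the simple winding-$\pm1$ cycle is routine, but it is the one place where the Jordan curve theorem, and hence the restriction to $\mathbb R^2$, is genuinely needed.
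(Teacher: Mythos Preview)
Your proposal is correct and follows essentially the same approach as the paper: both arguments reduce to finding a quadruple $(i_L,i_R,j_D,j_U)$ for which all four intersection cubes are non-exceptional for both of the crossing boxes meeting there, and both establish this by a counting argument showing strictly fewer bad quadruples than total quadruples. The paper phrases the count as an expectation (choose the four indices uniformly at random; the expected number of exceptional corners is $<1$), while you phrase it as a union bound over the eight obstructions, but the arithmetic is identical. Your write-up is more explicit than the paper's on the topological step---concatenating the four paths into a closed walk, computing the winding number, and extracting a simple cycle---whereas the paper simply asserts that a large circuit exists in $R'_{i_1*}\cup R'_{*j_1}\cup R'_{i_2*}\cup R'_{*j_2}$ ``by definitions of crossing and exceptionality''; your added detail is welcome and correct.
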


\begin{proof}
With no loss of generality, assume that
$$\# I_{left} = \# I_{right} = \# I_{lower} = \# I_{upper} = 8\nu + 1.$$
Define $P_{ij} = R'_{i*} \cap R_{*j}$. If $i \in I_{left} \cup I_{right}$, $j \in I_{lower} \cup I_{upper}$ and the square $P_{ij}$ is exceptional
either for $R'_{i*}$ or for $R'_{*j}$, then the pair $(i, j)$ will be called exceptional, too. It is clear that the number of exceptional pairs, $m_{exc}$
satisfies the inequality
$$m_{exc} \leq 4\nu(8\nu + 1).$$

Now choose $i_1 \in I_{left}$, $i_2 \in I_{right}$, $j_1 \in I_{lower}$ and $j_2 \in I_{upper}$ at random and independently. One can see that the expected number
of exceptional pairs $(i_k, j_l)$, where $(k, l)$ runs through $\{1, 2\}^2$, equals $\frac{m_{exc}}{(8\nu + 1)^2} < 1$.
Hence there exist $i_1, i_2, j_1, j_2$ as above such that none of the pairs $(i_k, j_l)$ is exceptional. Consequently, there exists a large circuit for $Q_L$ enclosed in
the set $R'_{i_1*} \cup R'_{*j_1} \cup R'_{i_2*} \cup R'_{*j_2}$ (by definitions of crossing and exceptionality).
\end{proof}

\noindent{\bf Step 5.} Assembling the proof of Lemma~\ref{lem:dense}.

\begin{proof}[Proof of Lemma~\ref{lem:dense}]
It is clear that $\lim\limits_{n \to \infty} p_2(n) = 0$. Hence, given $p > 0$, one can choose $q > 0$ and $n_0 \in \mathbb N$
such that the inequality~\eqref{eq:ineq_pq} holds for every $n > n_0$. We will show that the statement of Lemma~\ref{lem:dense} holds with $q_0 = q$ and $L_0 = 20n_0K$.

Let a hard-disk model $\eta = \eta^{(s)}(Q_L, \zeta)$ satisfy the conditions of Lemma~\ref{lem:dense}. If assertion (LC1) of the Large Circuit Lemma holds, there is
nothing to prove. Therefore we assume that (LC1) is false.

For an arbitrary configuration $\xi \in \Omega(\mathbb R^2)$ define
\begin{multline*}I_{left}(\xi) = \biggl\{ i \in \left\{ -n + 1, -n + 2, \ldots, -n + \left\lfloor \frac{n}{10} - 4 \right\rfloor \right\} : \\
\text{$R'_{i*}$ is $(\varepsilon, \nu)$-crossed by $\xi$} \biggr\}.
\end{multline*}
By Proposition~\ref{prop:alternative}, we have
\begin{equation}\label{eq:left_crossings}
\Pr(\# I_{left}(\eta^{(s)}(Q_L, \zeta)) > 8\nu) > 1 - \frac{p}{4}.
\end{equation}
Let us define $I_{right}(\xi), I_{lower}(\xi)$ and $I_{upper}(\xi)$ in a similar way to $I_{left}(\xi)$. Then the inequalities similar to~\eqref{eq:left_crossings} apply.
The union bound for these inequalities yields
$$\Pr(\min (\# I_{left}(\eta), \# I_{right}(\eta), \# I_{lower}(\eta), \# I_{upper}(\eta) > 8\nu) > 1 - p.$$

By Proposition~\ref{prop:circuit}, assertion (LC2) of the Large Circuit Lemma follows.
\end{proof}

\subsection{The case of small $s$}

The goal of this subsection is to prove Lemma~\ref{lem:sparse}. We start with a simple observation.

\begin{prop}\label{prop:pckdensity_finite}
Let $L > 0$. Then there exists a configuration $\xi \in \Omega(\mathbb R^2)$ such that
$$\# (\xi \cap Q_L) \geq \frac{|Q_L|}{2\sqrt{3}}.$$
\end{prop}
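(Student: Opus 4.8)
The plan is to take for $\xi$ a suitably rescaled and translated copy of the triangular lattice and simply count its points in $Q_L$. The one subtlety to keep in mind is that $\#(\xi\cap Q_L)$ is an integer while $|Q_L|/(2\sqrt3)$ in general is not, so it is enough to produce $\xi\in\Omega(\mathbb R^2)$ with $\#(\xi\cap Q_L)>|Q_L|/(2\sqrt3)-1$: the integer $\#(\xi\cap Q_L)$ is then $\geq\lceil|Q_L|/(2\sqrt3)\rceil\geq|Q_L|/(2\sqrt3)$.

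First I would fix a small $\delta>0$ and let $\Lambda_\delta\subset\mathbb R^2$ be a triangular lattice whose shortest nonzero vectors have length $2+\delta$. Every translate $\Lambda_\delta+t$, $t\in\mathbb R^2$, then lies in $\Omega(\mathbb R^2)$, because all of its pairwise distances are at least $2+\delta>2$; and a fundamental cell $\Pi$ of $\Lambda_\delta$ has area $\frac{\sqrt3}{2}(2+\delta)^2$. The main step is an averaging argument over the translation parameter: since the translates $\{\Pi+p:p\in\Lambda_\delta\}$ tile the plane, Fubini gives $\int_\Pi\#\bigl((\Lambda_\delta+t)\cap Q_L\bigr)\,dt=\sum_{p\in\Lambda_\delta}|Q_L\cap(\Pi+p)|=|Q_L|$, so some translate $\xi=\Lambda_\delta+t$ satisfies $\#(\xi\cap Q_L)\geq|Q_L|/|\Pi|=2|Q_L|/\bigl(\sqrt3(2+\delta)^2\bigr)$.

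It remains to let $\delta\to0^{+}$: the lower bound $2|Q_L|/\bigl(\sqrt3(2+\delta)^2\bigr)=\frac{|Q_L|}{2\sqrt3}\cdot\frac{4}{(2+\delta)^2}$ increases to $|Q_L|/(2\sqrt3)$, hence for $\delta$ small enough it exceeds $|Q_L|/(2\sqrt3)-1$, and the integrality remark of the first paragraph then yields $\#(\xi\cap Q_L)\geq|Q_L|/(2\sqrt3)$. I do not anticipate a genuine obstacle: the only thing that needs care is precisely that the naive choice $\delta=0$ is illegal (it violates the strict inequality $\|x-y\|>2$), which is why one loses the factor $4/(2+\delta)^2<1$ in the averaged count — and this loss is exactly what the integrality of $\#(\xi\cap Q_L)$ compensates for. (One could instead position $\Lambda_\delta$ explicitly, with a lattice row centered on each coordinate axis, and bound the number of rows meeting $Q_L$ and the number of points per row directly; the averaging argument is preferred only because it sidesteps this boundary bookkeeping.)
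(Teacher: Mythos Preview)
Your approach is correct and genuinely different from the paper's. The paper argues by contradiction, invoking the known value $\alpha(2)=1/(2\sqrt3)$ of the optimal packing density: if every $\xi\in\Omega(\mathbb R^2)$ satisfied $\#(\xi\cap Q_L)<|Q_L|/(2\sqrt3)$, then integrality would give a uniform deficit $\#(\xi\cap(Q_L+t))\le|Q_L|\bigl(1/(2\sqrt3)-\delta\bigr)$ for all $\xi$ and all translates $t$, forcing $\alpha(2)\le 1/(2\sqrt3)-\delta$, a contradiction. Your argument is constructive and self-contained (it does not cite $\alpha(2)$, only the triangular lattice and an averaging over translates), which makes it more elementary; the paper's version is shorter once $\alpha(2)$ is taken for granted. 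One small slip to fix: the implication ``$n>x-1$, $n\in\mathbb Z$ $\Rightarrow$ $n\ge\lceil x\rceil$'' in your first paragraph is false (e.g.\ $x=2.5$, $n=2$). The averaging in fact gives you the stronger statement $n\ge|Q_L|/|\Pi|$, hence $n\ge\lceil|Q_L|/|\Pi|\rceil$; since $|Q_L|/|\Pi|\to x:=|Q_L|/(2\sqrt3)$ from below as $\delta\to0^+$, choosing $\delta$ small enough that $|Q_L|/|\Pi|>\lceil x\rceil-1$ yields $\lceil|Q_L|/|\Pi|\rceil=\lceil x\rceil\ge x$, and the proof goes through.
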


\begin{proof}
Assume the converse. Then, since the $\# (\xi \cap Q_L)$ is an integer, one necessarily has
$$\# (\xi \cap Q_L) \leq |Q_L|\left( \frac{1}{2\sqrt{3}} - \delta \right)$$
for every $\xi \in \Omega(\mathbb R^2)$ and some absolute constant $\delta > 0$. Therefore the inequality
$$\# (\xi \cap (Q_L + t)) \leq |Q_L|\left( \frac{1}{2\sqrt{3}} - \delta \right)$$
holds for a fixed configuration $\xi$ and an arbitrary translation $t$. This immediately implies
$$\alpha(2) = \limsup\limits_{M \to \infty} \sup\limits_{\xi \in \Omega(\mathbb R^2)} \frac{\# (\xi \cap Q_M)}{|Q_M|} \leq \frac{1}{2\sqrt{3}} - \delta.$$
But it is well-known (see~\cite{FT}) that $\alpha(2) = \frac{1}{2\sqrt{3}}$. A contradiction finishes the proof.
\end{proof}

We proceed to the proof of Lemma~\ref{lem:sparse}.

\begin{proof}[Proof of Lemma~\ref{lem:sparse}]
It is clear that there exist constants $C > 0$ and $L'_0 > 10$ such that the inequality
$$\frac{|Q_{L - 7}|}{2\sqrt{3}} \geq \frac{|Q_L|}{2\sqrt{3}} - CL$$
holds for every $L > L'_0$.

Hence, by Proposition~\ref{prop:pckdensity_finite}, there is a set $\xi \in \Omega(\mathbb R^2)$ such that
$$\# (\xi \cap Q_{L - 7}) \geq \frac{|Q_L|}{2\sqrt{3}} - CL.$$
With no loss of generality, one can assume that $\xi \subset Q_{L - 7}$. Let
$$\xi = \{ x_1, x_2, \ldots, x_k \},$$
where $k \geq \frac{|Q_L|}{2\sqrt{3}} - CL$.

Let $0 \leq s \leq k$ be an integer, $\zeta \in \Omega(\mathbb R^2)$ be arbitrary boundary conditions.
Consider a subspace $E \subseteq \Omega^{(s)}(Q_L, \zeta)$ defined as follows:
$$E = \biggl\{(\zeta \setminus Q_L) \cup \{y_1, y_2, \ldots, y_s\} : \left \| y_i - \frac{L - 6}{L - 7}x_i \right\| < \frac{1}{2(L - 7)} \biggr\}.$$
Of course, $\inf\limits_{\zeta \in \Omega(\mathbb R^2)}\Pr( \eta^{(s)}(Q_L, \zeta) \in E ) > 0$.
In addition, there is a ball $B_3(z) \subset (Q_L \setminus Q_{L - 6})$. Since $\varepsilon < \varepsilon_0 \leq 1$, one concludes that
$$B_{2 + \varepsilon}(z) \subset Q_L \; \text{and} \; B_{2 + \varepsilon}(z) \cap \xi = \varnothing \quad \text{whenever $\xi \in E$.}$$
This finishes the proof of the lemma.
\end{proof}

\section{Proof of the Main Theorem}\label{sec:main_proof}

We proceed in three steps. First we transfer our results concerning the uniform model to the Poisson model.
Then we show that, with high probability, there exists a chain of points $t_1, t_2, \ldots, t_m \in L \cdot \mathbb Z^2$,
such that $\| t_{i + 1} - t_i \| = L$, $t_1 \in Q_M$, $t_m \notin Q_{M'}$ and each translate $\eta^{[\lambda]}(Q_{M' + 2L}, \zeta)$ has a large
circuit in $Q_L$. Finally, we prove that the annulus crossing indeed occurs whenever such a chain exists.

\noindent {\bf Step 1.} ``Poissonization'' of the Large Circuit Lemma.

The key statement of this step is Lemma~\ref{lem:pois_to_unif} below. In view of possible generalizations it is stated for arbitrary dimension $d$.

The reader might find it useful to recall the notion of a configuration admitting an empty $\varepsilon$-space in a bounded open domain $D$ (Definition~\ref{def:empty}).
Namely, $\xi$ admits an empty $\varepsilon$-space in $D$ if there exists an $\varepsilon$-ball $B_{\varepsilon}(w) \subseteq D$ such that $\dist(\xi, B_{\varepsilon}(w)) \geq 2$.

\begin{lem}\label{lem:pois_to_unif}
Let the dimension $d \geq 1$ be fixed and the numbers $p, q, \varepsilon, L > 0$ be given. Then there exists $\lambda_0 > 0$ such that the following holds.
If $E \subseteq \Omega(\mathbb R^2)$ is measurable and a Poisson hard-disk model $\eta^{[\lambda]}(Q_L, \zeta)$ satisfies the conditions
\begin{equation}\label{eq:pois_to_unif:pois}
\begin{aligned}
& \lambda > \lambda_0, \\
& \Pr(\eta^{[\lambda]}(Q_L, \zeta) \in E) \leq 1 - p,
\end{aligned}
\end{equation}
then there exists a uniform hard-disk model $\eta^{(s)}(Q_L, \zeta)$ (with the same boundary conditions $\zeta$) such that the two inequalities below hold simultaneously:
\begin{align}
  & \Pr(\eta^{(s)}(Q_L, \zeta) \in E) \leq 1 - \frac{p}{2} \label{eq:pois_to_unif:p} \\
  & \Pr \left( \text{$\eta^{(s)}(Q_L, \zeta)$ admits an empty $\varepsilon$-space in $Q_L$} \right) \leq q. \label{eq:pois_to_unif:q}
\end{align}
\end{lem}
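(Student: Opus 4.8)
The plan is to use the standard representation of the Poisson hard-disk model as a mixture of uniform hard-disk models (as in~\cite[Section 2]{Ar}) and to exploit the fact that, as $\lambda\to\infty$, the mixing weights concentrate on the maximal number of points that fit in $Q_L$ compatibly with $\zeta$ --- a number at which, by definition, no empty $\varepsilon$-space can occur. Thus the natural candidate is $s=s_{\max}(Q_L,\zeta)$.

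First I would set up the mixture. Let $s_{\max}=s_{\max}(Q_L,\zeta)$ be the largest integer $s$ with $\Omega^{(s)}(Q_L,\zeta)\neq\varnothing$; it is finite since $Q_L$ is bounded. Conditioning the Poisson point process on the number of its points lying in $Q_L$ and then on the hard-core event yields, for every measurable $E$,
$$\Pr\bigl(\eta^{[\lambda]}(Q_L,\zeta)\in E\bigr)=\sum_{s=0}^{s_{\max}} w_\lambda(s)\,\Pr\bigl(\eta^{(s)}(Q_L,\zeta)\in E\bigr),$$
where $w_\lambda(s)$ is proportional to $\lambda^s Z_s/s!$, with $Z_s$ the Lebesgue measure of the set of ordered $s$-tuples of points of $Q_L$ which, together with $\zeta\setminus Q_L$, form an element of $\Omega(\mathbb R^d)$; moreover, conditionally on having exactly $s$ points in $Q_L$, the model $\eta^{[\lambda]}(Q_L,\zeta)$ has the law of $\eta^{(s)}(Q_L,\zeta)$.

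Two observations then finish the argument. First, $Z_{s_{\max}}>0$: any valid $s_{\max}$-tuple satisfies only finitely many strict (hence open) inequalities --- membership in $Q_L$, pairwise distances $>2$, and distance $>2$ from the finitely many points of $\zeta\setminus Q_L$ near $Q_L$ --- so the set of valid $s_{\max}$-tuples is open and nonempty, hence of positive measure. Consequently
$$w_\lambda(s_{\max})=\Bigl(1+\sum_{s'<s_{\max}}\lambda^{s'-s_{\max}}\,\tfrac{Z_{s'}/s'!}{Z_{s_{\max}}/s_{\max}!}\Bigr)^{-1}\to 1\quad\text{as }\lambda\to\infty,$$
being the reciprocal of $1$ plus a finite sum of negative powers of $\lambda$; I would pick $\lambda_0$ so that $w_\lambda(s_{\max})>1-p/2$ whenever $\lambda>\lambda_0$ (we may assume $0<p<1$, the case $p\geq 1$ being trivial or vacuous). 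Second, no element of $\Omega^{(s_{\max})}(Q_L,\zeta)$ admits an empty $\varepsilon$-space in $Q_L$: if $\xi$ did, with witness $w$, then $B_\varepsilon(w)\subset Q_L$ forces $w\in Q_L$ while $\dist(w,\xi)>2+\varepsilon>2$ forces $\xi\cup\{w\}\in\Omega^{(s_{\max}+1)}(Q_L,\zeta)$, contradicting the maximality of $s_{\max}$.

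Taking $s=s_{\max}$ and keeping only the top term of the mixture, under~\eqref{eq:pois_to_unif:pois} one gets
$$\Pr\bigl(\eta^{(s_{\max})}(Q_L,\zeta)\in E\bigr)\le\frac{\Pr\bigl(\eta^{[\lambda]}(Q_L,\zeta)\in E\bigr)}{w_\lambda(s_{\max})}\le\frac{1-p}{1-p/2}\le 1-\frac p2,$$
the last inequality rearranging to $0\le p^2/4$; this is~\eqref{eq:pois_to_unif:p}. Inequality~\eqref{eq:pois_to_unif:q} holds even with $0$ in place of $q$, by the second observation. I do not expect a genuine obstacle here: the only points requiring a little care are the clean derivation of the mixture formula and the strict positivity $Z_{s_{\max}}>0$, and it is worth noting that $q$ and $\varepsilon$ become irrelevant once $s=s_{\max}$ is chosen.
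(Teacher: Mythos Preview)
Your argument has a genuine gap: the threshold $\lambda_0$ in the lemma is quantified \emph{before} $\zeta$ and $E$, so it must work uniformly over all boundary conditions (and this uniformity is exactly what the application to Proposition~\ref{prop:large_circuit_pois} requires). Your choice of $\lambda_0$, however, depends on $\zeta$ through the ratios $Z_{s'}/Z_{s_{\max}}$, and these are not uniformly bounded. By placing points of $\zeta$ just outside $Q_L$ one can squeeze the set of valid $s_{\max}$-tuples into arbitrarily small Lebesgue measure while $Z_{s_{\max}-1}$ stays bounded below. A clean one-dimensional instance: with $Q_L=(-3,3)$ and $\zeta=\{3+\delta\}$ one has $s_{\max}=3$, but $Z_3$ is of order $\delta^3$ while $Z_2$ is of order $1$, so $w_\lambda(s_{\max})\to 0$ as $\delta\to 0$ for any fixed $\lambda$. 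Your closing remark that $q$ and $\varepsilon$ become irrelevant is a symptom of this problem: in a correct proof $\lambda_0$ must depend on both.

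The paper avoids committing to $s=s_{\max}$. It argues by contradiction: suppose every $s$ violates either~\eqref{eq:pois_to_unif:p} or~\eqref{eq:pois_to_unif:q}. For each $s$ at which the empty-$\varepsilon$-space probability exceeds $q$, integrating over the location of an additional point gives the $\zeta$-independent ratio bound $A_{s+1}\ge \beta\varepsilon^d q\,A_s$ (with $\beta=|B_1(\mathbf 0)|$); since there are at most $|Q_{L+1}|/\beta$ such values of $s$, their total Poisson weight is at most $p/2$ once $\lambda>\lambda_0=\dfrac{2|Q_{L+1}|}{\beta^2\varepsilon^d pq}$. The remaining weight, at least $1-p/2$, then sits on values of $s$ with $\Pr(\eta^{(s)}\in E)>1-p/2$, forcing $\Pr(\eta^{[\lambda]}\in E)>(1-p/2)^2>1-p$, the desired contradiction.
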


\begin{proof}
It is clear that there exists a non-negative integer $s_0 = s_0(\zeta)$ such that the uniform hard-disk model $\eta^{(s)}(Q_L, \zeta)$ is well-defined
for $s \in \{ 0, 1, \ldots, s_0 \}$ and undefined for $s > s_0$. If  $\xi \in \eta^{(s)}(Q_L, \zeta)$, then
$$\{ B_1(x) : x \in \xi \cap Q_L \}$$
is a packing of balls in $Q_{L + 1}$, and therefore
$$s = \# (\xi \cap Q_L) \leq \frac{|Q_{L + 1}|}{\beta},$$
where $\beta = |B_1(\mathbf 0)|$. Thus $s_0 \leq \frac{|Q_{L + 1}|}{\beta}$.

We will show that the choice
\begin{equation}\label{eq:lambda_0}
\lambda_0 = \frac{2 |Q_{L + 1}|}{\beta^2 \varepsilon^d pq}
\end{equation}
is sufficient.

Let
$$S_1(\zeta) = \{ s \in \{ 0, 1, \ldots, s_0(\zeta) \} : \text{\eqref{eq:pois_to_unif:p} is false} \},$$
$$S_2(\zeta) = \{ s \in \{ 0, 1, \ldots, s_0(\zeta) \} : \text{\eqref{eq:pois_to_unif:q} is false} \}.$$
Assume, for a contradiction, that a Poisson hard-disk model $\eta^{[\lambda]}(Q_L, \zeta)$ satisfies~\eqref{eq:pois_to_unif:pois} and
$$S_1(\zeta) \cup S_2(\zeta) = \{ 0, 1, \ldots, s_0(\zeta) \}.$$
With no loss of generality suppose $\zeta \cap Q_L = \varnothing$, since the replacement $\zeta \mapsto \zeta \setminus Q_L$ has no effect on any of the relevant models.
For the rest of the proof we will use the shortened notation $\eta$ for the Poisson model $\eta^{[\lambda]}(Q_L, \zeta)$.

A Poisson hard-disk model is known to be a weighted mixture of uniform hard-disk models. The weight assigned to a uniform model $\eta^{(s)}(Q_L, \zeta)$
($s \in \{ 0, 1, \ldots, s_0(\zeta) \}$) equals $\Pr(\# (\eta \cap Q_L)  = s)$ and satisfies the following expression:
$$\Pr(\# (\eta \cap Q_L)  = s) = \frac{\frac{\lambda^s}{s!} A_s(\zeta)}{\sum\limits_{i = 0}^{s_0(\zeta)}\frac{\lambda^i}{i!} A_i(\zeta)},$$
where
$$A_s(\zeta) = \int\limits_{(Q_L)^s} \mathbbm{1}(\{x_1, x_2, \ldots, x_s \} \cup (\zeta \setminus Q_L) \in \Omega(\mathbb R^d))\, dx_1 dx_2 \ldots dx_s.$$
(See, for instance,~\cite[Section 2]{Ar}.)

Consider an arbitrary integer $s \in S_2(\zeta)$. By definition of $S_2(\zeta)$, the space $\Omega^{(s)}(Q_L, \zeta)$
contains configurations that can be extended to a larger configuration by adding one point from $Q_L$. Therefore $s + 1 \leq s_0$. Moreover, if
$$F^{empty}(\zeta) = \{ \xi \in \Omega(\mathbb R^d) : \text{$\xi \setminus Q_L = \zeta$ and $\xi$ admits an empty $\varepsilon$-space in $Q_L$} \},$$
then
\begin{multline*}
A_{s + 1}(\zeta) = \int\limits_{(Q_L)^{s + 1}} \mathbbm{1}(\{x_1, x_2, \ldots, x_{s + 1} \} \cup \zeta \in \Omega(\mathbb R^d))\, dx_1 dx_2 \ldots dx_{s + 1} = \\
\int\limits_{(Q_L)^s} dx_1 dx_2 \ldots dx_s \Biggl[ \mathbbm{1}(\{x_1, x_2, \ldots, x_s \} \cup \zeta \in \Omega(\mathbb R^d)) \times \\
\int\limits_{Q_L} \mathbbm{1}(\dist(x_{s + 1}, \{x_1, x_2, \ldots, x_s \} \cup \zeta) \geq 2)\, dx_{s + 1} \Biggr] \geq \\
\int\limits_{(Q_L)^s} dx_1 dx_2 \ldots dx_s \Biggl[ \mathbbm{1}(\{x_1, x_2, \ldots, x_s \} \cup \zeta \in F^{empty}(\zeta)) \times \\
\int\limits_{Q_L} \mathbbm{1}(\dist(x_{s + 1}, \{x_1, x_2, \ldots, x_s \} \cup \zeta) \geq 2)\, dx_{s + 1} \Biggr] \geq \\
\beta \varepsilon^d \int\limits_{(Q_L)^s} \mathbbm{1}(\{x_1, x_2, \ldots, x_s \} \cup \zeta \in F^{empty}(\zeta)) \, dx_1 dx_2 \ldots dx_s \geq \beta \varepsilon^d q A_s.
\end{multline*}
Therefore
$$\Pr(\# (\eta \cap Q_L) = s + 1 ) \geq \frac{\lambda \beta \varepsilon^d q}{s + 1} \Pr(\# (\eta \cap Q_L) = s) \geq \frac{\lambda \beta \varepsilon^d q}{s_0} \Pr(\# (\eta \cap Q_L) = s).$$
Thus~\eqref{eq:lambda_0} and the assumption $\lambda > \lambda_0$ imply
$$\Pr(\# (\eta \cap Q_L) \in S_2(\zeta) ) \leq \frac{s_0}{\lambda \beta \varepsilon^d q} \leq \frac{1}{\lambda} \cdot \frac{|Q_{L + 1}|}{\beta^2 \varepsilon^d q} \leq \frac{p}{2}.$$

From the assumption $S_1(\zeta) \cup S_2(\zeta) = \{ s \in \{ 0, 1, \ldots, s_0(\zeta) \}$ we conclude
$$\Pr(\# (\eta \cap Q_L) \in S_1(\zeta) ) \geq 1 - \frac{p}{2}.$$
Consequently,
\begin{equation*}
\begin{aligned}
& \Pr(\eta \in E) \geq \\
& \sum\limits_{s \in S_1(\zeta)} \Pr \left( \eta \in E \bigm\vert \# (\eta \cap Q_L) = s \right) \cdot \Pr(\# (\eta \cap Q_L) = s) > \\
& \sum\limits_{s \in S_1(\zeta)} \left( 1 - \frac{p}{2} \right) \cdot \Pr(\# (\eta \cap Q_L) = s) = \\
& \left( 1 - \frac{p}{2} \right) \cdot \Pr(\# (\eta \cap Q_L) \in S_1) \geq \\
& \left( 1 - \frac{p}{2} \right)^2 > 1 - p.
\end{aligned}
\end{equation*}
A contradiction to the second inequality of~\eqref{eq:pois_to_unif:pois} finishes the proof.\end{proof}

We apply Lemma~\ref{lem:pois_to_unif} obtaining the following Proposition~\ref{prop:large_circuit_pois} on the probability to see a large circuit.
For the explicit definition of a large circuit the reader may refer to Definition~\ref{def:large_circuit}.

\begin{prop}[Large Circuit Lemma, Poisson version]\label{prop:large_circuit_pois}
Let $\varepsilon, p, \rho > 0$. Then there exists $L > 2\rho$ and $\lambda_0 > 0$ such that the inequality
$$\Pr \left( \text{$\eta^{[\lambda]}(Q_L, \zeta)$ has a large circuit in $Q_L$} \right) > 1 - p$$
holds for all boundary conditions $\zeta \in \Omega(\mathbb R^2)$ and all $\lambda > \lambda_0$.
\end{prop}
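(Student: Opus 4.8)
The plan is to obtain this proposition as a ``Poissonization'' of the (uniform-model) Large Circuit Lemma, with Lemma~\ref{lem:pois_to_unif} playing the role of the bridge; beyond that bridge the argument is essentially a matter of tuning the parameters.

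First I would record a harmless reduction. Enlarging $\varepsilon$ only adds edges to $G_\varepsilon(\xi)$, so the event ``$\xi$ has a large circuit in $Q_L$'' is monotone non-decreasing in $\varepsilon$; hence it is enough to prove the statement for $\varepsilon$ replaced by $\min(\varepsilon,\varepsilon_0/2)$, and from now on $\varepsilon$ may be assumed small enough that the Large Circuit Lemma and the results it rests on are available. Now, given $p>0$, I would invoke the Large Circuit Lemma with parameter $p/2$, producing $q>0$ and $L>0$ such that for all boundary conditions $\zeta$ and all admissible $s$ the uniform model $\eta^{(s)}(Q_L,\zeta)$ satisfies (LC1) with constant $q$ or (LC2) with constant $p/2$. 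Because Lemmas~\ref{lem:dense} and~\ref{lem:sparse}, and hence the Large Circuit Lemma, hold for every $L$ above an explicit threshold, I can also require $L>2\rho$. Finally I would apply Lemma~\ref{lem:pois_to_unif} in dimension $d=2$ to the measurable set $E=\{\xi:\xi\text{ has a large circuit in }Q_L\}$, but taking the ``empty-space'' tolerance there to be $\varepsilon/2$ instead of $\varepsilon$; this produces the constant $\lambda_0$, and it completes the choice of constants in the proposition.

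The proof then closes by contradiction. Suppose $\lambda>\lambda_0$, $\zeta$ is arbitrary, and $\Pr(\eta^{[\lambda]}(Q_L,\zeta)\in E)\le 1-p$. Lemma~\ref{lem:pois_to_unif} yields an $s$ such that the uniform model $\eta^{(s)}(Q_L,\zeta)$ satisfies both $\Pr(\eta^{(s)}(Q_L,\zeta)\in E)\le 1-\frac{p}{2}$ and $\Pr(\eta^{(s)}(Q_L,\zeta)\text{ admits an empty }(\varepsilon/2)\text{-space in }Q_L)\le q$. But by the choice of $q$ and $L$ this same model satisfies (LC1) or (LC2). The case (LC2) directly contradicts the first inequality. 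In the case (LC1), with probability exceeding $q$ there is a $y$ with $B_{2+\varepsilon}(y)\subset Q_L$ and $B_{2+\varepsilon}(y)\cap\eta^{(s)}(Q_L,\zeta)=\varnothing$; then $w=y$ satisfies $B_{\varepsilon/2}(w)\subseteq B_{2+\varepsilon}(y)\subseteq Q_L$ together with $\dist(w,\eta^{(s)}(Q_L,\zeta))\ge 2+\varepsilon>2+\varepsilon/2$, so the (LC1) event is contained in the empty-$(\varepsilon/2)$-space event and the second inequality is violated. Either way we have a contradiction, so $\Pr(\eta^{[\lambda]}(Q_L,\zeta)\in E)>1-p$.

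I do not anticipate a real obstacle: the substance is already carried by the Large Circuit Lemma and Lemma~\ref{lem:pois_to_unif}. The only point needing attention is the gap between the non-strict inequality $\dist(y,\eta)\ge 2+\varepsilon$ that (LC1) delivers and the strict inequality in the definition of an empty $\varepsilon$-space; running Lemma~\ref{lem:pois_to_unif} with the slightly smaller tolerance $\varepsilon/2$ — while keeping the large-circuit event, which is defined with the original $\varepsilon$, unchanged — closes this gap at no cost. The remaining bookkeeping (that $\varepsilon$ may be taken small and $L$ taken larger than $2\rho$) is handled by the monotonicity reduction and by the uniformity in $L$ of the Large Circuit Lemma.
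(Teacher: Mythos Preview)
Your proposal is correct and follows exactly the route the paper intends: the paper's own proof is the single line ``Follows immediately from the Large Circuit Lemma and Lemma~\ref{lem:pois_to_unif}.'' You have merely spelled out the contrapositive argument and attended to two bookkeeping points the paper leaves implicit --- taking $L$ large enough to exceed $2\rho$ (which is indeed available from the proofs of Lemmas~\ref{lem:dense} and~\ref{lem:sparse}) and slackening $\varepsilon$ to $\varepsilon/2$ in the empty-space event to avoid the strict/non-strict distance mismatch between (LC1) and Definition~\ref{def:empty}.
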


\begin{proof}
Follows immediately from the Large Circuit Lemma and Lemma~\ref{lem:pois_to_unif}.
\end{proof}

\noindent {\bf Step 2.} Large circuits of fixed size percolate in the sense of a certain discrete Peierls-type lemma.

Again, the key statement of this step, Lemma~\ref{lem:peierls} is stated for an arbitrary dimension $d$, since it could be useful in possible generalizations.

We start with several definitions.

\begin{defn}
A point set $\{ u_1, u_2, \ldots, u_m \} \subset \mathbb Z^d$ is called {\it neighbor-free} if $\| u_i - u_j \|_{L_{\infty}} > 1$ for every $1 \leq i < j \leq m$.
\end{defn}

\begin{rem*}
The definition above uses the $L_{\infty}$ distance. By the $L_{\infty}$ norm of a vector we mean, as usual, the largest absolute value of its coordinates.
\end{rem*}

\begin{defn}
Let a dimension $d \geq 2$, a real number $p \in (0, 1)$ and a positive integer $M$ be given. Let $\tau \subseteq \mathbb Z^d$ be a random point set.
Assume that for every neighbor-free set $\{ u_1, u_2, \ldots, u_m \} \subset \mathbb Z^d \cap Q_N$ the inequality
$$\Pr( \{ u_1, u_2, \ldots, u_m \} \subseteq \mathbb Z^d \setminus \tau ) \geq p^m$$
holds. Then the random set $\tau$ will be called {\it $p$-dense} in $Q_N$.
\end{defn}

\begin{defn}\label{def:discrete_crossing}
A point set $\tau \subseteq \mathbb Z^d$ will be called $(M, N)$-crossing, where $M, N \in \mathbb Z$ and $0 < M \leq N$, if there exists
a sequence of points $t_1, t_2, \ldots, t_k \in \tau$ such that $\| t_{i + 1} - t_i \| = 1$, $t_1 \in \partial Q_M$, $t_k \in \partial Q_N$.
\end{defn}

\begin{lem}\label{lem:peierls}
Let $d \geq 2$ be a fixed dimension. Then there exist positive real numbers $c_1, C_1, C_2 > 0$ such that the following holds.
If $N$ is a positive integer, $p < \min (1, 1 / C_2)$ is a positive real number and $\tau \subseteq \mathbb Z^d$ is a random point set
satisfying the $p$-density property in $Q_N$, then the inequality
$$\Pr( \text{$\tau$ is $(M, N)$-crossing} ) \geq 1 - C_1(C_2p)^{c_1M^{d - 1}}$$
holds for every $M \in \{1, 2, \ldots, N \}$.
\end{lem}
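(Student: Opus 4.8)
The plan is to estimate the complementary event. Work in the nearest-neighbour graph on $\mathbb Z^d\cap\overline{Q_N}$, and call the lattice points with $\max_i|x_i|=M$ (resp.\ $=N$) the \emph{inner shell} and \emph{outer shell}. If $\tau$ is not $(M,N)$-crossing, let $C$ be the set of vertices joined to the inner shell by a $\tau$-path inside $Q_N$ (the case $\tau\cap\partial Q_M=\varnothing$ is trivial: the inner shell itself plays the role of $S$ below). Then $C$ misses the outer shell, and its outer vertex boundary $S=\{\,w\in(\mathbb Z^d\cap Q_N)\setminus C : w\text{ has a neighbour in }C\,\}$ lies entirely in $\mathbb Z^d\setminus\tau$. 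By the classical fact that the outer boundary of a connected subset of $\mathbb Z^d$ is connected in the $L_\infty$-adjacency, $S$ is $L_\infty$-connected; being a vertex cutset, it separates the inner shell from the outer shell, i.e.\ it ``surrounds'' $Q_M$ inside $Q_N$. Thus it suffices to control the probability that $\mathbb Z^d\setminus\tau$ contains such a blocking surface.

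\textbf{Geometry of a blocking surface.} Two deterministic facts about any $L_\infty$-connected $S$ that surrounds $Q_M$ are needed. First, a size bound: for every $(a_2,\dots,a_d)\in\mathbb Z^{d-1}$ with $\max_i|a_i|<M$, the coordinate line $\{(t,a_2,\dots,a_d):t\in\mathbb Z\}\cap\overline{Q_N}$ is a nearest-neighbour path joining the two shells, so $S$ meets it; distinct tuples give disjoint lines, hence $\#S\ge(2M-1)^{d-1}\ge M^{d-1}$. Second, a cheap anchor: since $S$ blocks the two half-lines $\{t e_1 : M\le t\le N\}$ and $\{t e_1:-N\le t\le -M\}$, it contains points $t_0 e_1$ and $t_1 e_1$ with $t_0>M>-M>t_1$; as $S$ is $L_\infty$-connected with $\#S=:m$, it contains an $L_\infty$-path between these points, forcing $m\ge t_0-t_1>t_0$. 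Therefore $S$ contains one of the at most $m$ lattice points $\{j e_1 : M<j<m\}$ --- an anchor depending only on $m$ and $M$, never on $N$, which is exactly what keeps the final bound free of $N$.

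\textbf{Entropy, energy and the union bound.} The number of $L_\infty$-connected subsets of $\mathbb Z^d$ of size $m$ containing a fixed vertex is at most $\kappa^m$ for a dimensional constant $\kappa$ (the usual greedy spanning-tree encoding in the bounded-degree $L_\infty$-adjacency graph). From any $L_\infty$-connected $S$ with $\#S=m$ one greedily extracts a neighbour-free subset $S'\subseteq S$ with $\#S'\ge m/3^d$ (each chosen point kills at most $3^d$ candidates); by the $p$-density of $\tau$ applied to $S'$, $\Pr(S\subseteq\mathbb Z^d\setminus\tau)\le\Pr(S'\subseteq\mathbb Z^d\setminus\tau)\le p^{\#S'}\le p^{m/3^d}$. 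Writing $D=3^d$ and summing over the anchor ($\le m$ choices), over the $\le\kappa^m$ shapes, and over $m\ge M^{d-1}$,
$$\Pr(\tau\text{ is not }(M,N)\text{-crossing})\ \le\ \sum_{m\ge M^{d-1}} m\,\kappa^m\,p^{m/D}.$$
Taking $C_2=(2\kappa)^D$, the hypothesis $p<1/C_2$ yields $\kappa p^{1/D}<\tfrac12$, so the series converges and is bounded by $C_1\bigl(\kappa^D p\bigr)^{M^{d-1}/D}\le C_1(C_2 p)^{c_1 M^{d-1}}$ with $c_1=1/D$ and an absolute constant $C_1$; this is the claimed estimate.

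\textbf{Main obstacle.} The probabilistic portion (entropy/energy/summation) is routine once the deterministic skeleton is in place, so the delicate point is Steps~1--2: making rigorous that a cluster-boundary (equivalently, a minimal) separating set in $\mathbb Z^d$ is $L_\infty$-connected and genuinely encloses $Q_M$ inside $Q_N$ --- standard but requiring some care in dimensions $d\ge 3$ --- and, within that, producing the $O(m)$ anchor rather than an $O(N)$ one. It is precisely this anchoring that makes the right-hand side independent of $N$, as it must be in view of how this lemma will feed into the Main Theorem.
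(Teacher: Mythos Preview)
Your argument is correct and follows the same Peierls philosophy as the paper, but with a genuinely different encoding of the ``contour.'' The paper works with the classical plaquette surface: it fills in unit cubes around the points reachable from the outside, takes the cell-complex boundary, isolates the unique contour $\mathcal C_0$ enclosing the origin, and then invokes the literature (Ruelle, Lebowitz--Mazel, Balister--Bollob\'as) for the bound $\#\{\mathcal C_0:\#\mathcal C_0=i\}\le C_3^i$. You instead work with the vertex outer-boundary $S$ of the inward $\tau$-cluster, use Tim\'ar-type $L_\infty$-connectedness of $S$, and count $L_\infty$-connected lattice animals directly. The energy step is the same in both proofs: pass to a neighbour-free subset of positive proportion and apply $p$-density.

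What each approach buys: the plaquette route is the textbook one and lets the paper outsource the delicate counting/anchoring to references; your vertex-cutset route is more self-contained and makes the $N$-independence completely explicit via the nice observation that $S$ must hit both half-axes $\{te_1:\pm t\ge M\}$, forcing an anchor $t_0e_1$ with $M\le t_0< m=\#S$ and hence only $O(m)$ anchor choices. One point to keep tidy in your write-up: to invoke $L_\infty$-connectedness of $S$ you need the inward set $C$ to be (nearest-neighbour) connected, so it is cleanest to take $C$ to be the union of $Q_M\cap\mathbb Z^d$ together with all $\tau$-vertices joined to $\partial Q_M$ by a $\tau$-path; this makes $C$ connected (via the solid cube) without affecting $S\subseteq\mathbb Z^d\setminus\tau$ outside $Q_M$, and your size and anchor estimates go through unchanged.
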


\begin{proof}
Denote
$$E_{M, N} = \{ \upsilon \subseteq \mathbb Z^d : \text{$\upsilon$ is not $(M, N)$-crossing} \}.$$
Let us construct a map $\sigma_{M, N} : E_{M, N} \to 2^{\mathbb Z^d}$ as follows:
\begin{multline*}
\sigma_{M, N}(\upsilon) = (\mathbb Z^d \setminus Q_{M'}) \cup \sigma'_{M, N}(\upsilon), \quad \text{where} \\
\sigma'_{M, N}(\upsilon) = \{ t \in \mathbb Z^d \cap Q_{M'} : \text{$\exists t_1, t_2, \ldots, t_k \in \upsilon$ such that} \\
\text{$t_1 = t$, $t_k \in \partial Q_{M'}$ and $\| t_{i + 1} - t_i \| = 1$} \}.
\end{multline*}

Take an arbitrary set $\sigma \in \sigma_{M, N}(E_{M, N})$. Let us surround each point of $\sigma$ with a closed unit cube. We will
consider the boundary $\mathcal C(\sigma)$ of the union of all such cubes, i.e.,
$$\mathcal C(\sigma) = \partial \cl \left( \bigcup \limits_{x \in \sigma} (Q_{0.5} + x) \right).$$
In the framework of~\cite[Section~5.3]{Ru} the surface $\mathcal C(\sigma)$ can be decomposed into the so-called {\it Peierls contours}.
Each contour is a union of {\it plaquettes}, where a plaquette is a facet of some unit cube $Q_{0.5} + y$, $y \in \mathbb Z^d$.
A contour possesses the structure of adjacency of plaquettes. Each plaquette is declared adjacent to $2(d - 1)$ other plaquettes of the same contour,
one adjacency for each $(d - 2)$-face of a plaquette. Two plaquettes adjacent by a $(d - 2)$-face share that face; the converse is not guaranteed:
two plaquettes of the same contour sharing a $(d - 2)$-face may be adjacent by that face, but also may be non-adjacent.

A point $x \in \mathbb R^d$ is said to be {\it inside} a contour $\mathcal C' \subseteq \mathcal C(\sigma)$ if every sufficiently generic ray from the point $x$ intersects
an odd number of plaquettes of $\mathcal C'$. Then there is a unique contour $\mathcal C_0(\sigma) \subseteq \mathcal C(\sigma)$ such that the origin $\mathbf 0$
is inside $\mathcal C_0(\sigma)$.

Next, choose an arbitrary contour $\mathcal C_0 \in \mathcal C_0 \circ \sigma_{M, N}(E_{M, N})$.
Let $\# \mathcal C_0$ denote the {\it size} of the contour $\mathcal C_0$, i.e., the number of its plaquettes.
Since no plaquette $\mathcal C$ lies inside $Q_M$, the entire cube $Q_M$ lies in the interior of $\mathcal C_0$. Therefore the following inequality holds:
$$\# \mathcal C_0 \geq c_2 M^{d - 1},$$
where $c_2$ is a positive number, depending only on $d$.

We will say that an integer point $y \in \mathbb Z^d$ {\it approaches $\mathcal C_0$ from inside} if the cube $Q_{0.5} + y$ is inside $\mathcal C_0$ and its boundary,
$\partial(Q_{0.5} + y)$ has a common plaquette with $\mathcal C_0$. Let $\chi(\mathcal C_0)$ denote the set of all integer points approaching $\mathcal C_0$ from inside. 
Then there exists a positive number $c_3$, depending only on $d$, such that
$$\# \chi(\mathcal C_0) \geq c_3 \# \mathcal C_0(\sigma).$$
Moreover, it is possible to choose an neighbor-free set $\chi'(\mathcal C_0) \subseteq \chi(\mathcal C_0)$ such that
$$\# \chi'(\mathcal C_0) \geq c_4 \# \mathcal C_0(\sigma),$$
where $c_4$ is a positive number, depending only on $d$.

Let $\upsilon \in E_{M, N}$ and $\mathcal C_0 \circ \sigma_{M, N}(\upsilon) = \mathcal C_0$. Then, clearly, $\chi(\mathcal C_0) \subseteq \mathbb Z^d \setminus \upsilon$,
and thus $\chi'(\mathcal C_0) \subseteq \mathbb Z^d \setminus \upsilon$. Therefore, by the $p$-density of a random set $\tau$ one has
$$\Pr \left( \text{$\tau \in E_{M, N}$ and $\mathcal C_0 \circ \sigma_{M, N}(\tau) = \mathcal C_0$} \right) \leq \Pr(\chi'(\mathcal C_0) \subseteq \mathbb Z^d \setminus \tau) \leq p^{c_4 \# \mathcal C_0}.$$
Taking the sum over all $\mathcal C_0 \in \mathcal C_0 \circ \sigma_{M, N}(E_{M, N})$, one obtains
$$\Pr \left( \tau \in E_{M, N} \right) \leq \sum\limits_{i \geq c_2 M^{d - 1}} \left( \# \{\mathcal C_0 \in \mathcal C_0 \circ \sigma_{M, N}(E_{M, N}) : \# \mathcal C_0 = i \} \cdot p^{c_4 i} \right),$$
assuming that the right-hand side converges. But
$$\# \{\mathcal C_0 \in \mathcal C_0 \circ \sigma_{M, N}(E_{M, N}) : \# \mathcal C_0 = i \} \leq C_3^i$$
see~\cite{Ru, LM, BB}). Consequently,
$$\Pr \left( \tau \in E_{M, N} \right) \leq \sum\limits_{i \geq c_2 M^{d - 1}} C_3^i \cdot p^{c_4i} \leq C_1(C_2p)^{c_1M^{d - 1}},$$
which finishes the proof.
\end{proof}

Now we turn to the corollary for the Poisson hard-disk model.

\begin{prop}\label{prop:discrete_crossing}
Let $\varepsilon, p, \rho > 0$. There exist $L > 2\rho$ and $\lambda_0 > 0$ such that the following holds. For every 2-dimensional Poisson
hard-disk model $\eta = \eta^{[\lambda]}(Q_{L'}, \zeta)$, where $L' > 10L$ and $\lambda > \lambda_0$, and every integer $M < \frac{L'}{L} - 2$ the random integer-point set
\begin{equation}\label{eq:tau}
\tau_{\varepsilon, L}(\eta) = \{ t \in \mathbb Z^2 : \text{$\xi - Lt$ has a large $\varepsilon$-circuit in $Q_L$} \}
\end{equation}
satisfies the property
\begin{equation}\label{eq:discrete_crossing}
\Pr\left( \text{$\tau_{\varepsilon, L}(\eta)$ is $\left(M, \left\lfloor L'/L \right\rfloor - 1 \right)$-crossing} \right) \geq 1 - C_1(C_2p)^{c_1M},
\end{equation}
where $c_1, c_2, C_1$ are positive absolute constants.
\end{prop}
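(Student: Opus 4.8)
The plan is to deduce this from Proposition~\ref{prop:large_circuit_pois} (the Poisson form of the Large Circuit Lemma) together with the Peierls-type Lemma~\ref{lem:peierls}; the bridge is to show that the random point set $\tau_{\varepsilon,L}(\eta)$ is $p_0$-dense, in the sense of Lemma~\ref{lem:peierls}, for a suitable $p_0$. Concretely, let $c_1,C_1,C_2>0$ be the (absolute) constants furnished by Lemma~\ref{lem:peierls} in dimension $d=2$, put $p_0=\min\bigl(p,\tfrac{1}{2C_2}\bigr)$, and apply Proposition~\ref{prop:large_circuit_pois} with parameters $\varepsilon$, $p_0$, $\rho$. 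This produces a length $L>2\rho$ and a threshold $\lambda_0>0$ such that
$$\Pr\bigl(\eta^{[\lambda]}(Q_L,\zeta')\ \text{has a large circuit in}\ Q_L\bigr)>1-p_0$$
for every $\lambda>\lambda_0$ and \emph{every} $\zeta'\in\Omega(\mathbb R^2)$. The fact that this estimate is uniform over all boundary conditions is what drives the rest of the argument.

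I would then establish the $p_0$-density. Fix $\eta=\eta^{[\lambda]}(Q_{L'},\zeta)$ with $L'>10L$ and $\lambda>\lambda_0$, write $\tau=\tau_{\varepsilon,L}(\eta)$ and $N=\lfloor L'/L\rfloor-1$. Let $\{u_1,\dots,u_m\}\subseteq\mathbb Z^2\cap Q_N$ be neighbor-free; then the open cubes $D_k:=Q_L+Lu_k$ are pairwise disjoint and satisfy $D_k\subset Q_{L'-L}\subset Q_{L'}$. Put $A_k=\{u_k\notin\tau\}$, the event that $\eta-Lu_k$ has no large circuit in $Q_L$. Since the cycle witnessing such a circuit, together with all its graph edges, lies in the open annulus $(Q_L\setminus Q_{0.9L})+Lu_k\subset D_k$, the event $A_k$ is $\sigma(\eta\cap D_k)$-measurable. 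Set $\mathcal G_k=\sigma\bigl(\eta\cap(Q_{L'}\setminus D_k)\bigr)$. By the spatial Markov property of the Poisson hard-disk model (the single-region instance of Lemma~\ref{lem:finite_gibbs}), the conditional law of $\eta\cap D_k$ given $\mathcal G_k$ is that of $\eta^{[\lambda]}(D_k,\zeta_k')\cap D_k$, where $\zeta_k'\in\Omega(\mathbb R^2)$ records the configuration outside $D_k$; hence, by translation covariance and the estimate above applied with boundary condition $\zeta_k'-Lu_k$, we get $\Pr(A_k\mid\mathcal G_k)<p_0$ almost surely. Now for each $j\neq m$ the cube $D_j$ is disjoint from $D_m$, so $A_j$ is $\mathcal G_m$-measurable; consequently $\Pr\bigl(\bigcap_{k=1}^mA_k\bigr)=\operatorname{E}\bigl[\mathbbm 1_{A_1\cap\cdots\cap A_{m-1}}\Pr(A_m\mid\mathcal G_m)\bigr]\le p_0\,\Pr\bigl(\bigcap_{k=1}^{m-1}A_k\bigr)$, and iterating yields $\Pr\bigl(\{u_1,\dots,u_m\}\subseteq\mathbb Z^2\setminus\tau\bigr)\le p_0^{\,m}$. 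Thus $\tau$ is $p_0$-dense in $Q_N$.

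Finally I would feed this into Lemma~\ref{lem:peierls}. As $p_0<\min(1,1/C_2)$, the lemma gives $\Pr\bigl(\tau\ \text{is}\ (M,N)\text{-crossing}\bigr)\ge 1-C_1(C_2p_0)^{c_1M}$ for every $M\in\{1,\dots,N\}$. The hypothesis $M<L'/L-2$ forces $M+2\le\lfloor L'/L\rfloor$, i.e.\ $M\le N-1$, so the lemma applies; and since $0<C_2p_0\le\min(C_2p,1)$ we have $(C_2p_0)^{c_1M}\le(C_2p)^{c_1M}$, whence
$$\Pr\bigl(\tau_{\varepsilon,L}(\eta)\ \text{is}\ (M,\lfloor L'/L\rfloor-1)\text{-crossing}\bigr)\ \ge\ 1-C_1(C_2p)^{c_1M},$$
which is the assertion.

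The main obstacle is the verification of $p_0$-density. One cannot simply invoke Lemma~\ref{lem:finite_gibbs} for the whole neighbor-free family at once: for such a family the cubes $D_k$ merely touch along faces, they are nowhere near $\rho$-separated, so the conditional measure does not factorize over them. The remedy is to condition on the complement of a \emph{single} cube at a time, which requires only pairwise disjointness of the $D_k$ and is harmless precisely because Proposition~\ref{prop:large_circuit_pois} holds uniformly over all boundary conditions in $\Omega(\mathbb R^2)$. The remaining points — that the large-circuit event is localized inside $D_k$, and that every cube indexed by $Q_N$ sits inside the sampling window $Q_{L'}$ — are immediate from the definitions.
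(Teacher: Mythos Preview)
Your proof is correct and follows the same strategy as the paper: invoke Proposition~\ref{prop:large_circuit_pois} to fix $L$ and $\lambda_0$, establish $p$-density of $\tau_{\varepsilon,L}(\eta)$, and conclude via Lemma~\ref{lem:peierls}. Your treatment of the $p$-density step is in fact more careful than the paper's one-line appeal to Lemma~\ref{lem:finite_gibbs}: you rightly observe that for a neighbor-free family the boxes $Q_L+Lu_k$ may merely touch and hence fail the $\rho$-separation hypothesis of that lemma, and you repair this by conditioning on the complement of one box at a time and iterating, which works precisely because the large-circuit estimate is uniform over all boundary conditions.
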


\begin{proof}
Choose $L$ and $\lambda_0$ as in Proposition~\ref{prop:large_circuit_pois}.
By Lemma~\ref{lem:peierls}, it is sufficient to prove the $p$-density of the set $\tau_{\varepsilon, L}(\eta)$. But this is an immediate corollary
of Lemma~\ref{lem:finite_gibbs} for $D_i = Q_L + L u_i$ if $\{ u_1, u_2, \ldots, u_m \} \subseteq \mathbb Z^d$ is the relevant neighbor-free set.\end{proof}

\noindent {\bf Step 3.} The event on the left-hand side of~\eqref{eq:discrete_crossing} implies the annulus crossing.

\begin{prop}\label{prop:dc_to_ac}
Let a real number $L > 0$ two positive integers $M < N$ be given. Assume that a configuration $\xi \in \Omega(\mathbb R^2)$ is such that the
set $\tau_{\varepsilon, L}(\xi)$, defined by~\eqref{eq:tau}, is $(M, N)$-crossing. Then
$$\xi \in \AnnCross(\varepsilon, ML, NL),$$
i.e., there is a connected component of the graph $G_{\varepsilon}(\xi)$ with a vertex in $Q_{ML}$ and another vertex in $\mathbb R^2 \setminus Q_{NL}$.
\end{prop}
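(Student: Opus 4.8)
The plan is to translate the combinatorial hypothesis into a chain of overlapping large circuits and then fuse them into a single connected component of $G_\varepsilon(\xi)$ by a winding\nobreakdash-number argument; throughout we use that $L$ exceeds an absolute constant, which holds in every application of the proposition (where $L>2\rho$). By Definition~\ref{def:discrete_crossing} the hypothesis gives points $t_1,\dots,t_k\in\tau_{\varepsilon,L}(\xi)$ with $\|t_{i+1}-t_i\|=1$, $t_1\in\partial Q_M$ and $t_k\in\partial Q_N$. By~\eqref{eq:tau}, $\xi-Lt_i$ has a large circuit in $Q_L$, so, translating back, $\xi$ has one in $Q_L+Lt_i$: a cycle $\Gamma_i$ of $G_\varepsilon(\xi)$ all of whose vertices lie in $(Q_L\setminus Q_{0.9L})+Lt_i$ and whose closed polygonal realisation --- having edges of length $\le 2+\varepsilon$, tiny next to $L$ --- has winding number $\pm1$ about every point of the square $S_i:=Q_{0.8L}+Lt_i$ and winding number $0$ about every point of $\mathbb R^2\setminus(Q_{L+2}+Lt_i)$; in particular this realisation is disjoint from $S_i$ and from that outer complement. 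It then suffices to establish: (a) $\Gamma_i$ and $\Gamma_{i+1}$ lie in one connected component of $G_\varepsilon(\xi)$ for each $i$; (b) $\Gamma_1$ has a vertex in $Q_{ML}$; (c) $\Gamma_k$ has a vertex outside $Q_{NL}$.

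Step (a) is the crux. Fix $i$ and, after a symmetry, take $t_{i+1}=t_i+e_1$, so that $Lt_i$ and $Lt_{i+1}$ lie on a common horizontal line $\ell$, while $S_{i+1}$ straddles $\ell$, reaching both into $S_i$ and beyond the far vertical side of $Q_L+Lt_i$. Since $\Gamma_{i+1}$ winds once about $S_{i+1}$, its realisation meets $\ell$ on each side of $S_{i+1}$; choosing vertices of $\Gamma_{i+1}$ within $2+\varepsilon$ of the near and far meeting points (and using $2+\varepsilon<0.6L$) produces a vertex $v$ of $\Gamma_{i+1}$ in $S_i$, where $\Gamma_i$ has winding number $\pm1$, and a vertex $w$ of $\Gamma_{i+1}$ in $\mathbb R^2\setminus(Q_{L+2}+Lt_i)$, where $\Gamma_i$ has winding number $0$. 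Now walk along the cycle $\Gamma_{i+1}$ from $v$ to $w$: the winding number of $\Gamma_i$ about the current vertex must change at some step, so either a vertex of $\Gamma_{i+1}$ lies on the realisation of $\Gamma_i$ --- being then on some edge $\{a,b\}$ of $\Gamma_i$, it is within $2+\varepsilon$ of $a$ or of $b$ --- or an edge $\{z,z'\}$ of $\Gamma_{i+1}$ crosses an edge $\{a,b\}$ of $\Gamma_i$ at an interior point $p$. In the latter case $\|a-p\|+\|p-b\|=\|a-b\|\le 2+\varepsilon$ and $\|z-p\|+\|p-z'\|=\|z-z'\|\le 2+\varepsilon$ let us pick $a_\ast\in\{a,b\}$ and $z_\ast\in\{z,z'\}$ with $\|a_\ast-p\|,\|z_\ast-p\|\le 1+\tfrac{\varepsilon}{2}$, whence $\|a_\ast-z_\ast\|\le 2+\varepsilon$, i.e. $\{a_\ast,z_\ast\}\in E_\varepsilon(\xi)$. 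Either way a vertex of $\Gamma_i$ and a vertex of $\Gamma_{i+1}$ are adjacent in $G_\varepsilon(\xi)$, so $\Gamma_i\cup\Gamma_{i+1}$ lies in one component; chaining over $i=1,\dots,k-1$ puts all of $\Gamma_1,\dots,\Gamma_k$ into one component $\mathfrak c$.

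For (b) and (c): since $t_1\in\partial Q_M$ we have $\|t_1\|_\infty=M$, so the side of the inner square $S_1=Q_{0.8L}+Lt_1$ facing the origin lies at $L_\infty$\nobreakdash-distance $ML-0.8L$ from it, and --- as $\Gamma_1$ winds once about $S_1$ while staying in $(Q_L\setminus Q_{0.9L})+Lt_1$ --- it carries a vertex within $2+\varepsilon$ of that side; since $2+\varepsilon<0.8L$, this vertex $u_1$ has $\|u_1\|_\infty<ML$, i.e. $u_1\in Q_{ML}$ (when $t_1$ sits at a corner of $\partial Q_M$ one argues identically, using the corner of the annulus nearest the origin). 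Symmetrically, $\|t_k\|_\infty=N$ yields a vertex $u_k$ of $\Gamma_k$ on the side of its inner square farthest from the origin, with $\|u_k\|_\infty>NL$, hence $u_k\notin Q_{NL}$. By (a), $u_1$ and $u_k$ lie in the single component $\mathfrak c$ of $G_\varepsilon(\xi)$; since $u_1\in Q_{ML}$ and $u_k\notin Q_{NL}$, this is exactly the statement $\xi\in\AnnCross(\varepsilon,ML,NL)$.

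I expect the main obstacle to be the topological bookkeeping in step (a): certifying, uniformly over all admissible $\xi$, that the winding numbers of $\Gamma_i$ about the chosen vertices of $\Gamma_{i+1}$ are genuinely nonzero and zero respectively. This is precisely where the ``winds once'' property of large circuits and the smallness of $2+\varepsilon$ relative to $L$ are both indispensable, and it is the point the paper flags as resting on the Jordan curve theorem. By contrast, routing the connection between $\Gamma_i$ and $\Gamma_{i+1}$ through the crossing point $p$ --- rather than hoping the two circuits share a vertex --- is what makes the fusion in (a) go through cleanly, with the sharp bound $\|a_\ast-z_\ast\|\le 2+\varepsilon$ dropping out of two triangle inequalities.
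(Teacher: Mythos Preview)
Your proof is correct and follows the same approach as the paper: chain the large circuits along the discrete path, show consecutive circuits must have geometrically crossing edges, and use the triangle inequality through the crossing point to produce an edge of $G_\varepsilon(\xi)$ between them. The paper is terser---it simply asserts that consecutive circuits intersect because $\|t_i-t_{i+1}\|=1$ and then writes $\|x-z\|+\|y-w\|\le\|x-y\|+\|z-w\|\le 2(2+\varepsilon)$---whereas you spell out the winding-number justification (which is exactly the Jordan-type step the paper flags in its remark after the Main Theorem) and pick the closer endpoint on each edge to get $\|a_\ast-z_\ast\|\le 2+\varepsilon$ directly; both variants are equivalent.
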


\begin{proof}
Let $t_1, t_2, \ldots, t_k \in \tau_{\varepsilon, L}(\xi)$ be the sequence of integer points as in Definition~\ref{def:discrete_crossing}.

The definition of a large circuit can be naturally extended from the one for the box $Q_L$ to every translate $Q_L + t$ of this box. Namely,
$\xi$ has a large circuit in $Q_L + t$ if $\xi - t$ has a large circuit in $Q_L$. If $x_1 - t, x_2 - t, \ldots, x_m - t, x_1 - t$ ($x_i \in \xi$) is
a large circuit for $\xi - t$ in the box $Q_L$ then we call $x_1, x_2, \ldots, x_m, x_1$ a large circuit for $\xi$ in the box $Q_L + t$.

Correspondingly, by the definition~\eqref{eq:tau} of $\tau_{\varepsilon, L}(\xi)$, for every $i = 1, 2, \ldots, k$ the configuration $\xi$ has
a large circuit $\mathfrak c_i \subset G_{\varepsilon}(\xi)$ in each square box $Q_L + t_i$.

Since $\|t_i - t_{i + 1} \| = 1$, the circuits $\mathfrak c_i$ and $\mathfrak c_{i + 1}$ intersect in the following sense: one can choose an edge $(x, y)$ of $\mathfrak c_i$
and an edge $(z, w)$ of $\mathfrak c_{i + 1}$ such that the segments $[x, y]$ and $[z, w]$ have a common point. Thus
$$\| x - z \| + \| y - w \| \leq \| x - y \| + \| z - w \| \leq 2(2 + \varepsilon).$$
Therefore $\min ( \| x - z\| , \| y - w \| ) \leq 2 + \varepsilon$, and, consequently, $\mathfrak c_i$ and $\mathfrak c_{i + 1}$ belong to the same connected component of
$G_{\varepsilon}(\xi)$.

As a conclusion, $\mathfrak c_1$ and $\mathfrak c_m$ are in the same connected component of $G_{\varepsilon}(\xi)$. But $\mathfrak c_1$ has a vertex in $Q_{ML}$,
and $\mathfrak c_m$ has a vertex in $\mathbb R^2 \setminus Q_{NL}$. Hence the lemma follows.
\end{proof}

We are ready to finish the proof of the Main Theorem.

\begin{proof}[Proof of the Main Theorem]
Choose $L$ and $\lambda_0$ as in Proposition~\ref{prop:large_circuit_pois}. Then, by Propositions~\ref{prop:discrete_crossing}~and~\ref{prop:dc_to_ac}, one has
$$\Pr\left( \eta^{[\lambda]}(Q_{L'}, \zeta) \in \AnnCross(\varepsilon, ML, NL) \right) \geq 1 - C_1(C_2p)^{c_1M},$$
where $N = \lfloor L' / L \rfloor - 1$, $0 < M < N$. Since $NL > L' - 2L$, one concludes
$$\Pr\left( \eta^{[\lambda]}(Q_{L'}, \zeta) \in \AnnCross(\varepsilon, L_1, L' - 2L) \right) \geq 1 - C_1(C_2p)^{c_1\lfloor L_1 / L \rfloor}.$$
Hence the Main Theorem follows.
\end{proof}

\section{Corollaries for Gibbs distributions}\label{sec:gibbs}

In this section we prove some corollaries of the Main Theorem related to the notion of a Gibbs distribution. These results are direct counterparts of~\cite[Theorems 2, 3]{Ar}.

We start with a standard definition of a Gibbs distribution.

\begin{defn}
Let $\lambda > 0$. A random configuration $\eta \in \Omega(\mathbb R^2)$ is said to comply with a {\it Gibbs distribution} with intensity $\lambda$ if
every measurable function $f : \Omega(\mathbb R^2) \to [0, \infty)$ and every $L > 0$ satisfy
\begin{equation}\label{eq:def:1}
  \operatorname{E} \bigl[ f(\eta) \bigr] =  \operatorname{E} \biggl[ \operatorname{E} \bigl[ f(\eta^{[\lambda]}(Q_L, \eta)) \bigr] \biggr].
\end{equation}
\end{defn}

\begin{rem*}
The identity~\eqref{eq:def:1} is a close analogue to the one in Lemma~\ref{lem:finite_gibbs}, Poisson case. For this reason, a Gibbs distribution can be considered
as a generalization of a hard-disk model.
\end{rem*}

Gibbs measures are known to exist for every intensity $\lambda > 0$, while their uniqueness remains an open problem.

We are interested in a standard percolation question: if a random configuration $\eta$ is sampled from a Gibbs distribution with intensity $\lambda$,
is it true that with high probability the graph $G_{\varepsilon}(\eta)$ has an infinite connected component? The following proposition
provides a convenient notation for dealing with this question.

\begin{prop}
Let
\begin{equation}\label{eq:nested}
A(\varepsilon, M) = \bigcap\limits_{M' > M_0} \AnnCross(\varepsilon, M, M').
\end{equation}
Then for every $\xi \in A(\varepsilon, M)$ there exists a point $x \in \xi \cap Q_M$, which is a vertex of an infinite connected component of the graph $G_{\varepsilon}(\xi)$.
\end{prop}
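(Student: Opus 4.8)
The plan is an elementary argument combining the pigeonhole principle with the local finiteness of configurations; there is essentially no hard analytic content here, the proposition being a matter of unwinding the definition of $\AnnCross$.

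First I would fix $\xi \in A(\varepsilon, M)$ and record the one structural fact that will be used twice: since $\xi \in \Omega(\mathbb R^2)$, the points of $\xi$ are pairwise at distance $> 2$, so $\xi \cap B$ is finite for every bounded set $B$; in particular $\xi \cap Q_M$ is finite. Next I would unwind the definition of $A(\varepsilon, M)$. For every value of $M'$ occurring in the intersection~\eqref{eq:nested} — and these can be taken arbitrarily large — membership $\xi \in \AnnCross(\varepsilon, M, M')$ furnishes points $x_{M'} \in V_\varepsilon(\xi) \cap Q_M = \xi \cap Q_M$ and $y_{M'} \in V_\varepsilon(\xi) \setminus Q_{M'} = \xi \setminus Q_{M'}$ that lie in a common connected component of $G_\varepsilon(\xi)$.

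Then I would apply pigeonhole: the assignment $M' \mapsto x_{M'}$ maps an unbounded index set into the finite set $\xi \cap Q_M$, so some single point $x \in \xi \cap Q_M$ satisfies $x_{M'_j} = x$ along an increasing sequence $M'_1 < M'_2 < \dots$ with $M'_j \to \infty$. Let $\mathfrak c$ be the connected component of $G_\varepsilon(\xi)$ containing $x$. For each $j$, the point $y_{M'_j}$ is then a vertex of $\mathfrak c$ and lies outside $Q_{M'_j}$, so $\|y_{M'_j}\|_\infty \ge M'_j \to \infty$; hence $\vrt(\mathfrak c)$ is an unbounded subset of $\xi$.

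Finally I would conclude: an unbounded subset of $\xi$ is infinite by the local finiteness recorded at the start, so $\mathfrak c$ is an infinite connected component of $G_\varepsilon(\xi)$, and $x \in \xi \cap Q_M$ is one of its vertices, which is exactly the assertion. The only step that needs even a word of justification is the passage from ``$\mathfrak c$ has vertices of arbitrarily large norm'' to ``$\mathfrak c$ is infinite'', and this is immediate from the hard-core constraint defining $\Omega(\mathbb R^2)$; I do not anticipate any genuine obstacle in this proof.
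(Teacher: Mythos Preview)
Your proof is correct and essentially mirrors the paper's argument: both hinge on the finiteness of $\xi \cap Q_M$ coming from the hard-core constraint. The paper phrases it as a proof by contradiction --- if every component through a point of $\xi \cap Q_M$ were bounded, the maximum of those finitely many bounds would furnish an $\hat{M}'$ with $\xi \notin \AnnCross(\varepsilon, M, \hat{M}')$ --- whereas you argue directly via pigeonhole, but the content is identical.
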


\begin{proof}
We argue by contradiction. Assume $\xi \in A(\varepsilon, M)$ is a counterexample to the proposition. Then for every $x \in \xi \cap Q_M$ there exists $M'(x)$ such that
all vertices of the connected component of $x$ in $G_{\varepsilon}(\xi)$ belong to $Q_{M'(x)}$. The set $\xi \cap Q_M$ is finite, therefore the value
$$\Hat{M}' = \sup\limits_{x \in \xi \cap Q_M} M'(x)$$
is finite. But then $\xi \notin \AnnCross(\varepsilon, M, \Hat{M}')$, and hence $\xi \notin A(\varepsilon, M)$. A contradiction finishes the proof.
\end{proof}

Now we turn to the main results of this section, Theorems~\ref{thm:2} and~\ref{thm:3}.

\begin{thm}\label{thm:2}
Let $\varepsilon > 0$. Then there exist positive numbers $\lambda_0$, $c$ and $C$ such that the inequality
$$\Pr (\eta \in A(\varepsilon, M)) \geq 1 - C \exp(-cM)$$
holds for every $M > 0$, every $\lambda > \lambda_0$ and a random configuration $\eta$ sampled from any Gibbs distribution on $\Omega(\mathbb R^2)$ with intensity $\lambda$.
\end{thm}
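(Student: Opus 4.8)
The plan is to obtain Theorem~\ref{thm:2} directly from the Main Theorem by exploiting the DLR-type identity~\eqref{eq:def:1} that defines a Gibbs distribution. Fix $\varepsilon > 0$ and let $\lambda_0, c, C, L_0$ be the constants produced by the Main Theorem, all depending only on $\varepsilon$; I claim the same $\lambda_0, c, C$ work in Theorem~\ref{thm:2}. So fix $\lambda > \lambda_0$, a real number $M > 0$, and a random configuration $\eta$ sampled from any Gibbs distribution on $\Omega(\mathbb R^2)$ with intensity $\lambda$.

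First I would handle a single annulus. For an integer $k \geq 1$, apply~\eqref{eq:def:1} with $L = M + k + L_0 > 0$ and with $f = \mathbbm 1_{\AnnCross(\varepsilon, M, M + k)}$, a nonnegative measurable function on $\Omega(\mathbb R^2)$. This gives
\begin{equation*}
\Pr\bigl(\eta \in \AnnCross(\varepsilon, M, M + k)\bigr) = \operatorname{E}\Bigl[\Pr\bigl(\eta^{[\lambda]}(Q_{M + k + L_0}, \eta) \in \AnnCross(\varepsilon, M, M + k)\bigr)\Bigr],
\end{equation*}
where the inner probability is the deterministic function of the frozen exterior configuration. Conditionally on $\eta$, the random set $\eta^{[\lambda]}(Q_{M + k + L_0}, \eta)$ is exactly a two-dimensional Poisson hard-disk model of intensity $\lambda > \lambda_0$ with boundary conditions $\zeta = \eta \in \Omega(\mathbb R^2)$, and it depends on $\eta$ only through $\eta \setminus Q_{M+k+L_0}$. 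Hence the Main Theorem with $L_1 = M$, $L_2 = M + k$ (note $0 < L_1 < L_2$) bounds the inner probability below by $1 - C\exp(-cM)$, uniformly over all possible boundary conditions. Taking expectations, $\Pr(\eta \in \AnnCross(\varepsilon, M, M + k)) \geq 1 - C\exp(-cM)$ for every $k \geq 1$.

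It remains to pass to the intersection in~\eqref{eq:nested}. The events $\AnnCross(\varepsilon, M, M')$ are nonincreasing in $M'$: if some connected component of $G_\varepsilon$ contains a vertex in $Q_M$ and a vertex outside $Q_{M'}$, then that latter vertex also lies outside $Q_{M''}$ whenever $M'' \leq M'$. Therefore $A(\varepsilon, M)$ coincides with the countable decreasing intersection $\bigcap_{k \geq 1}\AnnCross(\varepsilon, M, M + k)$ (and if the lower index in~\eqref{eq:nested} is some fixed $M_0$ rather than $M$, the same monotonicity reduces the intersection to this tail, since $M + k$ eventually exceeds $\max(M, M_0)$). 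Continuity of probability from above now yields
\begin{equation*}
\Pr(\eta \in A(\varepsilon, M)) = \lim_{k \to \infty}\Pr\bigl(\eta \in \AnnCross(\varepsilon, M, M + k)\bigr) \geq 1 - C\exp(-cM),
\end{equation*}
as required. I do not expect a genuine obstacle here: the argument is a clean application of the Gibbs property to the Main Theorem. The only points needing (routine) care are the measurability of the annulus-crossing events, so that~\eqref{eq:def:1} may be applied to their indicators, and the monotonicity/continuity-from-above step that reconciles the definition of $A(\varepsilon, M)$ with a countable intersection.
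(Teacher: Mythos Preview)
Your argument is correct and follows essentially the same route as the paper: both proofs apply the DLR identity~\eqref{eq:def:1} to the indicator of a single annulus-crossing event, invoke the Main Theorem for the resulting Poisson hard-disk model in the box $Q_{M'+L_0}$ with the frozen exterior of $\eta$ as boundary conditions, and then pass to the intersection $A(\varepsilon,M)$ via the monotonicity of $\AnnCross(\varepsilon,M,M')$ in $M'$. Your reduction to a countable intersection and explicit use of continuity from above are slightly more careful than the paper's brief ``infimum over $M'>M$'' step, but the substance is the same.
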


\begin{proof}
Since the formula~\eqref{eq:nested} is a representation of $A(\varepsilon, M)$ as the intersection of nested families of configurations
(i.e., $\AnnCross(\varepsilon, M, M') \subseteq \AnnCross(\varepsilon, M, M'')$ whenever $M < M'' < M'$), it is sufficient to prove that
$$\Pr (\eta \in \AnnCross(\varepsilon, M, M')) \geq 1 - C \exp(-cM)$$
for every $M' > M$.

Consider the indicator function $f(\xi) = \mathbbm{1}_{\AnnCross(\varepsilon, M, M')}(\xi)$. Applying~\eqref{eq:def:1} to the function $f$ yields
\begin{multline*}
\Pr (\eta \in \AnnCross(\varepsilon, M, M')) = \operatorname{E} \bigl[ \mathbbm{1}_{\AnnCross(\varepsilon, M, M')}(\eta) \bigr] = \\
\operatorname{E} \biggl[ \operatorname{E} \bigl[ \mathbbm{1}_{\AnnCross(\varepsilon, M, M')}(\eta^{[\lambda]}(Q_{M' + L_1}, \eta)) \bigr] \biggr] = \\
\operatorname{E} \biggl[ \Pr \bigl[ \eta^{[\lambda]}(Q_{M' + L_1}, \eta) \in  \AnnCross(\varepsilon, M, M') \bigr] \geq 1 - C \exp(-cM),
\end{multline*}
where the last inequality immediately follows from the Main Theorem.

On the other hand, the inclusion $\AnnCross(\varepsilon, M, M') \subseteq \AnnCross(\varepsilon, M, M'')$ holds whenever $M < M'' < M'$.
Therefore~\eqref{eq:nested} implies
$$\Pr (\eta \in A(\varepsilon, M)) = \inf\limits_{M' > M} \Pr (\eta \in \AnnCross(\varepsilon, M, M')) \geq 1 - C \exp(-cM).$$
\end{proof}

\begin{thm}\label{thm:3}
Let $\varepsilon > 0$. Then there exists $\lambda_0 > 0$ such that the identity
$$\mathsf P(\text{$G_{\varepsilon}(\eta)$ has an infinite connected component}) = 1$$
holds for every $\lambda > \lambda_0$ and a random configuration $\eta$ sampled from any Gibbs distribution on $\Omega(\mathbb R^2)$ with intensity $\lambda$.
\end{thm}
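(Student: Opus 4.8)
The plan is to derive Theorem~\ref{thm:3} directly from Theorem~\ref{thm:2} together with the Proposition stated just before it; once those are in hand, essentially no further work is needed. First I would fix $\varepsilon > 0$ and take $\lambda_0, c, C > 0$ to be the constants supplied by Theorem~\ref{thm:2}. Then, for every $\lambda > \lambda_0$, every Gibbs distribution of intensity $\lambda$, and every $M > 0$, Theorem~\ref{thm:2} gives $\Pr(\eta \in A(\varepsilon, M)) \geq 1 - C\exp(-cM)$.

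Next I would observe that the events $A(\varepsilon, M)$ increase with $M$. Indeed, if $0 < M_1 < M_2$ then $Q_{M_1} \subseteq Q_{M_2}$, so any connected component of $G_{\varepsilon}(\xi)$ witnessing $\xi \in \AnnCross(\varepsilon, M_1, M')$ also witnesses $\xi \in \AnnCross(\varepsilon, M_2, M')$; hence $\AnnCross(\varepsilon, M_1, M') \subseteq \AnnCross(\varepsilon, M_2, M')$, and intersecting over all sufficiently large $M'$ yields $A(\varepsilon, M_1) \subseteq A(\varepsilon, M_2)$. Consequently
\[
\Pr\Bigl( \eta \in \bigcup_{M > 0} A(\varepsilon, M) \Bigr) = \lim_{M \to \infty} \Pr(\eta \in A(\varepsilon, M)) \geq \lim_{M \to \infty} \bigl( 1 - C\exp(-cM) \bigr) = 1.
\]

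Finally, the Proposition preceding Theorem~\ref{thm:2} shows that every $\xi \in A(\varepsilon, M)$ has a vertex in $\xi \cap Q_M$ lying in an infinite connected component of $G_{\varepsilon}(\xi)$; in particular $\bigcup_{M > 0} A(\varepsilon, M)$ is contained in the event that $G_{\varepsilon}(\eta)$ has an infinite connected component, which is measurable since it can be expressed in terms of the measurable sets $\AnnCross(\varepsilon, M, M')$ with $M, M'$ ranging over positive integers. Combining this inclusion with the previous display gives $\Pr(\text{$G_{\varepsilon}(\eta)$ has an infinite connected component}) = 1$, as desired. The hard part of the whole argument is entirely upstream --- in the Main Theorem and its transfer to Gibbs states in Theorem~\ref{thm:2} --- and here the only points requiring any care are the monotonicity of the family $\{A(\varepsilon, M)\}_{M > 0}$, so that the union over $M$ is a monotone limit, and the routine measurability remark above.
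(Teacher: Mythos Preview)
Your proof is correct and follows essentially the same route as the paper: use the monotonicity of the family $\{A(\varepsilon,M)\}_{M>0}$ together with the bound from Theorem~\ref{thm:2} to conclude that $\Pr\bigl(\eta\in\bigcup_{M>0}A(\varepsilon,M)\bigr)=1$, and then pass to the event of an infinite component via the Proposition. The only difference is that you spell out the monotonicity and measurability verifications explicitly, whereas the paper simply asserts $A(\varepsilon)=\bigcup_{M>0}A(\varepsilon,M)$ and takes the supremum.
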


\begin{proof}
Denote
$$A(\varepsilon) = \{ \xi \in \Omega(\mathbb R^2) : \text{$G_{\varepsilon}(\eta)$ has an infinite connected component}.$$
Then
$$A(\varepsilon) = \bigcup\limits_{M > 0} A(\varepsilon, M).$$
Since $A(\varepsilon, M) \subseteq A(\varepsilon, M')$ whenever $0 < M < M'$, we have
$$\Pr (\eta \in A(\varepsilon)) = \sup\limits_{M > 0} \Pr (\eta \in A(\varepsilon, M)) = 1,$$
where the last identity follows immediately from Theorem~\ref{thm:2}.
\end{proof}

\section*{Acknowledgements}
The author is thankful to A.~Sodin, R.~Peled and N.~Chandgotia for discussion that helped improving the earlier versions of the paper.

\end{document}